\begin{document}
\singlespacing

\title{\textsc{\Large{When Should We (Not) Interpret Linear IV Estimands as LATE?}}\thanks{First arXiv draft: November 13, 2020. This version: April 2, 2026. I am very grateful to the Editor, three anonymous referees, Martin Andresen, Isaiah Andrews, Carolina Caetano, Brant Callaway, Todd Elder, Brigham Frandsen, Xavier D'Haultfœuille, Molly Hawkins, Phillip Heiler, Stefan Hoderlein, Felix Holub, Martin Huber, Peter Hull, Shoya Ishimaru, Krzysztof Karbownik, Toru Kitagawa, Patrick Kline, Michael Knaus, Michal Kolesár, Giovanni Mellace, Anna Mikusheva, Áureo de Paula, Alex Poirier, Jonathan Roth, Pedro Sant'Anna, Vira Semenova, Jesse Shapiro, Gary Solon, Liyang Sun, Alex Torgovitsky, Edward Vytlacil, Chris Walters, Jeff Wooldridge, Kaspar Wüthrich, Yinchu Zhu, seminar participants at Binghamton University, Brandeis University, Duke University, Emory University, Goethe University Frankfurt, University of Chicago, and University of Warsaw, and participants at many conferences for helpful comments and discussions. I thank Megan Stevenson for sharing data and code, Tiemen Woutersen for sharing code, Anne Laski and Qihui Lei for excellent research assistance, as well as Danny Hellstein for assistance with the Brandeis HPC cluster. Qihui Lei has also coauthored the companion MATLAB, R, and Stata packages \texttt{fejiv}. Finally, I acknowledge financial support from the Theodore and Jane Norman Fund.}
}
\author{\textsc{\Large{Tymon Słoczyński}}\thanks{Brandeis University. Correspondence:~Department of Economics, Brandeis University, MS 021, 415 South Street, Waltham, MA 02453. E-mail:~tslocz@brandeis.edu.}
}
\date{}

\begin{titlepage}

\maketitle

\begin{abstract}
\noindent
In this paper I revisit the interpretation of the linear instrumental variables (IV) estimand as a weighted average of conditional local average treatment effects (LATEs). I~focus on a situation in which additional covariates are required for identification while the reduced-form and first-stage regressions may be misspecified due to an implicit homogeneity restriction on the effects of the instrument. I show that the weights on some conditional LATEs are negative and the IV estimand is no longer interpretable as a causal effect under a weaker version of monotonicity, \textit{i.e.}~when there are compliers but no defiers at some covariate values and defiers but no compliers elsewhere. The problem of negative weights disappears in the interacted specification of \cite{AI1995}, which avoids misspecification and seems to be underused in applied work. I illustrate my findings in an application to the causal effects of pretrial detention on case outcomes. In this setting, I reject the stronger version of monotonicity, demonstrate that the interacted instruments are sufficiently strong for consistent estimation using the jackknife methodology, and present several estimates that are economically and statistically different, depending on whether the interacted instruments are used.
\end{abstract}

\textbf{Keywords}: instrumental variables, local average treatment effects, model misspecification, monotonicity, negative weights, two-stage least squares \\

\textbf{JEL classification}: C21, C26, C52, K42

\thispagestyle{empty}

\end{titlepage}

\setcounter{page}{2}

\onehalfspacing

\section{Introduction}

Many instrumental variables are only valid after conditioning on additional covariates. The draft eligibility instrument in \cite{Angrist1990} requires controlling for the year of birth. The college proximity instrument in \cite{Card1995} is invalid without conditioning on several individual characteristics of workers \citep{Kitagawa2015}. Even in the case of randomized experiments with noncompliance, it is often necessary to control for covariates correlated with treatment probability, such as household size and survey wave in \cite{Finkelsteinetal2012}.

When conditioning on additional covariates is necessary for instrument validity, interpreting the linear instrumental variables (IV) and two-stage least squares (2SLS) estimands becomes complicated. \cite{AI1995} (hereafter, \citetalias{AI1995}) provide an influential interpretation of the 2SLS estimand in this context as a convex combination of conditional local average treatment effects (LATEs), \textit{i.e.}~average effects of treatment for individuals whose treatment status is affected by the instrument. However, this result is restricted to saturated models with discrete covariates and first-stage regressions that include a complete set of interactions between these covariates and the instrument. Such specifications are rare in empirical work, as is evident from several recent surveys of applications of IV methods.\footnote{\cite{BBMT2022,BBMT2025} consider a sample of 99 papers and find a single application of \citetalias{AI1995}'s specification. \cite{MTW2021} consider a sample of 122 papers and identify seven with specifications that include \textit{some} covariate interactions with a single instrument.} This makes \citetalias{AI1995}'s result inappropriate for interpreting the vast majority of IV estimates encountered in economic applications \citep[cf.][]{Abadie2003}.

In this paper I revisit the question of the causal interpretability of standard instrumental variables estimands. In particular, I focus on whether these estimands can be written as weighted averages of conditional LATEs with positive weights and, if so, whether these weights have an intuitive interpretation. To do so, I consider two variants of the usual monotonicity assumption: ``weak monotonicity,'' which postulates that at every covariate value, the instrument either does not discourage or does not encourage anyone to take treatment, and ``strong monotonicity,'' which additionally requires that the direction of this effect is uniform across covariate values.

My first contribution is to demonstrate that under weak monotonicity, the weights on some conditional LATEs may be negative in the usual application of IV, which restricts the first-stage effects of the instrument to be homogeneous. This finding implies that the resulting estimand is not a useful summary measure of average treatment effects; this parameter could be negative (positive) even if treatment effects are positive (negative) for everyone in the population. Under the same assumptions, all weights are necessarily positive in \citetalias{AI1995}'s interacted specification.

My second contribution is to explicitly compare the weights in both specifications with the ``desired'' weights, which recover the unconditional LATE parameter. Under strong monotonicity, when the weights in the usual application of IV and \citetalias{AI1995}'s specification are positive, both specifications overweight the effects in groups with large variances of the instrument, while the latter also overweights the effects in groups with strong first stages. It follows that the usual application of IV might be preferable when violations of strong monotonicity are not an issue.

However, if weak monotonicity is plausible but strong monotonicity is not, my theoretical results suggest that \citetalias{AI1995}'s interacted specification is preferable to the usual application of IV\@. Unfortunately, \citetalias{AI1995}'s specification is also difficult to estimate without bias; when the researcher divides the sample into many groups and subsequently creates an interacted instrument for each, 2SLS will be subject to the ``many instrument'' bias (see, \textit{e.g.}, \citealp{Bekker1994}). An alternative approach to estimating \citetalias{AI1995}'s specification, such as the fixed effect jackknife IV (FEJIV) estimator of \cite{CSW2023}, should be used instead. Another concern about specifications with many instruments is whether they are jointly strong enough to enable consistent estimation. In this context, I consider a recent pretest for weak identification developed by \cite{MS2022}. As an illustration, I perform an extensive simulation study. In these simulations, \cite{MS2022}'s pretest does a great job differentiating between cases where the best estimators of \citetalias{AI1995}'s specification, such as FEJIV, perform well and cases where all estimators perform badly.

To corroborate the concern about violations of strong monotonicity, I also replicate a sample of 988 instrumental variables regressions from 25 papers published in journals of the American Economic Association between 2006 and 2015. Every specification in my sample is based on a linear first-stage regression that restricts the effects of the instrument to be homogeneous. If strong monotonicity is violated but weak monotonicity is not, the homogeneous first stage will be misspecified and the conditional first stage will be positive for some covariate values but negative for others. First, I present strong suggestive evidence of the latter phenomenon, which directly translates to the incidence of negative weights in the usual application of IV\@. Then, I formally reject the null hypothesis of first-stage homogeneity in more than 70\% of specifications in an average paper, despite accounting for multiple hypothesis testing.

In addition, I illustrate my findings in an application to the causal effects of pretrial detention on case outcomes \citep{Stevenson2018}. Here, I consider several saturated specifications, which allows me to compare the estimates of \citetalias{AI1995}'s specification with the usual application of IV\@. I can also formally test whether the conditional first stage is positive for some covariate values and negative for others, and I conclusively reject the null hypothesis of sign homogeneity. The estimates based on \citetalias{AI1995}'s specification are smaller than in the usual application of IV, and the difference is often statistically significant. \cite{MS2022}'s pretest rejects in every case I consider, which supports the notion that the estimates based on \citetalias{AI1995}'s specification are preferable.

Finally, I supplement this paper with companion MATLAB, R, and Stata packages, \texttt{fejiv}, available at the MATLAB Central File Exchange, the Comprehensive R Archive Network (CRAN), and the Statistical Software Components (SSC) Archive, respectively.\footnote{To download the MATLAB package from the File Exchange, search for \texttt{fejiv} in the Add-On Explorer. To download the R package from CRAN, type \,\texttt{install.packages("fejiv")}\, in the R/RStudio console. To download the Stata package from SSC, type \,\texttt{ssc install fejiv}\, in the Command window.} These packages, based on the MATLAB code of \cite{CSW2023}, can be used to implement the FEJIV estimator in practice. Of these, the MATLAB package is most appropriate when working with large datasets. A Stata package to implement \cite{MS2022}'s pretest is also available from \cite{Sun2023}.

Two papers closely related to this are \cite{Kolesar2013} and \cite{BBMT2022,BBMT2025}. Like this paper, \cite{Kolesar2013} studies the interpretation of 2SLS estimands under weak monotonicity while also considering the probability limits of several jackknife-type estimators, limited information maximum likelihood (LIML), and other alternatives to 2SLS\@. \cite{Kolesar2013}'s main result on 2SLS is not particular to any specification but instead represents a generic two-step IV estimand as a weighted average of conditional LATEs. The resulting weights are positive, subject to an additional condition that needs to be verified on a case-by-case basis.\footnote{This condition essentially requires that the first stage postulated by the researcher provides a sufficiently good approximation to the true first stage \citep[cf.][]{HV2005}.} In contrast, this paper focuses specifically on the usual application of IV and \citetalias{AI1995}'s interacted specification. The benefit is that this allows me to considerably simplify the representation and obtain results that are more transparent and easy to interpret. This includes the novel result that in the usual application of IV, the weights on some conditional LATEs may be negative under weak monotonicity. In another contribution, released after this paper first circulated, \cite{BBMT2022,BBMT2025} focus on the consequences of misspecification of the model for the instrument propensity score that is implicit in IV and 2SLS estimation. In this paper I focus instead on violations of strong monotonicity and their implications.

The remainder of the paper is organized as follows. Section \ref{sec:framework} introduces my framework. Section \ref{sec:theory} provides my theoretical contributions, a review of the literature on many instruments, and a simulation study. Section \ref{sec:empirics} studies negative first stages and first-stage heterogeneity in a sample of recent applications of IV methods and illustrates my findings in an analysis of the causal effects of pretrial detention on case outcomes. Section \ref{sec:conclusion} concludes. The appendix contains my proofs as well as additional simulation and estimation results.

\section{Framework}
\label{sec:framework}

In this section I formally define the objects of interest, \textit{i.e.}~the conditional and unconditional IV and 2SLS estimands. I reserve the term ``2SLS'' for the appropriate estimand in a model with interacted instruments; see equation (\ref{eq:estimand_2sls}) below. When a single instrument is used instead, I use the term ``IV'' or ``linear IV''; see equation (\ref{eq:estimand_iv}). In what follows, I also review identification in the LATE framework with covariates \citep[cf.][]{Abadie2003}. Throughout the paper I assume that the appropriate moments exist whenever necessary.

\subsection{Notation and Estimands}
\label{sec:notation}

Suppose we are interested in the causal effects of a treatment, $D \in \{ 0,1 \}$, on an outcome, $Y = Y(D)$, where $Y(1)$ and $Y(0)$ are potential outcomes. An instrument, $Z \in \{ 0,1 \}$, is also available, and it determines which of the potential treatment states, $D(1)$ and $D(0)$, is observed, $D = D(Z)$. In principle, we could let $Y = Y(Z,D)$, but we will rule out direct effects of $Z$ on $Y$ below. Finally, let $X = \left( 1, X_1, \ldots, X_J \right)$ denote a row vector of covariates. In some cases I will allow for the possibility that additional instruments have been created by interacting $Z$ with all elements of $X$; then, $Z_{\mathrm{C}} = \left( Z, Z X_1, \ldots, Z X_J \right)$ will be used to denote the resulting row vector of instruments.

To provide motivation for what follows, let us consider the standard single-equation linear model for $Y$:
\begin{equation}
\label{eq:structural}
Y = D \beta + X \rho + \upsilon,
\end{equation}
where $X$ and the instrument(s) are assumed to be uncorrelated with the error term $\upsilon$. Also, $\beta$ is the coefficient of interest. In this paper I do not assume that equation (\ref{eq:structural}) is correctly specified; in particular, I allow the effect of $D$ on $Y$ to be correlated with both observables and unobservables.

In practice, however, many researchers act as if this model is correctly specified and use linear IV or 2SLS for estimation. In what follows, I will focus on the interpretation of the probability limits of the IV and 2SLS estimators of $\beta$ when equation (\ref{eq:structural}) is possibly misspecified. With a single instrument, the probability limit of linear IV or, simply, the (linear) IV estimand is
\begin{equation}
\label{eq:estimand_iv}
\beta_{\mathrm{IV}} = \left[ \left( \e \left[ Q^{\prime} W \right] \right) ^{-1} \e \left[ Q^{\prime} Y \right] \right] _1,
\end{equation}
where $W = \left( D, X \right)$, $Q = \left( Z, X \right)$, and $\left[ \cdot \right] _k$ denotes the $k$th element of the corresponding vector. Clearly, when a single instrument is available, equation (\ref{eq:estimand_iv}) characterizes the target of estimation in most empirical studies, which I also call the ``usual'' or ``standard'' estimand. This specification corresponds to reduced-form and first-stage regressions that project $Y$ and $D$ on $X$ and $Z$, excluding any interactions between $X$ and $Z$\@. Hence, I also refer to this specification as ``noninteracted.''

On the other hand, if a vector of interacted instruments, $Z_{\mathrm{C}}$, is used in 2SLS estimation of equation (\ref{eq:structural}), the relevant probability limit or, simply, the 2SLS estimand is
\begin{equation}
\label{eq:estimand_2sls}
\beta_{\mathrm{2SLS}} = \left[ \left( \e \left[ W^{\prime} Q_{\mathrm{C}} \right] \left( \e \left[ Q_{\mathrm{C}}^{\prime} Q_{\mathrm{C}} \right] \right) ^{-1} \e \left[ Q_{\mathrm{C}}^{\prime} W \right] \right) ^{-1} \e \left[ W^{\prime} Q_{\mathrm{C}} \right] \left( \e \left[ Q_{\mathrm{C}}^{\prime} Q_{\mathrm{C}} \right] \right) ^{-1} \e \left[ Q_{\mathrm{C}}^{\prime} Y \right] \right] _1,
\end{equation}
where $Q_{\mathrm{C}} = \left( Z_{\mathrm{C}}, X \right)$. In this specification, the corresponding reduced-form and first-stage regressions project $Y$ and $D$ on $X$ and $Z_{\mathrm{C}}$, which implies that the effects of $Z$ on $Y$ and $D$ are allowed to vary with $X$ due to the interactions between $X$ and $Z$\@. Thus, I also refer to this specification as ``interacted'' or ``fully interacted.''

Regardless of the implicit restrictions on the effects of the instrument, the true first stage can be written as
\begin{equation}
\label{eq:fs}
\e \left[ D \mid X, Z \right] = \psi(X) + \omega(X) \cdot Z,
\end{equation}
where
\begin{equation}
\label{eq:cfs}
\omega(x) = \e \left[ D \mid Z=1, X=x \right] - \e \left[ D \mid Z=0, X=x \right]
\end{equation}
is the conditional first-stage slope coefficient or, equivalently, the coefficient on $Z$ in the regression of $D$ on $1$ and $Z$ in the subpopulation with $X=x$. Similarly, the conditional IV (or Wald) estimand can be written as
\begin{equation}
\label{eq:wald}
\beta(x) = \frac{\e \left[ Y \mid Z=1, X=x \right] - \e \left[ Y \mid Z=0, X=x \right]}{\e \left[ D \mid Z=1, X=x \right] - \e \left[ D \mid Z=0, X=x \right]}.
\end{equation}
This parameter is equivalent to the coefficient on $D$ in the IV regression of $Y$ on $1$ and $D$ in the subpopulation with $X=x$, with $Z$ as the instrument for $D$\@.

\subsection{Local Average Treatment Effects}
\label{sec:late}

In the LATE framework of \cite{IA1994} and \cite{AIR1996}, the population consists of four latent groups: always-takers, for whom $D(1) = D(0) = 1$; never-takers, for whom $D(1) = D(0) = 0$; compliers, for whom $D(1) = 1$ and $D(0) = 0$; and defiers, for whom $D(1) = 0$ and $D(0) = 1$. As demonstrated by \cite{IA1994}, if, among other things, we rule out the existence of defiers and assume that $X$ is orthogonal to $Z$, the unconditional IV estimand, $\beta_{\mathrm{IV}} = \frac{\e \left[ Y \mid Z=1 \right] - \e \left[ Y \mid Z=0 \right]}{\e \left[ D \mid Z=1 \right] - \e \left[ D \mid Z=0 \right]}$, recovers the average treatment effect for compliers, also referred to as the local average treatment effect (LATE)\@.

Some of my results will allow for the existence of both compliers and defiers, and hence throughout this paper I instead follow \cite{Kolesar2013} in defining the LATE as
\begin{equation}
\label{eq:late1}
\tau_{\mathrm{LATE}} = \e \left[ Y(1) - Y(0) \mid D(1) \neq D(0) \right],
\end{equation}
\textit{i.e.}~the average treatment effect for individuals whose treatment status is affected by the instrument. This group includes both compliers and defiers; it will be restricted to compliers whenever the existence of defiers is ruled out. It is useful to note that this unconditional LATE parameter can also be written as
\begin{equation}
\label{eq:desired}
\tau_{\mathrm{LATE}} = \frac{\e \left[ \pi(X) \cdot \tau(X) \right]}{\e \left[ \pi(X) \right]},
\end{equation}
where
\begin{equation}
\tau(x) = \e \left[ Y(1) - Y(0) \mid D(1) \neq D(0), X=x \right]
\end{equation}
is the conditional LATE and
\begin{equation}
\pi(x) = \pr \left[ D(1) \neq D(0) \mid X=x \right]
\end{equation}
is the conditional proportion of compliers and defiers. The following assumption, together with additional assumptions below, will be used to identify $\tau(x)$ and $\pi(x)$, and thereby also $\tau_{\mathrm{LATE}}$\@. Recall that $Y = Y(D)$ when direct effects of $Z$ on $Y$ are ruled out and $Y = Y(Z,D)$ otherwise.

\begin{uassumption}{IV}
\label{ass:iv}
\begin{enumerate}
\item[\phantom{}]
\item[\textbf{(i)}] (Conditional independence) \; $\big( Y(0,0), Y(0,1), Y(1,0), Y(1,1), D(0), D(1) \big) \perp Z \mid X$;
\item[\textbf{(ii)}] (Exclusion restriction) \; $\pr \left[ Y(1,d)=Y(0,d) \mid X \right] = 1$ for $d \in \{ 0,1 \}$ a.s.;
\item[\textbf{(iii)}] (Relevance) \; $0 < \pr \left[ Z=1 \mid X \right] < 1$ and $\pr \left[ D(1)=1 \mid X \right] \neq \pr \left[ D(0)=1 \mid X \right]$ a.s.
\end{enumerate}
\end{uassumption}

\noindent
Assumption \ref{ass:iv} is standard but not sufficient to identify $\tau(x)$ and $\pi(x)$. It is also necessary to restrict the existence of defiers \citep{IA1994}. The following assumption, due to \cite{Abadie2003}, rules out the existence of defiers at any value of covariates.

\begin{uassumption}{SM}[Strong monotonicity]
\label{ass:um}
$\pr \left[ D(1) \geq D(0) \mid X \right] = 1$ a.s.
\end{uassumption}

\noindent
In many applications, Assumption \ref{ass:um} may be too restrictive \citep[cf.][]{deChaisemartin2017,DHM2023}. A testable implication of Assumption \ref{ass:um} is that $\omega(x)$, the conditional first-stage slope coefficient, is always non-negative. If this is formally rejected or otherwise implausible, an alternative assumption is necessary to obtain point identification. One possibility is to restrict treatment effect heterogeneity, as discussed by \cite{HV2005} and \cite{MT2018}, in which case we will be able to identify the average treatment effect rather than the unconditional LATE parameter. Another possibility is to replace Assumption \ref{ass:um} with a weaker assumption that postulates the existence of compliers but no defiers at some covariate values and the existence of defiers but no compliers elsewhere. While the relative appeal of these two assumptions is context dependent, I will focus on the latter in what follows.

\begin{uassumption}{WM}[Weak monotonicity]
\label{ass:cm}
There exists a subset of the support of $X$ such that $\pr \left[ D(1) \geq D(0) \mid X \right] = 1$ on it and $\pr \left[ D(1) \leq D(0) \mid X \right] = 1$ on its complement.
\end{uassumption}

\noindent
To understand the difference between Assumptions \ref{ass:um} and \ref{ass:cm}, consider a recent paper by \cite{DHMMR2019}, who estimate the health effects of air pollution using an instrument based on changes in local wind direction. Imagine a pollution source located to the east of a particular city. When the wind also blows from the east, the city will experience relatively high levels of pollution; the opposite is true when the wind blows from the west. Assumption \ref{ass:um} would require that every city reacts to a specific wind direction (say, east) in the same way (say, high pollution). This, however, is known not to be true. \cite{DHMMR2019} explain, for example, that air pollution is relatively high in San Francisco when the wind blows from the southeast, while the same is true in Boston when the wind blows from the southwest. Indeed, Assumption \ref{ass:cm} would allow for the possibility that different locations react to a specific wind direction in different ways.\footnote{If we knew the pollution-inducing wind direction for every location, as we do in the case of Boston and San Francisco, Assumption \ref{ass:um} might remain plausible for an appropriately redefined instrument. However, if this direction needs to be estimated, as is likely the case in practice, Assumption \ref{ass:cm} will be more appropriate.}

Importantly, Assumption \ref{ass:cm}, together with Assumption \ref{ass:iv}, is sufficient to identify $\tau(x)$ and $\pi(x)$. Before stating the relevant lemma, it is useful to define an auxiliary function
\begin{equation}
c(x) = \sgn \Big( \pr \left[ D(1) \geq D(0) \mid X=x \right] - \pr \left[ D(1) \leq D(0) \mid X=x \right] \Big),
\end{equation}
where $\sgn(\cdot)$ is the sign function. Clearly, $c(x)$ equals 1 if there are only compliers at $X=x$ and $-1$ if there are only defiers at $X=x$.

The following lemma summarizes identification of the conditional LATE parameter and the conditional proportion of individuals whose treatment status is affected by the instrument.

\begin{lemma}
\label{lem:identify}
\begin{enumerate}
\item[\phantom{}]
\item[\textbf{(i)}] Under Assumptions \ref{ass:iv} and \ref{ass:um}, $\tau(x) = \beta(x)$ and $\pi(x) = \omega(x)$.
\item[\textbf{(ii)}] Under Assumptions \ref{ass:iv} and \ref{ass:cm}, $\tau(x) = \beta(x)$ and $\pi(x) = \left\lvert \omega(x) \right\rvert = c(x) \cdot \omega(x)$.
\end{enumerate}
\end{lemma}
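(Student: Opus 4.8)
The plan is to prove both parts by a pointwise argument in $x$ — essentially the conditional-on-$X$ version of the classic LATE identification result of \cite{IA1994} and \cite{Abadie2003}, extended to accommodate Assumption \ref{ass:cm} as in \cite{Kolesar2013}. Since part (i) is exactly the special case of part (ii) in which the ``defier'' cell of the partition is empty, I would prove (ii) and then read off (i).

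Fix $x$ in the support of $X$. First I would use Assumption \ref{ass:iv}(ii) to collapse the four-argument potential outcomes to the usual pair, writing $Y(d) := Y(0,d) = Y(1,d)$ (with equality holding $X$-a.s.), so that $Y = Y(D(Z))$ and, on the event $\{Z=z\}$, $Y = Y(D(z))$ a.s. Then Assumption \ref{ass:iv}(i) makes the conditioning on $Z$ inside $X=x$ innocuous for any function of the latent variables, yielding $\e[Y \mid Z=z, X=x] = \e[Y(D(z)) \mid X=x]$ and $\e[D \mid Z=z, X=x] = \e[D(z) \mid X=x]$ for $z\in\{0,1\}$. Differencing, the denominator of $\beta(x)$ in (\ref{eq:wald}) equals $\e[D(1)-D(0)\mid X=x]$ and its numerator equals $\e[Y(D(1))-Y(D(0))\mid X=x]$.

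Next I would split each expectation over the four latent groups. Always-takers and never-takers contribute zero to both, since $D(1)=D(0)$ there; compliers contribute $D(1)-D(0)=1$ and $Y(D(1))-Y(D(0))=Y(1)-Y(0)$; defiers contribute $D(1)-D(0)=-1$ and $Y(D(1))-Y(D(0))=-(Y(1)-Y(0))$. Now invoke the partition from Assumption \ref{ass:cm}. If $x$ falls in the subset with $\pr[D(1)\ge D(0)\mid X=x]=1$ there are no defiers, so the denominator is $\pr[D(1)>D(0)\mid X=x]=\pi(x)=\omega(x)$ and the numerator is $\e[Y(1)-Y(0)\mid D(1)>D(0),X=x]\cdot\pr[D(1)>D(0)\mid X=x]=\tau(x)\,\pi(x)$; if $x$ falls in the complement there are no compliers, so the denominator is $-\pr[D(1)<D(0)\mid X=x]=-\pi(x)=\omega(x)$ and the numerator is $-\tau(x)\,\pi(x)$. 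In both cases $\pi(x)=\lvert\omega(x)\rvert=c(x)\,\omega(x)$, Assumption \ref{ass:iv}(iii) guarantees $\omega(x)\neq 0$ and $\pi(x)>0$ so the ratio is well defined, and the common factor $\pi(x)$ cancels to give $\beta(x)=\tau(x)$. Taking the defier cell to be empty forces $c(x)\equiv 1$ and $\omega(x)=\pi(x)\ge 0$, which is part (i).

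The argument is really bookkeeping over latent types, so I do not expect a substantive obstacle. The only points needing care are (a) keeping track of the ``a.s.'' qualifiers — Assumptions \ref{ass:iv}(ii), \ref{ass:um}, and \ref{ass:cm} hold only almost surely, so the displayed identities hold for $X$-almost every $x$, which is all that is needed for the later aggregation via (\ref{eq:desired}) — and (b) noting that relevance in Assumption \ref{ass:iv}(iii) is precisely what rules out a vanishing denominator, both in $\beta(x)$ here and when averaging $\tau(x)$ against $\pi(x)$ downstream.
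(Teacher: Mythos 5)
Your argument is correct and is precisely the standard conditional-on-$X$ LATE identification argument that the paper has in mind: it states Lemma \ref{lem:identify} without proof, citing \cite{AIR1996} and \cite{AP2009}, and your type-decomposition with the weak-monotonicity partition is the straightforward extension alluded to there. The bookkeeping (exclusion to collapse $Y(z,d)$, conditional independence to drop $Z$, cancellation of $\pi(x)$ justified by relevance, and the a.s.\ caveats) is all handled correctly.
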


\noindent
Lemma \ref{lem:identify} consists of well-known results and straightforward extensions of these results, and as such it is stated without proof \citep[cf.][]{AIR1996,AP2009}. Note that strong monotonicity implies weak monotonicity, which means that every statement that is true under weak monotonicity is also true under strong monotonicity as a special case. I will follow this logic in the statement of the theoretical results below.

\section{Negative Weights in Linear IV}
\label{sec:theory}

\subsection{\cite{AI1995}, Revisited}
\label{sec:result1}

Let us begin by revisiting \citetalias{AI1995}'s representation of the 2SLS estimand. Recall that \citetalias{AI1995} study a special case of the model in equation (\ref{eq:structural}) where all covariates are binary and represent membership in disjoint groups or strata. In this case, each of the original covariates needs to be discrete or discretized, which means that the population can be divided into $K$ groups, where $K$ denotes the number of possible combinations of values of these variables. (For example, with six binary variables, we have $K = 2^6 = 64$.) Let $G \in \{ 1, \ldots, K \}$ denote group membership and $G_k = 1[G = k]$ denote the resulting group indicators. \citetalias{AI1995} consider a model where original covariates are replaced with these group indicators, $X = \left( 1, G_1, \ldots, G_{K-1} \right)$, while reduced-form and first-stage regressions include a full set of interactions between $X$ and $Z$; that is, $Z_{\mathrm{C}} = \left( Z, Z G_1, \ldots, Z G_{K-1} \right)$. The following lemma restates \citetalias{AI1995}'s and \cite{Kolesar2013}'s interpretation of the 2SLS estimand in this context.

\begin{lemma}[\citealp{AI1995,Kolesar2013}]
\label{lem:ai1995}
Suppose that $X = \left( 1, G_1, \ldots, G_{K-1} \right)$ and $Z_{\mathrm{C}} = \left( Z, Z G_1, \ldots, Z G_{K-1} \right)$. Suppose further that Assumptions \ref{ass:iv} and \ref{ass:cm} hold. Then
\begin{equation*}
\beta_{\mathrm{2SLS}} = \frac{\e \left[ \sigma^2(X) \cdot \tau(X) \right]}{\e \left[ \sigma^2(X) \right]},
\end{equation*}
where $\sigma^2(X) = \var \left[ \e \left[ D \mid X, Z \right] \mid X \right] = \e \left[ \left( \e \left[ D \mid X, Z \right] - \e \left[ D \mid X \right] \right) ^2 \mid X \right]$.
\end{lemma}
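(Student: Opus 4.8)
The plan is to exploit the saturation built into \cite{AI1995}'s specification to collapse the 2SLS estimand to a one-dimensional ratio, and then to read off the numerator and the denominator using the true first stage (\ref{eq:fs}) and Lemma \ref{lem:identify}.

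First I would note that, with $X=(1,G_1,\dots,G_{K-1})$ and $Z_{\mathrm C}=(Z,ZG_1,\dots,ZG_{K-1})$, the components of $Q_{\mathrm C}=(Z_{\mathrm C},X)$ span exactly the same linear subspace of $L^{2}$ as the full collection of cell indicators $\{1[G=k,Z=z]\}$. Hence the linear projection of $D$ on $Q_{\mathrm C}$ is \emph{exactly} $\widehat D:=\e[D\mid X,Z]$, and the linear projection of any square-integrable variable on $X$ is its conditional expectation given $G$ (equivalently, given $X$). Since the exogenous block $X$ is contained in $Q_{\mathrm C}$ and therefore projects onto itself, formula (\ref{eq:estimand_2sls}) reduces to the population OLS coefficient on $\widehat D$ in the regression of $Y$ on $(\widehat D,X)$. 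Applying the Frisch--Waugh--Lovell theorem, this equals $\beta_{\mathrm{2SLS}}=\e[\widetilde D\,Y]\big/\e[\widetilde D^{2}]$, where $\widetilde D$ is the residual from linearly projecting $\widehat D$ on $X$, which by saturation is $\widetilde D=\e[D\mid X,Z]-\e[D\mid X]$.

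Then I would evaluate the two moments. Writing $p(X):=\pr[Z=1\mid X]$, equation (\ref{eq:fs}) gives $\widetilde D=\omega(X)\,(Z-p(X))$, so $\e[\widetilde D\mid X]=0$; in particular $\widetilde D$ is orthogonal to every function of $X$, and I may subtract $\e[Y\mid X]$ from $Y$ in the numerator at no cost. Conditioning on $X$ and using $Z\in\{0,1\}$, the denominator is $\e[\widetilde D^{2}]=\e\big[\omega(X)^{2}p(X)(1-p(X))\big]=\e[\sigma^{2}(X)]$, since $\sigma^{2}(X)=\e\big[(\widehat D-\e[D\mid X])^{2}\mid X\big]=\omega(X)^{2}p(X)(1-p(X))$, which relevance (Assumption \ref{ass:iv}(iii)) keeps strictly positive. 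The numerator is $\e[\widetilde D\,Y]=\e\big[\omega(X)\,p(X)(1-p(X))\,\big(\e[Y\mid Z=1,X]-\e[Y\mid Z=0,X]\big)\big]$; recognizing from (\ref{eq:cfs}) that $\omega(X)=\e[D\mid Z=1,X]-\e[D\mid Z=0,X]$, the factor $\omega(X)$ together with the difference in conditional outcome means reconstitutes $\omega(X)^{2}\beta(X)$, with $\beta(X)$ the conditional Wald estimand (\ref{eq:wald}), so the numerator equals $\e[\sigma^{2}(X)\,\beta(X)]$. Finally, Lemma \ref{lem:identify}(ii) gives $\beta(X)=\tau(X)$ under Assumptions \ref{ass:iv} and \ref{ass:cm}, and the ratio becomes the asserted $\e[\sigma^{2}(X)\tau(X)]/\e[\sigma^{2}(X)]$.

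The algebra here is entirely routine; the two points I would be careful about are (i) the \emph{exactness} of the two linear projections, which is where discreteness of the covariates and the full interaction of the group indicators with $Z$ are essential --- absent this structure $\widehat D$ would only be a linear approximation to $\e[D\mid X,Z]$ and the clean representation fails --- and (ii) the fact that the argument never invokes any sign restriction on $\omega(X)$: the weights $\sigma^{2}(X)=\omega(X)^{2}p(X)(1-p(X))$ are nonnegative and integrate to one automatically, so the only content extracted from monotonicity is the identification $\beta(X)=\tau(X)$ delivered by Lemma \ref{lem:identify}(ii). I would emphasize this last point, since it is precisely the property of \cite{AI1995}'s specification --- a guaranteed convex combination even under weak monotonicity --- that the remainder of the section contrasts with the just identified case.
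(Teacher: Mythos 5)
Your proposal is correct. Note that the paper itself states this lemma \emph{without proof}, attributing it to \cite{AI1995} and \cite{Kolesar2013}, so there is no in-paper argument to compare against; what you have written is a valid self-contained derivation of the cited result. Your route --- saturation makes the projection of $D$ on $Q_{\mathrm{C}}$ exactly $\e[D\mid X,Z]$, Frisch--Waugh--Lovell collapses $\beta_{\mathrm{2SLS}}$ to $\e[\widetilde{D}Y]/\e[\widetilde{D}^2]$ with $\widetilde{D}=\omega(X)(Z-e(X))$, and the two moments evaluate to $\e[\sigma^2(X)\beta(X)]$ and $\e[\sigma^2(X)]$ --- is the standard one, and it is fully consistent with the machinery the paper deploys nearby: the identity $\sigma^2(X)=[\omega(X)]^2\cdot\var[Z\mid X]$ that you establish is exactly the computation in the paper's proof of Theorem \ref{the:ai1995}, and your FWL step parallels the use of Lemma \ref{lem:angrist1998} in the proof of Theorem \ref{the:justid_cm}. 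Your closing observation --- that the squared weights are nonnegative by construction and weak monotonicity enters only through Lemma \ref{lem:identify}(ii) to convert $\beta(X)$ into $\tau(X)$ --- is precisely the point the paper emphasizes in contrasting this specification with the just identified case.
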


\noindent
Lemma \ref{lem:ai1995} establishes that the 2SLS estimand in \citetalias{AI1995}'s interacted specification is a convex combination of conditional LATEs, with weights equal to the conditional variance of the first stage. This result is due to \citetalias{AI1995} and has usually been interpreted as requiring that the existence of defiers is completely ruled out \citep[\textit{e.g.},][]{AP2009}. \cite{Kolesar2013} demonstrates that it also holds under weak monotonicity.

It may not be immediately obvious how the 2SLS weights in Lemma \ref{lem:ai1995} differ from the ``desired'' weights in equation (\ref{eq:desired}). The following result facilitates this comparison.

\begin{theorem}
\label{the:ai1995}
Suppose that $X = \left( 1, G_1, \ldots, G_{K-1} \right)$ and $Z_{\mathrm{C}} = \left( Z, Z G_1, \ldots, Z G_{K-1} \right)$. Suppose further that Assumptions \ref{ass:iv} and \ref{ass:cm} hold. Then
\begin{equation*}
\beta_{\mathrm{2SLS}} = \frac{\e \left[ \left[ \pi(X) \right] ^2 \cdot \var \left[ Z \mid X \right] \cdot \tau(X) \right]}{\e \left[ \left[ \pi(X) \right] ^2 \cdot \var \left[ Z \mid X \right] \right]}.
\end{equation*}
\end{theorem}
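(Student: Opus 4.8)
The plan is to obtain Theorem~\ref{the:ai1995} as a direct reformulation of Lemma~\ref{lem:ai1995}: since that lemma already gives $\beta_{\mathrm{2SLS}}=\e[\sigma^2(X)\,\tau(X)]/\e[\sigma^2(X)]$, it suffices to show that the weight $\sigma^2(X)=\e\big[(\e[D\mid X,Z]-\e[D\mid X])^2\mid X\big]$ equals $[\pi(X)]^2\cdot\var[Z\mid X]$ almost surely. First I would use the identity~(\ref{eq:fs}), which holds with no loss of generality because $Z$ is binary: $\e[D\mid X,Z]=\psi(X)+\omega(X)\cdot Z$, and therefore, taking the conditional expectation over $Z$ given $X$, $\e[D\mid X]=\psi(X)+\omega(X)\cdot\e[Z\mid X]$. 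Subtracting the two displays gives $\e[D\mid X,Z]-\e[D\mid X]=\omega(X)\cdot(Z-\e[Z\mid X])$, and since $\omega(X)$ is measurable with respect to $X$ it can be pulled out of the conditional expectation, so that $\sigma^2(X)=[\omega(X)]^2\cdot\var[Z\mid X]$.

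The second step is to pass from $[\omega(X)]^2$ to $[\pi(X)]^2$. This is exactly the observation recorded at the end of Section~\ref{sec:late}: under Assumptions~\ref{ass:iv} and~\ref{ass:cm}, Lemma~\ref{lem:identify}(ii) gives $\pi(x)=\lvert\omega(x)\rvert=c(x)\,\omega(x)$ for almost every $x$, hence $[\pi(x)]^2=[\omega(x)]^2$ a.s. Substituting $\sigma^2(X)=[\pi(X)]^2\cdot\var[Z\mid X]$ into both the numerator and the denominator of the ratio from Lemma~\ref{lem:ai1995} delivers the claimed formula. I would also remark briefly that the denominator is strictly positive --- $\var[Z\mid X]>0$ a.s.\ by the first part of Assumption~\ref{ass:iv}(iii), and $\pi(X)\neq 0$ a.s.\ by its relevance part --- so the weights are a.s.\ positive and the expression is a genuine convex combination of the conditional LATEs, consistent with Lemma~\ref{lem:ai1995}.

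There is no real obstacle here; the content of the theorem is essentially bookkeeping on top of Lemma~\ref{lem:ai1995}. The one point that deserves care is the status of~(\ref{eq:fs}): it must be invoked as an exact identity (valid because $Z$ takes only two values, and because in the \cite{AI1995} specification $X$ spans the full set of group indicators, so the fitted first stage reproduces $\e[D\mid X,Z]$ within each stratum) rather than as a modeling assumption --- otherwise $\sigma^2(X)$ would not coincide with the conditional variance of the true first stage and the identification of $\tau(X)$ via Lemma~\ref{lem:identify} could not be applied. Everything else is elementary algebra.
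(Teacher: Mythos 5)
Your proposal is correct and follows essentially the same route as the paper's own proof: start from Lemma \ref{lem:ai1995}, use equation (\ref{eq:fs}) and iterated expectations to show $\sigma^2(X) = \left[ \omega(X) \right]^2 \cdot \var\left[ Z \mid X \right]$, and then invoke Lemma \ref{lem:identify} to replace $\left[ \omega(X) \right]^2$ with $\left[ \pi(X) \right]^2$ under weak monotonicity. Your added remarks on the exactness of (\ref{eq:fs}) for binary $Z$ and the positivity of the denominator are fine but not needed beyond what the paper records.
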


\noindent
Theorem \ref{the:ai1995} shows that the 2SLS estimand in \citetalias{AI1995}'s interacted specification is a convex combination of conditional LATEs, with weights equal to the product of the squared conditional proportion of compliers or defiers and the conditional variance of $Z$\@.\footnote{See also \cite{Walters2018} for a related remark that focuses on ``descriptive'' estimands and does not use the LATE framework for interpretation.} Since the ``desired'' weights in equation (\ref{eq:desired}) consist only of the conditional proportion of compliers or defiers,  \citetalias{AI1995}'s specification overweights the effects in groups with strong first stages and with large variances of $Z$\@. Importantly, this result does not require strong monotonicity; weak monotonicity is sufficient.

\begin{remark}
\label{rem:lackofoverid}
Although Lemma \ref{lem:ai1995} and Theorem \ref{the:ai1995} show that \citetalias{AI1995}'s specification can avoid negative weights, practitioners rarely use multiple interacted instruments. In a survey of recent applications of IV methods, \cite{BBMT2022,BBMT2025} determine that only 1 out of 99 applicable papers has used \citetalias{AI1995}'s specification. Specifications with many interactions between the instrument(s) and covariates were more common in earlier work using IV methods \citep[\textit{e.g.},][]{Angrist1990,AK1991} but have since become rare, likely out of concern for the many instrument bias.\footnote{Indeed, \cite{BJB1995} write that their results ``indicate that \textit{the common practice} of adding interaction terms as excluded instruments may exacerbate the problem'' (emphasis mine). On the other hand, some recent applications of the wind instrument \citep{DHMMR2019,BRS2020} and the ``judges design'' \citep{AD2015,MS2015,Stevenson2018} interact the instrument with selected covariates, which is similar in spirit to \citetalias{AI1995}'s specification. However, quantitatively speaking, this is still very rare in practice: in a sample of 122 papers considered by \cite{MTW2021}, only seven include specifications with some covariate interactions with a baseline instrument.}
\end{remark}

\subsection{Usual Application of IV}
\label{sec:result2}

Remark \ref{rem:lackofoverid} suggests that Theorem \ref{the:ai1995} cannot be used directly to interpret most empirical studies because modern applications of IV methods avoid using many interacted instruments. A similar point is made by \citet[][p.~178]{AP2009}, who maintain, however, that an indirect argument in \cite{Abadie2003} implies that ``some kind of covariate-averaged LATE'' is estimated in noninteracted specifications as well. In what follows, I show that \cite{AP2009}'s assertion would be \textit{false} under weak monotonicity. The claim is true under strong monotonicity, which I will be able to demonstrate directly, deriving the exact form of ``covariate-averaged LATE'' that linear IV estimates. I also revisit \cite{Abadie2003}'s indirect argument later on.

To save space, I combine two extensions of \citetalias{AI1995}'s analysis in what follows. On the one hand, I am interested in the interpretation of the IV estimand when we retain \citetalias{AI1995}'s restriction that the model for covariates is saturated but no longer use the interacted instruments. This analysis does not require any additional assumptions. On the other hand, I am also interested in the interpretation of the IV estimand in nonsaturated specifications. This analysis proceeds under the assumption that the instrument propensity score, defined as
\begin{equation}
e(X) = \e \left[ Z \mid X \right],
\end{equation}
is linear in $X$\@. This assumption is standard and has been used by \cite{Kolesar2013}, \cite{LM2015}, \cite{EK2019}, and \cite{Ishimaru2024}, among others.

\begin{uassumption}{PS}[Instrument propensity score]
\label{ass:ps}
$e(X) = X \alpha$.
\end{uassumption}

\noindent
Assumption \ref{ass:ps} holds automatically when $Z$ is randomized, and also when all covariates are discrete and the model for covariates is saturated. (This is why the statement of the theoretical results below only invokes Assumption \ref{ass:ps} and does not separately mention saturated specifications.) Assumption \ref{ass:ps} may also provide a good approximation to $e(X)$ in other situations, especially when $X$ includes powers and cross-products of the original covariates. This assumption is critical. \cite{BBMT2022,BBMT2025} determine that Assumption \ref{ass:ps} is necessary for the IV and 2SLS estimands to maintain their interpretation as a convex combination of conditional LATEs.

Let us first consider the case of weak monotonicity. The following result shows that the interpretation of the linear IV estimand is very unappealing in this context.

\begin{theorem}
\label{the:justid_cm}
Suppose that Assumptions \ref{ass:iv}, \ref{ass:cm}, and \ref{ass:ps} hold. Then
\begin{equation*}
\beta_{\mathrm{IV}} = \frac{\e \left[ c(X) \cdot \pi(X) \cdot \var \left[ Z \mid X \right] \cdot \tau(X) \right]}{\e \left[ c(X) \cdot \pi(X) \cdot \var \left[ Z \mid X \right] \right]}.
\end{equation*}
\end{theorem}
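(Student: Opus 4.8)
The plan is to reduce $\beta_{\mathrm{IV}}$ to a ratio of two scalar moments by partialling out $X$ (the population version of the Frisch--Waugh--Lovell theorem), to evaluate each of these moments by iterated expectations conditioning first on $(X,Z)$, and finally to translate the resulting first-stage and reduced-form objects into $\tau(X)$, $\pi(X)$, and $c(X)$ via Lemma \ref{lem:identify}.

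First I would establish that
\[
\beta_{\mathrm{IV}} = \frac{\e\!\left[ \tilde Z \, Y \right]}{\e\!\left[ \tilde Z \, D \right]},
\]
where $\tilde Z$ is the residual from the population linear projection of $Z$ on $X$. This follows by writing out the moment conditions $\e\!\left[ Q^{\prime}(Y - W\theta) \right] = 0$ that define $\beta_{\mathrm{IV}}$, solving the block associated with $X$ for the coefficient on $X$, and substituting back, so that the $X$-components cancel. The role of Assumption \ref{ass:ps} is precisely to make the linear projection of $Z$ on $X$ coincide with the conditional expectation $e(X)$, so that $\tilde Z = Z - e(X)$ holds \emph{exactly}, with no projection error; in particular $\e[\tilde Z \mid X] = 0$.

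Next I would compute the denominator: by iterated expectations, the first-stage representation in equation (\ref{eq:fs}), and $\e[\tilde Z \mid X] = 0$,
\[
\e\!\left[ \tilde Z \, D \right] = \e\!\left[ \omega(X) \, \e\!\big[ (Z - e(X))\,Z \mid X \big] \right] = \e\!\left[ \omega(X)\,\var[Z \mid X] \right],
\]
since the $\psi(X)$ term vanishes and $\e[(Z - e(X))Z \mid X] = \e[Z^2 \mid X] - e(X)^2 = \var[Z \mid X]$. For the numerator I would use that $Z$ is binary, so that conditional on $X$ the reduced form $\e[Y \mid X, Z]$ is affine in $Z$ with slope $\e[Y \mid X, Z=1] - \e[Y \mid X, Z=0]$; by the definition of $\beta(X)$ in equation (\ref{eq:wald}) this slope equals $\beta(X)\,\omega(X)$. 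Repeating the computation above with $Y$ in place of $D$ then yields $\e[\tilde Z \, Y] = \e[\beta(X)\,\omega(X)\,\var[Z \mid X]]$. Note that no substantive IV assumption has been used to this point: the argument so far is pure algebra combined with Assumption \ref{ass:ps}.

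Finally I would invoke Lemma \ref{lem:identify}(ii): under Assumptions \ref{ass:iv} and \ref{ass:cm}, $\beta(X) = \tau(X)$ and $\omega(X) = c(X)\,\pi(X)$ (using $c(X)^2 = 1$). Substituting into numerator and denominator gives
\[
\beta_{\mathrm{IV}} = \frac{\e\!\left[ \tau(X)\,c(X)\,\pi(X)\,\var[Z \mid X] \right]}{\e\!\left[ c(X)\,\pi(X)\,\var[Z \mid X] \right]},
\]
which is the claimed identity (the denominator being nonzero is covered by the standing assumption that the relevant moments and inverses exist). I do not anticipate a deep obstacle here; the one step deserving care is the appeal to Assumption \ref{ass:ps} in the partialling-out argument, since it is exactly what guarantees that regressing $Z$ linearly on $X$ leaves $Z - e(X)$ rather than $Z - e(X)$ plus an approximation error that would otherwise contaminate both moments. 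This is also what distinguishes the just identified case treated here from the saturated specification of Theorem \ref{the:ai1995}, where the analogous statement holds automatically.
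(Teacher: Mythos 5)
Your proposal is correct and follows essentially the same route as the paper: both reduce $\beta_{\mathrm{IV}}$ to the ratio of the variance-weighted conditional reduced-form and first-stage slopes, $\e\left[ \var\left[ Z \mid X \right] \cdot \omega(X) \cdot \beta(X) \right] / \e\left[ \var\left[ Z \mid X \right] \cdot \omega(X) \right]$, and then invoke Lemma \ref{lem:identify}(ii). The only difference is presentational: the paper obtains that representation by citing \cite{Angrist1998}'s lemma for each coefficient, whereas you re-derive the same fact inline via the population Frisch--Waugh--Lovell argument, correctly isolating Assumption \ref{ass:ps} as what makes $\tilde Z = Z - e(X)$ exact.
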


\noindent
Theorem \ref{the:justid_cm} provides a new representation of the IV estimand in the standard specification, \textit{i.e.}~one that, perhaps incorrectly, restricts the effects of the instrument in the reduced-form and first-stage regressions to be homogeneous across covariate values. Unlike in \citetalias{AI1995}'s specification, the estimand in the standard specification is not necessarily a convex combination of conditional LATEs. This is because $c(x)$ takes the value $-1$ for every value of covariates where there exist defiers but no compliers, and hence the corresponding weights in Theorem \ref{the:justid_cm} are negative as well. It follows that, when IV is applied in the usual way, the estimand may no longer be interpretable as a causal effect. It is even possible that this parameter may be negative (positive) when treatment effects are positive (negative) for everyone in the population.

The following result demonstrates that this problem disappears when we impose the strong version of monotonicity.

\begin{corollary}
\label{cor:justid_um}
Suppose that Assumptions \ref{ass:iv}, \ref{ass:um}, and \ref{ass:ps} hold. Then
\begin{equation*}
\beta_{\mathrm{IV}} = \frac{\e \left[ \pi(X) \cdot \var \left[ Z \mid X \right] \cdot \tau(X) \right]}{\e \left[ \pi(X) \cdot \var \left[ Z \mid X \right] \right]}.
\end{equation*}
\end{corollary}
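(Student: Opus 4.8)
The plan is to obtain this as an immediate corollary of Theorem~\ref{the:justid_cm}, rather than redo the underlying calculation. The first step is to note that Assumption~\ref{ass:um} implies Assumption~\ref{ass:cm}: strong monotonicity is the special case of weak monotonicity in which the ``defier'' cell of the partition is empty (more precisely, in which $\pr[D(1) \leq D(0) \mid X] = 1$ holds only on a null set). Hence all the hypotheses of Theorem~\ref{the:justid_cm} are in force, and I would start from the representation it supplies,
\[
\beta_{\mathrm{IV}} = \frac{\e \left[ c(X) \cdot \pi(X) \cdot \var \left[ Z \mid X \right] \cdot \tau(X) \right]}{\e \left[ c(X) \cdot \pi(X) \cdot \var \left[ Z \mid X \right] \right]}.
\]

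The second step is to simplify $c(X)$ under strong monotonicity. Since $c(x) = \sgn\big(\pr[D(1) \geq D(0) \mid X=x] - \pr[D(1) \leq D(0) \mid X=x]\big)$ and Assumption~\ref{ass:um} gives $\pr[D(1) \geq D(0) \mid X=x] = 1$ a.s., I would write $\pr[D(1) \leq D(0) \mid X=x] = \pr[D(1) = D(0) \mid X=x] = 1 - \pi(x)$ a.s., and conclude that $c(x) = \sgn(\pi(x))$ a.s. Relevance (Assumption~\ref{ass:iv}(iii)) then forces $\pi(x) > 0$ a.s., so in fact $c(X) = 1$ a.s.; in any case $c(X)\pi(X) = \pi(X)$ a.s., since the two sides can disagree only where $\pi(X) = 0$. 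Substituting $c(X) \equiv 1$ into the display above is the third and final step and delivers the stated formula.

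I do not expect a genuine obstacle here, since essentially all of the content already sits in Theorem~\ref{the:justid_cm}. The only point that requires a moment's care is the behaviour of the sign function at covariate values with $\pi(x) = 0$: one must check that replacing $c(X)$ by $1$ there is harmless, which it is because such values contribute zero to both the numerator and the denominator. (If a self-contained argument were preferred, one could instead rerun the proof of Theorem~\ref{the:justid_cm} verbatim with Assumption~\ref{ass:um} replacing Assumption~\ref{ass:cm}, invoking $\pi(x) = \omega(x)$ from Lemma~\ref{lem:identify}(i) in place of $\pi(x) = |\omega(x)| = c(x)\omega(x)$ from part~(ii); but deriving the corollary from the theorem, as above, is shorter and cleaner.)
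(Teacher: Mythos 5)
Your proposal is correct and follows essentially the same route as the paper: both derive the corollary by specializing Theorem \ref{the:justid_cm} to the case where the defier subset of the partition is empty and then noting that $c(X)=1$ a.s. Your extra care about the sign function at points with $\pi(x)=0$ is fine but unnecessary, since, as you observe, Assumption \ref{ass:iv}(iii) already rules such points out almost surely.
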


\noindent
Corollary \ref{cor:justid_um} provides a direct argument for \cite{AP2009}'s assertion that the standard specification of IV recovers a convex combination of conditional LATEs. As noted previously, however, this statement is no longer true under weak monotonicity. If strong monotonicity holds, then the weights in Corollary \ref{cor:justid_um} may be more desirable than those in \citetalias{AI1995}'s specification. Indeed, a comparison of Corollary \ref{cor:justid_um} and equation (\ref{eq:desired}) shows that the standard specification, like \citetalias{AI1995}'s specification, overweights the effects in groups with large variances of $Z$ but not, unlike the latter, in groups with strong first stages.\footnote{To be clear, both specifications attach a greater weight to conditional LATEs in groups with strong first stages, as required by equation (\ref{eq:desired}). But \citetalias{AI1995}'s specification places even more weight on such conditional LATEs than is necessary to recover the unconditional LATE parameter.}

\begin{remark}
\label{rem:abadie}
\cite{Abadie2003} shows that, under Assumptions \ref{ass:iv}, \ref{ass:um}, and \ref{ass:ps}, the IV estimand is equivalent to the coefficient on $D$ in the linear projection of $Y$ on $D$ and $X$ among compliers. In other words, IV is analogous to ordinary least squares (OLS), with the exception of its ability to implicitly condition the analysis on the (latent) subpopulation of compliers. Corollary \ref{cor:justid_um} provides another argument that ``IV is like OLS\@.'' Indeed, as shown by \cite{Angrist1998}, the only difference between the OLS estimand and the ATE is in the dependence of the OLS weights on $\var \left[ D \mid X \right]$\@. Similarly, Corollary \ref{cor:justid_um} shows that, under strong monotonicity, the only difference between the IV estimand and the LATE is in the dependence of the IV weights on $\var \left[ Z \mid X \right]$\@. However, this analogy between OLS and IV may be problematic for IV given the undesirable properties of the OLS estimand under treatment effect heterogeneity \citep[cf.][]{Sloczynski2022}.
\end{remark}

\begin{remark}
\label{rem:bww2007}
\cite{BWW2007} discuss the interpretation of interacted and noninteracted specifications in randomized experiments with noncompliance in which the existence of defiers is completely ruled out. In this case, the standard specification of IV recovers the unconditional LATE parameter but the interacted specification does not.\footnote{Instead, the interacted specification recovers a convex combination of conditional LATEs, which is generally different from the unconditional LATE parameter. A similar point about models with fully independent instruments is made by \cite{HK2020}, who also revisits the link between the existence of defiers and negative weights in this context \citep[cf.][]{IA1994,deChaisemartin2017,DHM2023} and recommends interacted specifications.} This is a special case of the difference between Theorem \ref{the:ai1995} and Corollary \ref{cor:justid_um} where $\var \left[ Z \mid X \right]$ is constant. However, Theorem \ref{the:justid_cm} makes it clear that under weak monotonicity the standard specification no longer recovers the unconditional LATE parameter or even a convex combination of conditional LATEs.
\end{remark}

\begin{remark}
\label{rem:kolesar}
Theorem \ref{the:justid_cm} and Corollary \ref{cor:justid_um} are also related to Theorem 1 in \cite{Kolesar2013}, which provides a common representation of any two-step instrumental variables estimand in the case of a binary $D$, a discrete $Z$, and under conditions similar to Assumptions \ref{ass:iv}, \ref{ass:cm}, and \ref{ass:ps}\@. To present this result, it is necessary to introduce some additional notation. Let $P = \e \left[ D \mid Z,X \right]$, $P^L = \mathrm{L} \left[ D \mid Z_{\mathrm{G}},X \right]$, and $\tilde{P}^L = P^L - \mathrm{L} \left[ D \mid X \right]$, where $\mathrm{L}[\cdot]$ is the linear projection and $Z_{\mathrm{G}} = z_{\mathrm{G}}(X,Z)$ is the vector of constructed instruments, which may include (some) interactions between $X$ and $Z$\@. Also, let $\mathcal{P}_x$ denote the support of $P$ conditional on $X=x$ and $J_x$ denote the number of support points, with $\mathcal{P}_x = \big\{ p_{1,x} < \ldots < p_{J_x,x} \big\}$. Then, \cite{Kolesar2013} shows that
\begin{equation}
\label{eq:kolesar}
\beta_{\mathrm{TSIV}} = \int \sum_{j=1}^{J_x-1} \frac{\theta_j(x)}{\int \sum_{j=1}^{J_x-1} \theta_j(x) \, \mathrm{d} F^X(x)} \, \tau(p_{j,x};x) \, \mathrm{d} F^X(x),
\end{equation}
where $\beta_{\mathrm{TSIV}}$ is any two-step instrumental variables estimand (\textit{e.g.}, 2SLS) which uses $Z_{\mathrm{G}}$ as instruments, $\theta_j(x) = \left( p_{j+1,x} - p_{j,x} \right) \cdot \pr \left[ P > p_{j,x} \mid X=x \right] \cdot \e \left[ \tilde{P}^L \mid X=x, P > p_{j,x} \right]$, and $\tau(p_{j,x};x) = \frac{\e \left[ Y \, \mid \, P=p_{j+1,x}, \, X=x \right] \; - \; \e \left[ Y \, \mid \, P=p_{j,x}, \, X=x \right]}{p_{j+1,x} \; - \; p_{j,x}}$ is the conditional LATE based on two adjacent elements of $\mathcal{P}_x$. \cite{Kolesar2013}'s result is generic in the sense that it applies to any given vector of instruments $Z_{\mathrm{G}} = z_{\mathrm{G}}(X,Z)$\@. At the same time, Theorem \ref{the:justid_cm} is specific to the IV estimand. However, its focus on that particular specification simplifies the result, making it more transparent and easier to interpret than equation (\ref{eq:kolesar}).\footnote{Using equation (\ref{eq:kolesar}) to determine whether a given specification rules out the incidence of negative weights requires verifying the condition $\pr \left[ \theta_j(X) \geq 0 \right] = 1$ on a case-by-case basis.} In Appendix \ref{app:proofs}, I also present an alternative proof of Theorem \ref{the:justid_cm}, which uses \cite{Kolesar2013}'s representation of $\beta_{\mathrm{TSIV}}$.
\end{remark}

\begin{remark}
\label{rem:montest}
A testable implication of strong monotonicity is that $\omega(x)$, the conditional first-stage slope coefficient, is always non-negative. In a saturated specification with $X = \left( 1, G_1, \ldots, G_{K-1} \right)$, it is straightforward to construct a formal test based on this observation.\footnote{See also \cite{Semenova2025} for an analogous test in the context of endogenous sample selection.} If we define
\begin{equation}
\omega = \Big( \e \left[ D \mid Z=1, G=k \right] - \e \left[ D \mid Z=0, G=k \right] \Big)_{k=1}^{K},
\end{equation}
then the null hypothesis can be written as
\begin{equation}
H_0 : \quad \left( -1 \right) \cdot \omega \leq 0
\end{equation}
and the test statistic as
\begin{equation}
\label{eq:supt}
T = \max_{1 \leq k \leq K} \frac{\left( -1 \right) \cdot \hat{\omega}_k}{\hat{\sigma}_{\hat{\omega}_k}}.
\end{equation}
One possible choice of critical values for this test statistic are the one-step self-normalized critical values of \cite{CCK2019}. Another is based on the Bonferroni procedure, which requires, however, that $K$ is much smaller than the sample size.
\end{remark}

\begin{remark}
\label{rem:riv}
Suppose we are interested in the estimand of Corollary \ref{cor:justid_um}, but we are only willing to assume weak monotonicity. If $\omega(x)$ were known, we could define a new, ``reordered'' instrument as $Z_{\mathrm{R}} = 1 [ \omega(X) > 0 ] \cdot Z + 1 [ \omega(X) < 0 ] \cdot \left( 1-Z \right)$ and subsequently use it in a noninteracted specification. In Appendix \ref{app:proofs}, I show that this procedure would recover the estimand of interest. In practice, however, $\omega(x)$ is unknown and would need to be estimated. I leave the study of the properties of the resulting reordered IV estimator to future work.
\end{remark}

\subsection{Finite Sample Considerations}
\label{sec:manyiv}

Given the theoretical results in Sections \ref{sec:result1} and \ref{sec:result2}, it seems reasonable to consider \citetalias{AI1995}'s interacted specification whenever weak monotonicity is plausible but strong monotonicity is not. However, this approach has some limitations in finite samples: it requires dividing the sample into $K$ groups, and when $K$ is sufficiently large relative to the sample size, some groups will be small. With many groups and instruments, this situation leads to bias, which results from overfitting the first stage. In other words, the first-stage fitted values pick up the noise, not just the signal, and a large amount of noise, particularly likely with many small groups, translates to poor estimates of the first stage and bias in the second stage.

This phenomenon, known as the ``many instrument'' bias, has been extensively studied in the econometrics literature. Recent surveys include \cite{Anatolyev2019} and \cite{MS2024}.\footnote{The classic literature on many instruments has focused on the homogeneous effects model, but I interpret its results through the lens of the framework in Section \ref{sec:framework}.} In the remainder of this section, I first review several solutions to this problem, which offer finite sample improvements over 2SLS when estimating specifications with many instruments (\textit{e.g.}, \citetalias{AI1995}'s specification). Then, I review a recent pretest designed to evaluate whether, in a given dataset, the instruments are jointly strong enough to ensure consistency. I conclude with a simulation study.

\subsubsection{Estimation with Many Instruments}

The problem of the many instrument bias is usually studied using the asymptotic sequence of \cite{Kunitomo1980}, \cite{Morimune1983}, and \cite{Bekker1994}, which allows the number of instruments, $K$, to increase in proportion with the sample size, $N$\@. In the context of \citetalias{AI1995}'s interacted specification, fixing the ratio of $K$ to $N$ does not allow the group sizes to grow when the sample size grows, which reproduces the practical problem of small groups.

Under this asymptotic sequence, 2SLS is inconsistent unless the concentration parameter, a measure of instrument strength, grows faster than the number of instruments. The classic alternatives include the limited information maximum likelihood (LIML) estimator of \cite{AR1949} and the bias-corrected two-stage least squares (B2SLS) estimator of \cite{Nagar1959}, both of which are consistent under homoskedasticity when the concentration parameter grows faster than the square root of the number of instruments \citep{CS2005}. However, homoskedasticity of first-stage errors is impossible when the treatment is binary. Under heteroskedasticity, LIML and B2SLS require the same (stronger) condition as 2SLS \citep{CSHNW2012}.

Under heteroskedasticity, the weaker condition that the concentration parameter grows faster than the square root of the number of instruments is sufficient for the consistency of the jackknife IV estimator (JIVE) of \cite{AIK1999}, as also shown by \cite{CSHNW2012}. The basic idea underlying jackknife-type estimators is that using a ``leave-one-out'' predictor of the treatment---effectively a separate first stage for each unit---will reduce the noise and bias.

At the same time, however, most of the estimators discussed so far are inconsistent under the asymptotic sequence that allows the number of covariates, alongside the number of instruments, to increase in proportion with the sample size. This is potentially a major limitation because, in \citetalias{AI1995}'s specification, the number of covariates and the number of instruments are the same and equal to the number of groups. Still, several modifications to JIVE and B2SLS are robust to many instruments and many covariates, including the improved jackknife IV estimator (IJIVE) of \cite{AD2009}, the modified bias-corrected two-stage least squares (MB2SLS) estimator of \cite{Anatolyev2013}, the unbiased jackknife IV estimator (UJIVE) of \cite{Kolesar2013}, and three jackknife-type estimators of \cite{CSW2023}, referred to as the fixed effect jackknife IV (FEJIV) estimator, the fixed effect limited information maximum likelihood (FELIM) estimator, and the fixed effect \cite{Fuller1977} (FEFUL) estimator. Although the performance of LIML is not additionally affected by many covariates \citep{Anatolyev2013}, both LIML and MB2SLS rely on the homoskedasticity assumption. Furthermore, LIML does not even share the estimand with two-step IV estimators, such as 2SLS, MB2SLS, JIVE, IJIVE, UJIVE, and FEJIV, making it inappropriate in settings with treatment effect heterogeneity \citep{Kolesar2013}. FELIM and FEFUL do not belong to the class of two-step IV estimators either. Finally, \cite{CSW2023} discuss the limitations of IJIVE and, to a lesser extent, UJIVE, making FEJIV the likely estimator of choice.

While the framework of \cite{CSW2023} does not explicitly allow for treatment effect heterogeneity, the suitability of the FEJIV estimator in my framework follows from \cite{Kolesar2013}, who shows that any member of a broad class of two-step IV estimators has a common weighted average representation under treatment effect heterogeneity (cf.~Remark \ref{rem:kolesar}). Because both 2SLS and FEJIV fall into this class, their estimands have the same interpretation under treatment effect heterogeneity and standard asymptotics. The difference is that under many instrument asymptotics, 2SLS becomes inconsistent for this estimand, whereas FEJIV remains consistent.

\subsubsection{Weak Identification}

Specifications with many instruments require that they are sufficiently strong as a group, although they can be individually weak or even irrelevant \citep[cf.][]{Anatolyev2019}. In the context of \citetalias{AI1995}'s specification, the original instrument can be weak in some groups as long as it is sufficiently strong in others. But how strong is strong enough?

\cite{MS2022} study weak identification in linear models with many instruments, which is a situation where the concentration parameter divided by the square root of the number of instruments remains bounded as the sample size grows. They also develop a pretest for this phenomenon to evaluate whether identification is strong in a given dataset. (Their test statistic $\widetilde{F}$ should be compared to a cutoff of 4.14.) Under the null of weak identification, no consistent estimator exists, and inference can instead be based on a jackknifed version of the AR test statistic. When the pretest rejects, \cite{MS2022} recommend the jackknife IV estimator, which is consistent under the alternative \citep{CSHNW2012}.

\subsubsection{Simulations}
\label{sec:simul}

In what follows, I study the finite sample performance of several two-step IV estimators of \citetalias{AI1995}'s specification, with a focus on settings with many small groups, treatment effect heterogeneity, and violations of Assumption \ref{ass:um}\@. I adapt the data-generating process from \cite{BBMT2022}, which was designed to mimic the college proximity study in \cite{Card1995}. The simulation design also originally assumed homogeneous treatment effects and no monotonicity violations. As we will see, these restrictions are responsible for \cite{BBMT2022}'s conclusion that the usual application of IV is easier to estimate without bias than the interacted specification.

In the baseline data-generating process, as in \cite{BBMT2022}, I draw $X$ uniformly from a Halton sequence $\mathcal{X}$ on $[0,1]$, subsequently drawing $Z$, $D(Z)$, and $Y(D)$ as
\begingroup
\allowdisplaybreaks
\begin{eqnarray}
\pr \left[ Z=1 \mid X \right] &=& 0.119 + 1.785 X - 1.534 X^2 + 0.597 X^3, \\
D(Z) &=& 1 [ \Phi(V) \leq p(Z) ], \\
Y(D) &=& \log \left( 129.7 + 1247.7 X - 2149 X^2 + 1515.7 X^3 \right) + 1.2 D + U,
\end{eqnarray}
\endgroup
where $(U,V)$ are standard multivariate normal with correlation 0.527, drawn independently of $(X,Z)$\@. I also set $|\mathcal{X}| = 250$, $p(0) = \pr \left[ D=1 \mid Z=0 \right] = 0.22$, and $p(1) = \pr \left[ D=1 \mid Z=1 \right] = 0.29$. In this setting, treatment effects are homogeneous and equal to 1.2. Strong monotonicity is satisfied even though the instrument is relatively weak, with the proportion of compliers independent of $X$ and equal to $p(1) - p(0) = 0.07$. Again, these parameters are calibrated to the data in \cite{Card1995}.

In subsequent modifications of this data-generating process, I introduce treatment effect heterogeneity by specifying $Y(1)$ and $Y(0)$ as
\begingroup
\allowdisplaybreaks
\begin{eqnarray}
Y(1) &=& \log \left( 129.7 + 1247.7 X - 2149 X^2 + 1515.7 X^3 \right) + 1.2 + U, \\
Y(0) &=& \log \left( 1 \cdot 129.7 + 2 \cdot 1247.7 X - 3 \cdot 2149 X^2 + 4 \cdot 1515.7 X^3 \right) + U,
\end{eqnarray}
\endgroup
while also allowing for violations of strong (but not weak) monotonicity. This is accomplished by switching the values of $p(0)$ and $p(1)$ for some groups. Specifically, to generate what I refer to as ``moderate'' monotonicity violations, I reverse the values of $p(0)$ and $p(1)$ if $X>0.75$. For ``large'' monotonicity violations, the threshold value of $X$ is 0.5. I also consider a setting with ``weak cells,'' that is, values of $X$ where the proportion of compliers and defiers is zero. Here, I set $p(0) = p(1) = 0.22$ if $1/3 < X < 2/3$ and reverse the original values of $p(0)$ and $p(1)$ if $X>2/3$.

Two final modifications involve the number and relative sizes of groups and the instrument strength. So far, the groups were equal sized. To reproduce the likely scenario that some groups are large while others are small, I also consider a setting with $|\mathcal{X}| = 20$, but where $X$ is not drawn uniformly. Specifically, I set $\pr \left[ G=k \right]$ to be proportional to $1.3^k$, making the largest group $1.3^{19}$ times larger than the smallest. As in \cite{BBMT2022}, I also consider a scenario where the instrument is stronger than the ``weak'' case above, with 0.52 replacing 0.29 as the larger value of $p(Z)$ whenever $p(0) \neq p(1)$ conditional on $X$\@. This sets the conditional proportion of compliers or defiers equal to 0.3, except in the ``weak cells'' design, where it is either 0.3 or 0.

\begin{table}[!t]
\begin{adjustwidth}{-1in}{-1in}
\centering
\begin{threeparttable}
\caption{Simulation Results for $K=250$, ``Weak'' IV, and No Monotonicity Violations\label{tab:sim_250_1}}
\begin{footnotesize}
\begin{tabular}{c>{\centering\arraybackslash}m{0.05cm} >{\centering\arraybackslash}m{1.1425cm} >{\centering\arraybackslash}m{1.1425cm} >{\centering\arraybackslash}m{1.1425cm} >{\centering\arraybackslash}m{0.05cm} >{\centering\arraybackslash}m{1.1425cm} >{\centering\arraybackslash}m{1.1425cm} >{\centering\arraybackslash}m{1.1425cm} >{\centering\arraybackslash}m{0.05cm} >{\centering\arraybackslash}m{1.1425cm} >{\centering\arraybackslash}m{1.1425cm} >{\centering\arraybackslash}m{1.1425cm}}
\hline\hline
          &       & \multicolumn{3}{c}{$N=3{,}000$} &       & \multicolumn{3}{c}{$N=10{,}000$} &       & \multicolumn{3}{c}{$N=50{,}000$} \\
\hline
    A. Estimator Performance &       & Bias  & Median Bias & MSE   &       & Bias  & Median Bias & MSE   &       & Bias  & Median Bias & MSE \\
\hline
    OLS   &       & --0.735 & --0.735 & 1.080  &       & --0.736 & --0.735 & 1.303 &       & --0.735 & --0.735 & 3.457 \\
    IV    &       & 0.095 & 0.001 & 0.941 &       & 0.013 & --0.002 & 0.225 &       & 0.001 & --0.001 & 0.106 \\
    2SLS  &       & --0.698 & --0.701 & 1.000     &       & --0.635 & --0.636 & 1.000     &       & --0.387 & --0.387 & 1.000 \\
    MB2SLS &       & --59.10 & --0.966 & 6.4e+06 &       & 4.546 & --0.025 & 7.1e+04 &       & 0.053 & 0.036 & 0.313 \\
    JIVE  &       & --0.803 & --0.805 & 1.355 &       & --0.901 & --0.892 & 2.029 &       & --9.473 & --3.842 & 1.5e+05 \\
    IJIVE &       & 0.385 & --0.430 & 629.8 &       & --0.068 & --0.112 & 37.97 &       & 0.013 & --0.004 & 0.256 \\
    UJIVE &       & --1.540 & --0.482 & 516.7 &       & 1.073 & --0.062 & 4.8e+03  &       & 0.017 & 0.000     & 0.261 \\
    FEJIV &       & --1.396 & --0.512 & 3.5e+03  &       & 0.395 & --0.027 & 132.0   &       & 0.017 & 0.000     & 0.261 \\
          &       &       &       &       &       &       &       &       &       &       &       &  \\
\hline
    B. Pretest for Weak Identification &       &       &       &       &       &       &       &       &       &       &       &  \\
\hline
    Average $\widetilde{F}$ &       &       & 1.83  &       &       &       & 2.27  &       &       &       & 10.37 &  \\
    $q_{0.05}$ &       &       & --0.07 &       &       &       & 0.12  &       &       &       & 7.54  &  \\
    $q_{0.95}$ &       &       & 3.83  &       &       &       & 4.40   &       &       &       & 13.43 &  \\
\hline
\end{tabular}
\begin{tablenotes}[flushleft]
\item \textit{Notes:} ``OLS'' is the OLS estimator in the regression of the outcome on the treatment indicator and group indicators. ``IV'' is the IV estimator in the noninteracted specification. The remaining estimators are based on the interacted specification. JIVE, IJIVE, and UJIVE are computed after dropping all groups with fewer than two observations in either $(X,Z)$ combination. FEJIV is computed after dropping all groups with fewer than three observations in either $(X,Z)$ combination. The pretest for weak identification follows \cite{MS2022}; see also the Stata implementation in \cite{Sun2023}. Bias and median bias are reported as the proportion of the target parameter. MSE is normalized by the MSE of 2SLS\@. Results are based on 1,000 replications. Pretest results are based on 250 replications.
\end{tablenotes}
\end{footnotesize}
\end{threeparttable}
\end{adjustwidth}
\end{table}

The total number of simulation designs is sixteen, with $|\mathcal{X}| = 20$ or $|\mathcal{X}| = 250$, two levels of instrument strength (``weak'' or ``strong''), and four scenarios of violations of strong monotonicity, referred to as no violations, moderate violations, large violations, and violations with weak cells. Treatment effects are homogeneous when strong monotonicity holds and heterogeneous otherwise. The target parameter is the estimand in Theorem \ref{the:ai1995}, which is, except in the ``weak cells'' design, equal to that in Corollary \ref{cor:justid_um}, making monotonicity violations the only reason why the estimands of the interacted and noninteracted specifications may be different. I consider two sample sizes, $N=3{,}000$ and $N=10{,}000$, when $|\mathcal{X}| = 20$, and additionally $N=50{,}000$ when $|\mathcal{X}| = 250$. The smallest sample size, $N=3{,}000$, is similar to the sample size in \cite{Card1995}.

\begin{table}[!t]
\begin{adjustwidth}{-1in}{-1in}
\centering
\begin{threeparttable}
\caption{Simulation Results for $K=250$, ``Weak'' IV, and Moderate Monotonicity Violations\label{tab:sim_250_2}}
\begin{footnotesize}
\begin{tabular}{c>{\centering\arraybackslash}m{0.05cm} >{\centering\arraybackslash}m{1.1425cm} >{\centering\arraybackslash}m{1.1425cm} >{\centering\arraybackslash}m{1.1425cm} >{\centering\arraybackslash}m{0.05cm} >{\centering\arraybackslash}m{1.1425cm} >{\centering\arraybackslash}m{1.1425cm} >{\centering\arraybackslash}m{1.1425cm} >{\centering\arraybackslash}m{0.05cm} >{\centering\arraybackslash}m{1.1425cm} >{\centering\arraybackslash}m{1.1425cm} >{\centering\arraybackslash}m{1.1425cm}}
\hline\hline
          &       & \multicolumn{3}{c}{$N=3{,}000$} &       & \multicolumn{3}{c}{$N=10{,}000$} &       & \multicolumn{3}{c}{$N=50{,}000$} \\
\hline
    A. Estimator Performance &       & Bias  & Median Bias & MSE   &       & Bias  & Median Bias & MSE   &       & Bias  & Median Bias & MSE \\
\hline
    OLS   &       & --1.159 & --1.158 & 1.168 &       & --1.160 & --1.159 & 1.316 &       & --1.160 & --1.159 & 3.357 \\
    IV    &       & 0.543 & 0.181 & 3.418 &       & 0.219 & 0.176 & 0.440  &       & 0.180  & 0.175 & 0.239 \\
    2SLS  &       & --1.059 & --1.056 & 1.000     &       & --0.999 & --0.997 & 1.000     &       & --0.621 & --0.623 & 1.000 \\
    MB2SLS &       & --2.382 & --1.589 & 452.1 &       & 0.174 & --0.113 & 140.2 &       & --0.019 & --0.037 & 0.213 \\
    JIVE  &       & --1.150 & --1.152 & 1.215 &       & --1.403 & --1.391 & 1.997 &       & --9.380 & --5.865 & 6.9e+04 \\
    IJIVE &       & --2.173 & --0.672 & 1.7e+03  &       & 4.456 & --0.161 & 1.8e+04 &       & 0.020  & 0.001 & 0.229 \\
    UJIVE &       & --1.436 & --0.793 & 2.1e+03  &       & --0.735 & --0.076 & 812.2 &       & 0.028 & 0.007 & 0.234 \\
    FEJIV &       & --0.991 & --0.715 & 895.6 &       & 0.807 & --0.072 & 261.1 &       & 0.028 & 0.007 & 0.234 \\
          &       &       &       &       &       &       &       &       &       &       &       &  \\
\hline
    B. Pretest for Weak Identification &       &       &       &       &       &       &       &       &       &       &       &  \\
\hline
    Average $\widetilde{F}$ &       &       & 1.92  &       &       &       & 2.30   &       &       &       & 10.08 &  \\
    $q_{0.05}$ &       &       & --0.01 &       &       &       & 0.27  &       &       &       & 7.48  &  \\
    $q_{0.95}$ &       &       & 4.04  &       &       &       & 4.73  &       &       &       & 12.87 &  \\
\hline
\end{tabular}
\begin{tablenotes}[flushleft]
\item \textit{Notes:} ``OLS'' is the OLS estimator in the regression of the outcome on the treatment indicator and group indicators. ``IV'' is the IV estimator in the noninteracted specification. The remaining estimators are based on the interacted specification. JIVE, IJIVE, and UJIVE are computed after dropping all groups with fewer than two observations in either $(X,Z)$ combination. FEJIV is computed after dropping all groups with fewer than three observations in either $(X,Z)$ combination. The pretest for weak identification follows \cite{MS2022}; see also the Stata implementation in \cite{Sun2023}. Bias and median bias are reported as the proportion of the target parameter. MSE is normalized by the MSE of 2SLS\@. Results are based on 1,000 replications. Pretest results are based on 250 replications.
\end{tablenotes}
\end{footnotesize}
\end{threeparttable}
\end{adjustwidth}
\end{table}

Table \ref{tab:sim_250_1} reports simulation results for a number of estimators in the ``weak'' IV case with 250 groups and no monotonicity violations. The first three columns, setting $N=3{,}000$, correspond to the baseline results in \cite{BBMT2022}. Even though I consider a larger number of estimators than \cite{BBMT2022}, I reach the same conclusion: all estimators are severely biased, with the only exception of IV in the noninteracted specification, whose bias is less than 10\% and median bias is practically zero. However, panel B of Table \ref{tab:sim_250_1} reveals that this conclusion is predictable: the average value of \cite{MS2022}'s test statistic, $\widetilde{F}$, is 1.83, well below the cutoff of 4.14, which means that consistent estimation of the interacted specification is impossible. The remaining columns report simulation results for $N=10{,}000$ and $N=50{,}000$. Here, the strength of identification gradually increases, with the average value of $\widetilde{F}$ exceeding 10 when $N=50{,}000$. Indeed, when this is the case, the best-performing estimators of \citetalias{AI1995}'s specification---IJIVE, UJIVE, and FEJIV---are practically unbiased, in line with the results in \cite{MS2022}.

Table \ref{tab:sim_250_2} introduces moderate monotonicity violations. With $N=3{,}000$, the average value of $\widetilde{F}$ is again below 2. Now, however, every estimator is severely biased, including IV in the noninteracted specification. (Estimation of this specification is biased because of monotonicity violations. Estimation of the interacted specification is biased because of insufficient instrument strength.) With larger sample sizes, $N=10{,}000$ and $N=50{,}000$, identification gets stronger. Specifically, when $N=50{,}000$, the average value of $\widetilde{F}$ again exceeds 10, and IJIVE, UJIVE, and FEJIV perform very well. IV estimation of the noninteracted specification remains biased; however, it is competitive with the best-performing estimators in terms of MSE\@.

\begin{table}[!p]
\begin{adjustwidth}{-1in}{-1in}
\centering
\begin{threeparttable}
\caption{Simulation Results for $K=250$, ``Weak'' IV, and Large Monotonicity Violations\label{tab:sim_250_3}}
\begin{footnotesize}
\begin{tabular}{c>{\centering\arraybackslash}m{0.05cm} >{\centering\arraybackslash}m{1.1425cm} >{\centering\arraybackslash}m{1.1425cm} >{\centering\arraybackslash}m{1.1425cm} >{\centering\arraybackslash}m{0.05cm} >{\centering\arraybackslash}m{1.1425cm} >{\centering\arraybackslash}m{1.1425cm} >{\centering\arraybackslash}m{1.1425cm} >{\centering\arraybackslash}m{0.05cm} >{\centering\arraybackslash}m{1.1425cm} >{\centering\arraybackslash}m{1.1425cm} >{\centering\arraybackslash}m{1.1425cm}}
\hline\hline
          &       & \multicolumn{3}{c}{$N=3{,}000$} &       & \multicolumn{3}{c}{$N=10{,}000$} &       & \multicolumn{3}{c}{$N=50{,}000$} \\
\hline
    A. Estimator Performance &       & Bias  & Median Bias & MSE   &       & Bias  & Median Bias & MSE   &       & Bias  & Median Bias & MSE \\
\hline
    OLS   &       & --1.165 & --1.166 & 1.187 &       & --1.166 & --1.166 & 1.329 &       & --1.166 & --1.166 & 3.347 \\
    IV    &       & 0.839 & --0.252 & 723.7 &       & 1.498 & 0.459 & 252.2 &       & 0.670  & 0.547 & 3.322 \\
    2SLS  &       & --1.056 & --1.057 & 1.000     &       & --1.001 & --1.006 & 1.000     &       & --0.627 & --0.624 & 1.000 \\
    MB2SLS &       & 0.073 & --0.751 & 625.9 &       & --0.052 & --0.291 & 15.78 &       & --0.070 & --0.082 & 0.186 \\
    JIVE  &       & --1.154 & --1.151 & 1.239 &       & --1.413 & --1.399 & 2.025 &       & --2.757 & --5.762 & 1.6e+04 \\
    IJIVE &       & --1.341 & --0.753 & 456.7 &       & 0.085 & --0.212 & 158.8 &       & 0.018 & 0.004 & 0.216 \\
    UJIVE &       & --8.712 & --0.842 & 7.0e+04 &       & 0.011 & --0.113 & 99.13 &       & 0.026 & 0.010  & 0.221 \\
    FEJIV &       & --1.745 & --0.686 & 6.3e+03  &       & 1.626 & --0.057 & 721.5 &       & 0.026 & 0.010  & 0.221 \\
          &       &       &       &       &       &       &       &       &       &       &       &  \\
\hline
    B. Pretest for Weak Identification &       &       &       &       &       &       &       &       &       &       &       &  \\
\hline
    Average $\widetilde{F}$ &       &       & 2.01  &       &       &       & 2.24  &       &       &       & 9.74  &  \\
    $q_{0.05}$ &       &       & --0.06 &       &       &       & 0.37  &       &       &       & 7.15  &  \\
    $q_{0.95}$ &       &       & 4.30   &       &       &       & 4.41  &       &       &       & 12.41 &  \\
\hline
\end{tabular}
\begin{tablenotes}[flushleft]
\item \textit{Notes:} ``OLS'' is the OLS estimator in the regression of the outcome on the treatment indicator and group indicators. ``IV'' is the IV estimator in the noninteracted specification. The remaining estimators are based on the interacted specification. JIVE, IJIVE, and UJIVE are computed after dropping all groups with fewer than two observations in either $(X,Z)$ combination. FEJIV is computed after dropping all groups with fewer than three observations in either $(X,Z)$ combination. The pretest for weak identification follows \cite{MS2022}; see also the Stata implementation in \cite{Sun2023}. Bias and median bias are reported as the proportion of the target parameter. MSE is normalized by the MSE of 2SLS\@. Results are based on 1,000 replications. Pretest results are based on 250 replications.
\end{tablenotes}
\end{footnotesize}
\end{threeparttable}
\end{adjustwidth}
\end{table}

\begin{table}[!p]
\begin{adjustwidth}{-1in}{-1in}
\centering
\begin{threeparttable}
\caption{Simulation Results for $K=250$, ``Weak'' IV, and Monotonicity Violations with Weak Cells\label{tab:sim_250_4}}
\begin{footnotesize}
\begin{tabular}{c>{\centering\arraybackslash}m{0.05cm} >{\centering\arraybackslash}m{1.1425cm} >{\centering\arraybackslash}m{1.1425cm} >{\centering\arraybackslash}m{1.1425cm} >{\centering\arraybackslash}m{0.05cm} >{\centering\arraybackslash}m{1.1425cm} >{\centering\arraybackslash}m{1.1425cm} >{\centering\arraybackslash}m{1.1425cm} >{\centering\arraybackslash}m{0.05cm} >{\centering\arraybackslash}m{1.1425cm} >{\centering\arraybackslash}m{1.1425cm} >{\centering\arraybackslash}m{1.1425cm}}
\hline\hline
          &       & \multicolumn{3}{c}{$N=3{,}000$} &       & \multicolumn{3}{c}{$N=10{,}000$} &       & \multicolumn{3}{c}{$N=50{,}000$} \\
\hline
    A. Estimator Performance &       & Bias  & Median Bias & MSE   &       & Bias  & Median Bias & MSE   &       & Bias  & Median Bias & MSE \\
\hline
    OLS   &       & --1.196 & --1.195 & 1.136 &       & --1.193 & --1.194 & 1.185 &       & --1.194 & --1.194 & 2.269 \\
    IV    &       & 0.945 & --0.310 & 975.8 &       & 0.053 & 0.380  & 1.2e+03  &       & 0.805 & 0.646 & 3.480 \\
    2SLS  &       & --1.108 & --1.111 & 1.000     &       & --1.082 & --1.080 & 1.000     &       & --0.779 & --0.784 & 1.000 \\
    MB2SLS &       & --6.127 & --1.278 & 1.2e+04 &       & 0.183 & --0.479 & 246.3 &       & --0.109 & --0.132 & 0.308 \\
    JIVE  &       & --1.129 & --1.122 & 1.084 &       & --1.315 & --1.319 & 1.493 &       & --3.229 & --2.828 & 23.50 \\
    IJIVE &       & --11.17 & --0.900 & 2.1e+05 &       & 9.651 & --0.480 & 5.2e+04 &       & 0.026 & --0.015 & 0.396 \\
    UJIVE &       & --0.814 & --1.063 & 3.3e+03  &       & 3.604 & --0.411 & 1.1e+04 &       & 0.039 & --0.001 & 0.409 \\
    FEJIV &       & 3.766 & --0.778 & 2.3e+04 &       & --6.777 & --0.476 & 1.6e+04 &       & 0.039 & 0.001 & 0.408 \\
          &       &       &       &       &       &       &       &       &       &       &       &  \\
\hline
    B. Pretest for Weak Identification &       &       &       &       &       &       &       &       &       &       &       &  \\
\hline
    Average $\widetilde{F}$ &       &       & 1.73  &       &       &       & 1.48  &       &       &       & 6.18  &  \\
    $q_{0.05}$ &       &       & 0.03  &       &       &       & --0.58 &       &       &       & 4.10   &  \\
    $q_{0.95}$ &       &       & 3.57  &       &       &       & 3.54  &       &       &       & 8.52  &  \\
\hline
\end{tabular}
\begin{tablenotes}[flushleft]
\item \textit{Notes:} ``OLS'' is the OLS estimator in the regression of the outcome on the treatment indicator and group indicators. ``IV'' is the IV estimator in the noninteracted specification. The remaining estimators are based on the interacted specification. JIVE, IJIVE, and UJIVE are computed after dropping all groups with fewer than two observations in either $(X,Z)$ combination. FEJIV is computed after dropping all groups with fewer than three observations in either $(X,Z)$ combination. The pretest for weak identification follows \cite{MS2022}; see also the Stata implementation in \cite{Sun2023}. Bias and median bias are reported as the proportion of the target parameter. MSE is normalized by the MSE of 2SLS\@. Results are based on 1,000 replications. Pretest results are based on 250 replications.
\end{tablenotes}
\end{footnotesize}
\end{threeparttable}
\end{adjustwidth}
\end{table}

Tables \ref{tab:sim_250_3} and \ref{tab:sim_250_4} consider large monotonicity violations and ``weak cells.'' It remains the case that IJIVE, UJIVE, and FEJIV are nearly unbiased whenever the average value of $\widetilde{F}$ is large enough. This includes the ``weak cells'' design in Table \ref{tab:sim_250_4}, which underscores the notion that the instrument can be weak in some groups as long as it is sufficiently strong in others.\footnote{Intuitively, if $\pi(x)=0$ when $X=x$, $\tau(x)$ is not identified. However, because $\tau_{\mathrm{LATE}} = \frac{\e \left[ \pi(X) \cdot \tau(X) \right]}{\e \left[ \pi(X) \right]}$, the weight on $\tau(x)$ in $\tau_{\mathrm{LATE}}$ would have been zero anyway, and analogously for the estimands in Theorem \ref{the:ai1995}, Theorem \ref{the:justid_cm}, and Corollary \ref{cor:justid_um}. That is, as long as the overall instrument strength is sufficient \citep[cf.][]{MS2022}, it does not matter that some conditional LATEs cannot be well estimated due to a conditional-on-$X$ weak IV problem, because those conditional LATEs are irrelevant for the target estimand.} On the other hand, unlike in Table \ref{tab:sim_250_2}, IV estimation of the noninteracted specification is not only biased in Tables \ref{tab:sim_250_3} and \ref{tab:sim_250_4}, but also noisy, which leads to very high values of MSE\@.

The remaining simulation results, for ``weak'' IV with $|\mathcal{X}| = 20$ and for ``strong'' IV with both values of $|\mathcal{X}|$, are reported in Tables \ref{tab:sim_250_5}--\ref{tab:sim_20_8} in Appendix \ref{app:simul}\@. The bottom line is still that \cite{MS2022}'s pretest does a great job differentiating between cases where IJIVE, UJIVE, and FEJIV perform well or very well, and cases where all estimators of the interacted specification perform badly. Roughly speaking, values of $\widetilde{F}$ exceeding the recommended cutoff of 4.14 are associated with low bias, even when, with $|\mathcal{X}| = 250$ and $N=3{,}000$, there are only 12 units in each group; values of $\widetilde{F}$ exceeding 10--15 are associated with negligible or no bias, at least in the data-generating process under consideration.

Other estimators are clearly not competitive with IJIVE, UJIVE, and FEJIV\@. When there are violations of monotonicity, the usual application of IV is biased and often unstable. 2SLS estimation of the interacted specification is generally biased, as expected. MB2SLS is usually dominated by IJIVE, UJIVE, and FEJIV, especially on bias. JIVE is generally biased and unstable.

To be clear, the purpose of this simulation study is not to claim that \citetalias{AI1995}'s specification can be estimated without bias in a typical application of IV methods. Instead, the simulations show that IJIVE, UJIVE, and FEJIV estimation of \citetalias{AI1995}'s specification is reliable \textit{if} the instruments are jointly strong enough, which can be verified using \cite{MS2022}'s pretest. Future research should examine whether the number of instruments in \citetalias{AI1995}'s specification could be reduced using appropriate regularization techniques, perhaps a modification of the existing approaches in \cite{CHS2015a,CHS2015b} and \cite{Wiemann2024}.

\section{Empirical Applications}
\label{sec:empirics}

The results so far underscore the importance of using the interacted specification when weak monotonicity is plausible but strong monotonicity is not. In this section I present evidence of violations of strong monotonicity and first-stage homogeneity in a sample of recent applications of IV methods. Then, I revisit a study of the effects of pretrial detention on case outcomes in Philadelphia, where violations of strong monotonicity are particularly evident \citep{Stevenson2018}.

\subsection{Review of Applications of Instrumental Variables}
\label{sec:young}

In what follows, I use a sample of 1,309 instrumental variables regressions previously analyzed by \cite{Young2022}, which corresponds to the universe of IV estimates reported in the main text of 30 papers published in journals of the American Economic Association between 2006 and 2015.\footnote{\cite{Young2022}'s goal was to cover the universe of replicable IV applications in this period subject to a small number of additional inclusion criteria reported in his paper.} After dropping specifications with multiple instruments, without additional covariates, or based on panel data, I obtain my final sample of 988 regressions in 25 papers.\footnote{Because \cite{Young2022} only considered papers with replication code in Stata, I define ``specifications based on panel data'' as those using Stata's \texttt{xtivreg} or \texttt{xtivreg2} commands in the original replication package. The number of applicable regressions in several papers would decrease substantially if we eliminated not only duplicate IV regressions---which \cite{Young2022} already did---but also duplicate first stages. However, my preliminary attempt to do so did not meaningfully change any of the results reported in this section.} The number of regressions per paper is highly uneven in this sample, with the mean equal to $988/25 = 39.52$ and the quartiles equal to 8, 14, and 40. The list of papers under consideration is provided in Appendix \ref{app:young}\@.

Given the inclusion criteria above, every specification in my sample is based on a linear first-stage regression of $D$ on $Z$ and $X$, without any interactions between $Z$ and $X$\@. In my first exercise, I implicitly include these interactions by means of separate regressions of $D$ on $X$ given $Z=1$ and $Z=0$.\footnote{If the original treatment or instrument are not binary, I replace them with indicators for whether these variables are above their medians. I demonstrate robustness to other binarizations in Tables \ref{tab:apps_neg} and \ref{tab:apps_het} in Appendix \ref{app:young}\@.} This simple approach allows me to estimate the conditional first stage at every value of $X$ as the difference in conditional means, $\hat{\omega}(x) = \hat{\e} \left[ D \mid Z=1, X=x \right] - \hat{\e} \left[ D \mid Z=0, X=x \right]$. Subsequently, I report the fraction of these estimates that are opposite in sign (``negative'') to the estimate in the original first stage, which is equivalent to the fraction of observations with negative weights in the usual application of IV (cf.~Theorem \ref{the:justid_cm}). This is analogous to the recommendation of \cite{dCDH2020} to report the fraction of units with negative weights in two-way fixed effects regressions. Similarly, \cite{Semenova2025} reports the fraction of observations with negative predictions in a sample selection context related to mine.

Panel A of Table \ref{tab:apps_1_1} indicates that negative first stages are a common occurrence in recent applications of IV\@. The average fraction of observations with a negative first stage is 21.8\% when using the linear probability model (LPM) to estimate the conditional means in $\omega(x)$ and 17.6\% when using the probit model. After weighting by the inverse of the number of applicable regressions associated with a given paper, these averages increase to 28.5\% and 28.0\%, respectively, giving the average of the within-paper averages.

It may be the case that a portion of the estimated negative first stages is due to noise. However, the regressions in my sample are usually not saturated, which means that the formal test of violations of monotonicity in Remark \ref{rem:montest} is not appropriate. Instead, in my second exercise, I explicitly add interaction terms to each original (linear) first stage and test whether the corresponding coefficients are jointly equal to zero. Under the alternative, the true first stage is heterogeneous, which is a necessary condition for strong monotonicity being false but weak monotonicity being true.

\begin{table}[!t]
\begin{adjustwidth}{-1in}{-1in}
\centering
\begin{threeparttable}
\caption{Main Results on Negative First Stages and First-Stage Heterogeneity\label{tab:apps_1_1}}
\begin{tabular}{>{\centering\arraybackslash}m{7.5cm} >{\centering\arraybackslash}m{2.25cm} >{\centering\arraybackslash}m{2.25cm}}
\hline\hline
    A. Negative First Stages & LPM   & Probit \\
\hline
    Average Share & 0.218 & 0.176 \\
    Weighted Average Share & 0.285 & 0.280 \\
          &       &  \\
\hline
    B. First-Stage Heterogeneity & LPM   & Probit \\
\hline
    Rejected Papers & 22/25 & 19/21 \\
    Average Share of Rejections & 0.715 & 0.749 \\
\hline
\end{tabular}
\begin{footnotesize}
\begin{tablenotes}[flushleft]
\item \textit{Notes:} Panel A reports summary statistics on the fraction of observations for which $\hat{\e} \left[ D \mid Z=1, X=x \right] - \hat{\e} \left[ D \mid Z=0, X=x \right]$ is negative. ``Average Share'' treats every applicable regression equally. ``Weighted Average Share'' weights by the inverse of the number of applicable regressions associated with a given paper. Panel B reports results of Wald tests that the coefficients on the interaction terms in regressions of $D$ on $Z$, $X$, and $ZX$ are jointly equal to zero. ``Rejected Papers'' reports the number of papers for which the Bonferroni $p$-value is less than or equal to 0.05. ``Average Share of Rejections'' reports the average share (across papers) of regressions associated with a given paper for which the corresponding Holm $p$-value is less than or equal to 0.05. $D$ and $Z$ are defined as either the original endogenous explanatory variable and instrument (if they are binary) or indicators for whether these variables are above their medians, subject to a normalization that $Z$ is always associated with a positive estimated coefficient in the linear first stage. Sampling weights and clustered standard errors are used in line with the original papers. Paper-specific results are reported in Table \ref{tab:apps_1_2} in Appendix \ref{app:young}\@.
\end{tablenotes}
\end{footnotesize}
\end{threeparttable}
\end{adjustwidth}
\end{table}

Panel B of Table \ref{tab:apps_1_1} reports the results of this exercise. Using the Bonferroni procedure to account for multiple hypothesis testing separately for each paper, I conclude that 22 of 25 papers have at least one first stage that is heterogeneous. Using the Holm correction, I reject an average of 71.5\% of homogeneous first stages per paper. The last column demonstrates that these conclusions are robust to using the probit instead of the linear probability model (LPM)\@.\footnote{The smaller number of papers under consideration when using the probit model reflects convergence and other estimation problems in the missing specifications.}

\subsection{Reanalysis of \cite{Stevenson2018}}
\label{sec:stevenson}

Now, I turn to a reanalysis of \cite{Stevenson2018}'s study of the effects of pretrial detention on case outcomes. In this application, recently reanalyzed by \cite{Cunningham2021}, \cite{CHMW2024}, and \cite{MT2024}, violations of strong monotonicity are evident, which I will be able to formally demonstrate.

\begin{table}[!t]
\begin{adjustwidth}{-1in}{-1in}
\centering
\begin{threeparttable}
\caption{Eight Judges in \cite{Stevenson2018}\label{tab:stevenson_1}}
\begin{tabular}{>{\centering\arraybackslash}m{2.75cm} >{\centering\arraybackslash}m{1.75cm} >{\centering\arraybackslash}m{1.75cm}}
\hline\hline
     & $N$ & Detention Rate \\
\hline
    Judge A & 21,523 & 0.402 \\
    Judge B & 13,087 & 0.432 \\
    Judge C & 56,585 & 0.395 \\
    Judge D & 33,690 & 0.413 \\
    Judge E & 55,038 & 0.432 \\
    Judge F & 41,475 & 0.413 \\
    Judge G & 56,301 & 0.398 \\
    Judge H & 54,272 & 0.418 \\
\hline
\end{tabular}
\begin{footnotesize}
\begin{tablenotes}[flushleft]
\item \textit{Notes:} The data are \cite{Stevenson2018}'s sample of 331,971 arrests in Philadelphia. $N$ is the number of cases heard by a given judge. ``Detention Rate'' is the proportion of cases heard by a given judge such that the defendant is subsequently detained pretrial.
\end{tablenotes}
\end{footnotesize}
\end{threeparttable}
\end{adjustwidth}
\end{table}

The data are based on the Philadelphia court records and cover 331,971 arrests between 2006 and 2013. The ``treatment'' of interest is pretrial detention or, in other words, whether the defendant was incarcerated in the period between their arrest and disposition; the purpose of such detention is that they appear in court and do not commit another crime. The empirical question is whether pretrial detention has a causal effect on case outcomes, such as conviction and incarceration length. Naturally, pretrial detention is endogenous, and \cite{Stevenson2018}'s identification strategy is based on random assignment of bail magistrates (judges) to cases. These judges have broad authority to set bail---the amount required for pretrial release---at a level they choose. Thus, being assigned a strict judge makes the defendant less likely to be able to pay bail and more likely to be detained.

In Philadelphia, bail hearings usually last one or two minutes, which made it possible for only eight judges to hear all the cases in \cite{Stevenson2018}'s data. Table \ref{tab:stevenson_1} reports the number of cases and the detention rate for each judge. The magistrate I refer to as ``Judge C'' is the most lenient, with a relatively low detention rate of 0.395. In what follows---unlike \cite{Stevenson2018}, who uses the full set of judge indicators as instruments---I focus on a single instrument defined as whether a given case was heard by Judge C\@. A simple regression of pretrial detention on the ``Judge C'' dummy reveals a first stage of --0.0195 with a standard error of 0.0023.

In the present context, strong monotonicity requires that every defendant detained by Judge C would also have been detained by other judges. However, this condition seems implausible, with the likely dimensions of monotonicity violations including the offense type \citep{Stevenson2018} and the defendant's race \citep{ABM2012}. As in \cite{Stevenson2018}, I focus on the seventeen most common offense types.\footnote{These include drug possession, drug sale, aggravated assault, robbery, first offense DUI, simple assault, drug purchase, burglary, shoplifting, theft, marijuana possession, murder, motor vehicle theft, prostitution, third-degree felony firearm possession, second-degree felony firearm possession, and vandalism.} I also consider three racial categories: Black, White, and other. The offense types are not mutually exclusive, which means that, in principle, the sample could be divided into $3 \cdot 2^{17}$ groups based on the defendant's race and the offense type. However, most of these groups are empty, and I also drop nonempty groups with fewer than three cases heard by Judge C or not heard by Judge C\@. As a result, for this specification, the final sample consists of 431 groups and 327,560 cases.

With such a saturated specification, a formal test of violations of strong monotonicity is straightforward, as discussed in Remark \ref{rem:montest}. Given that Judge C is more lenient than others, the overall first stage is negative---not positive, as assumed previously---and the null hypothesis requires that all the group-specific first stages are also non-positive. In other words, the null can be written as
\begin{equation}
H_0 : \; \omega \leq 0
\end{equation}
while the test statistic equals
\begin{equation}
\label{eq:supt2}
T = \max_{1 \leq k \leq K} \frac{\hat{\omega}_k}{\hat{\sigma}_{\hat{\omega}_k}}.
\end{equation}
When I implement this test, I obtain a test statistic of 5.637 and a $p$-value of 3.7e-06, despite accounting for multiple hypothesis testing.\footnote{In this application, the approach of \cite{CCK2019} produces almost identical critical values as the Bonferroni procedure.} In this application, strong monotonicity is clearly rejected. Further details on the group-specific impact of Judge C on pretrial detention are provided in Table \ref{tab:stevenson_2}. Because presenting estimates for 431 groups is impractical, I restrict my attention to twenty groups with the largest (most positive) and ten groups with the smallest (most negative) $z$ statistics. For each group, I report the number of cases, the conditional first stage and its standard error, and the corresponding Holm $p$-value. At any conventional significance level, we can reject that the first stage is non-positive in two groups: defendants charged with burglary and vandalism who are neither Black nor White and Black defendants charged with robbery, simple assault, and theft. In general, a common feature of many of the groups with the largest $z$ statistics is a combination of being charged with a property crime (\textit{e.g.}, theft or burglary) and a violent crime (\textit{e.g.}, simple assault or aggravated assault). In fact, seven of these groups are charged with robbery, which is simultaneously a violent crime and a crime against property.\footnote{This is consistent with \citet[][p.~525]{Stevenson2018}'s account that ``[t]he magistrate that is most lenient overall is actually strictest when it comes to robbery.'' However, the test discussed in Remark \ref{rem:montest} and the results in Table \ref{tab:stevenson_2} are otherwise different from the analysis in \cite{Stevenson2018}.} Many of these groups comprise of defendants who are neither Black nor White. On the other hand, the groups with the smallest $z$ statistics are universally Black or White, and charged with nonviolent crimes.

\renewcommand{\arraystretch}{1.3}

\begin{table}[!p]
\begin{adjustwidth}{-1in}{-1in}
\centering
\begin{threeparttable}
\caption{Conditional First Stages and Violations of Strong Monotonicity in \cite{Stevenson2018}\label{tab:stevenson_2}}
\begin{footnotesize}
\begin{tabular}{>{\centering\arraybackslash}m{10.25cm} >{\centering\arraybackslash}m{1.5cm} >{\centering\arraybackslash}m{1.5cm} >{\centering\arraybackslash}m{1.5cm} >{\centering\arraybackslash}m{1.5cm} >{\centering\arraybackslash}m{1.5cm}}
\hline\hline
    Offense Type & Race & $N$ & $\hat{\omega}_k$ & $\hat{\sigma}_{\hat{\omega}_k}$ & Holm $p$-value \\
\hline
    burglary and vandalism & other & 200   & 0.432*** & 0.077 & 3.73e-06 \\
    robbery, simple assault, and theft & Black & 5,033 & 0.058*** & 0.014 & 0.00869 \\
    aggravated assault, robbery, simple assault, and theft & other & 279   & 0.215** & 0.085 & 1 \\
    first offense DUI and marijuana possession & other & 135   & 0.143** & 0.057 & 1 \\
    drug possession, robbery, simple assault, and theft & Black & 93    & 0.253** & 0.104 & 1 \\
    drug purchase & White & 16    & 0.333** & 0.140 & 1 \\
    drug purchase and marijuana possession & Black & 111   & 0.218** & 0.092 & 1 \\
    robbery, simple assault, theft, and third-degree felony firearm possession & other & 221   & 0.163** & 0.068 & 1 \\
    aggravated assault, drug possession, drug sale, and simple assault & Black & 76    & 0.297** & 0.134 & 1 \\
    aggravated assault, simple assault, and theft & other & 30    & 0.432** & 0.195 & 1 \\
    aggravated assault and simple assault & Black & 14,262 & 0.024** & 0.011 & 1 \\
    aggravated assault, robbery, simple assault, and third-degree felony firearm possession & other & 11    & 0.500** & 0.247 & 1 \\
    theft and vandalism & other & 236   & 0.151** & 0.076 & 1 \\
    burglary, theft, and vandalism & other & 406   & 0.123** & 0.062 & 1 \\
    aggravated assault, first offense DUI, and simple assault & Black & 94    & 0.222* & 0.114 & 1 \\
    burglary, robbery, theft, and third-degree felony firearm possession & Black & 216   & 0.062* & 0.033 & 1 \\
    robbery, simple assault, and theft & other & 865   & 0.079* & 0.043 & 1 \\
    third-degree felony firearm possession & White & 416   & 0.116* & 0.064 & 1 \\
    shoplifting and vandalism & Black & 41    & 0.302* & 0.169 & 1 \\
    burglary and theft & other & 342   & 0.119* & 0.067 & 1 \\
    \ldots & \ldots & \ldots & \ldots & \ldots & \ldots \\
    drug possession and marijuana possession & Black & 8,599 & --0.049*** & 0.011 & 1 \\
    shoplifting & White & 4,132 & --0.088*** & 0.020 & 1 \\
    motor vehicle theft & White & 890   & --0.193*** & 0.041 & 1 \\
    drug possession and drug purchase & Black & 6,885 & --0.070*** & 0.014 & 1 \\
    motor vehicle theft & Black & 2,183 & --0.138*** & 0.026 & 1 \\
    drug possession & White & 10,035 & --0.052*** & 0.010 & 1 \\
    drug possession and drug purchase & White & 7,692 & --0.061*** & 0.011 & 1 \\
    prostitution & Black & 2,967 & --0.120*** & 0.022 & 1 \\
    theft & Black & 5,886 & --0.098*** & 0.017 & 1 \\
    shoplifting & Black & 8,065 & --0.115*** & 0.014 & 1 \\
\hline
\end{tabular}
\begin{tablenotes}[flushleft]
\item \textit{Notes:} The data are \cite{Stevenson2018}'s sample of 331,971 arrests in Philadelphia. The first two columns identify one of the 431 groups based on the offense type and the defendant's race. $N$ is the number of cases in a given group. $\hat{\omega}_k$ is the conditional first stage, that is, the group-specific effect of Judge C on pretrial detention. $\hat{\sigma}_{\hat{\omega}_k}$ is the first-stage standard error. Entries in the table are sorted in descending order of $z = \hat{\omega}_k / \hat{\sigma}_{\hat{\omega}_k}$ and are restricted to twenty groups with the largest (most positive) and ten groups with the smallest (most negative) $z$ statistics. Holm $p$-value equals $\min(1,p^*)$, where $p^*$ is the product of the group-specific $p$-value for a one-sided test, based on $\hat{\omega}_k / \hat{\sigma}_{\hat{\omega}_k}$, and $r_k+1$, where $r_k$ is the number of group-specific $p$-values greater than that for a given group.
\item *Statistically different from zero at the 10\% level; **at the 5\% level; ***at the 1\% level.
\end{tablenotes}
\end{footnotesize}
\end{threeparttable}
\end{adjustwidth}
\end{table}

\renewcommand{\arraystretch}{1}

Because strong monotonicity is rejected, the noninteracted specification cannot be used to estimate a convex combination of conditional LATEs (cf.~Theorem \ref{the:justid_cm}). However, if weak monotonicity is plausible, the interacted specification will be appropriate, at least as long as the interacted instruments are sufficiently strong. In the present context, weak monotonicity seems quite sensible. Given that bail hearings in Philadelphia are extremely short, it is unlikely that more than a handful of factors---such as the offense type and the demographic characteristics of the defendant---could determine the amount of bail and the resulting likelihood of detention.

To incorporate additional factors into the analysis, I also consider two alternative specifications. First, I define the groups based on the offense type, the defendant's race, and their gender (male or female). In theory, the number of groups could be as large as $3 \cdot 2^{18}$ in this specification, but only 563 groups remain after I drop those that are empty or otherwise too small---requiring, as above, that there are at least three observations for every $(G,Z)$ combination. Second, I define the groups based on the offense type, the defendant's race and gender, and three time periods considered by \cite{Stevenson2018}. The relevance of these specific time periods---divided by February 23, 2009 and February 23, 2011---results from concurring changes in the composition of magistrates other than Judge C\@. This sets the maximum number of groups in this specification at $3^2 \cdot 2^{18}$; in practice, the number of groups that are nonempty and sufficiently large is 981.

Table \ref{tab:stevenson_3} reports the main results of my analysis. In panels C and D, for each of the three specifications described above, I report the Bonferroni/\cite{CCK2019} $p$-value for the test of violations of strong monotonicity as well as \cite{MS2022}'s test statistic, $\widetilde{F}$, for the test of weak identification. The test results leave little doubt that strong monotonicity is violated while identification is strong. The $p$-values for the former test never exceed 0.0015, despite accounting for simultaneous testing of up to 981 hypotheses.\footnote{In the second and third specifications, the largest $z$ statistic is obtained in very small groups, which makes the normal approximation questionable. However, the rejection of strong monotonicity remains solid. The smallest Holm $p$-values in groups with at least 100 cases are 0.0109 and 0.0105 in the second and third specifications, respectively. The corresponding smallest Holm $p$-values in groups with at least 500 cases are 0.0408 and 0.0105. Note that these $p$-values are conservative, because they implicitly penalize hypothesis testing in groups smaller than 100 or 500 cases, even though such groups are ignored in this context.} The values of $\widetilde{F}$ range between 19.32 and 21.56. If the simulations in Section \ref{sec:simul} are any guide, we should expect negligible bias when estimating the interacted specification, at least when using the jackknife-type estimators such as IJIVE, UJIVE, and FEJIV\@.

Panels A and B of Table \ref{tab:stevenson_3} report OLS, IV, 2SLS, IJIVE, UJIVE, and FEJIV estimates of the effects of pretrial detention on conviction and incarceration length. The noninteracted specification, marked as ``IV,'' suggests that pretrial detention leads to a 17--19 p.p.~increase in the likelihood of being convicted and an increase in incarceration length of 670--720 days. Such effects would have been substantial, but the validity of these estimates is questionable given the clear rejection of strong monotonicity in this application. When we turn to the interacted specification, the estimates become smaller. The effects on conviction are closer to zero---in the range of 4 to 15 p.p.---but often remain significant.\footnote{In the case of FEJIV, I report the standard errors derived by \cite{CSW2023}. Practitioners should also consider a recent alternative proposed by \cite{BN2024}, which explicitly accounts for treatment effect heterogeneity.} The effects on incarceration length are much smaller than in the noninteracted specification and suggestive of an effect of 50--160 days. These estimates are also usually not significantly different from zero, except for 2SLS\@. For both outcomes and each specification, the IJIVE, UJIVE, and FEJIV estimates are practically indistinguishable from each other but also clearly different from the 2SLS estimate.

\begin{table}[!tb]
\begin{adjustwidth}{-1in}{-1in}
\centering
\begin{threeparttable}
\caption{Causal Effects of Pretrial Detention on Conviction and Incarceration Length\label{tab:stevenson_3}}
\begin{small}
\begin{tabular}{cc >{\centering\arraybackslash}m{1.7cm} >{\centering\arraybackslash}m{1.7cm} c >{\centering\arraybackslash}m{1.7cm} >{\centering\arraybackslash}m{1.7cm} c >{\centering\arraybackslash}m{1.7cm} >{\centering\arraybackslash}m{1.7cm}}
\hline\hline
          &       & \multicolumn{2}{c}{Specification \#1} &       & \multicolumn{2}{c}{Specification \#2} &       & \multicolumn{2}{c}{Specification \#3} \\
\hline
    A. Effects on Conviction &       & $\hat{\beta}$ & $\hat{\sigma}_{\hat{\beta}}$ &       & $\hat{\beta}$ & $\hat{\sigma}_{\hat{\beta}}$ &       & $\hat{\beta}$ & $\hat{\sigma}_{\hat{\beta}}$ \\
\hline
    OLS   &       & 0.0591*** & 0.0019 &       & 0.0567*** & 0.0019 &       & 0.0530*** & 0.0019 \\
    IV    &       & 0.1852* & 0.1070 &       & 0.1920* & 0.1022 &       & 0.1704 & 0.1039 \\
    2SLS  &       & 0.1116*** & 0.0360 &       & 0.1193*** & 0.0325 &       & 0.0610** & 0.0271 \\
    IJIVE &       & 0.1283** & 0.0585 &       & 0.1433*** & 0.0540 &       & 0.0429 & 0.0559 \\
    UJIVE &       & 0.1304** & 0.0616 &       & 0.1466** & 0.0582 &       & 0.0419 & 0.0589 \\
    FEJIV &       & 0.1292** & 0.0604 &       & 0.1451** & 0.0569 &       & 0.0417 & 0.0574 \\
          &       &       &       &       &       &       &       &       &  \\
\hline
    B. Effects on Incarceration Length &       & $\hat{\beta}$ & $\hat{\sigma}_{\hat{\beta}}$ &       & $\hat{\beta}$ & $\hat{\sigma}_{\hat{\beta}}$ &       & $\hat{\beta}$ & $\hat{\sigma}_{\hat{\beta}}$ \\
\hline
    OLS   &       & 184*** & 3     &       & 176*** & 2     &       & 173*** & 3 \\
    IV    &       & 689*** & 249   &       & 723*** & 237   &       & 666*** & 233 \\
    2SLS  &       & 158*** & 47    &       & 159*** & 41    &       & 130*** & 43 \\
    IJIVE &       & 95    & 76    &       & 124*  & 68    &       & 58    & 79 \\
    UJIVE &       & 93    & 83    &       & 123   & 75    &       & 56    & 99 \\
    FEJIV &       & 92    & 78    &       & 122*  & 70    &       & 51    & 91 \\
          &       &       &       &       &       &       &       &       &  \\
\hline
    C. Test of Violations of Monotonicity &       &       &       &       &       &       &       &       &  \\
\hline
    $p$-value &       & \multicolumn{2}{c}{3.73e-06} &       & \multicolumn{2}{c}{0.00133} &       & \multicolumn{2}{c}{1.83e-08} \\
          &       &       &       &       &       &       &       &       &  \\
\hline
    D. Pretest for Weak Identification &       &       &       &       &       &       &       &       &  \\
\hline
    $\widetilde{F}$ &       & \multicolumn{2}{c}{21.38} &       & \multicolumn{2}{c}{21.56} &       & \multicolumn{2}{c}{19.32} \\
          &       &       &       &       &       &       &       &       &  \\
\hline
    Number of Groups &       & \multicolumn{2}{c}{431} &       & \multicolumn{2}{c}{563} &       & \multicolumn{2}{c}{981} \\
    Number of Observations &       & \multicolumn{2}{c}{327,560} &       & \multicolumn{2}{c}{325,915} &       & \multicolumn{2}{c}{319,573} \\
\hline
\end{tabular}
\end{small}
\begin{footnotesize}
\begin{tablenotes}[flushleft]
\item \textit{Notes:} The data are \cite{Stevenson2018}'s sample of 331,971 arrests in Philadelphia. The outcomes are conviction (Panel A) or incarceration length (Panel B), defined as the maximum days of an incarceration sentence. The treatment is pretrial detention. The instrument is whether a given case was heard by Judge C\@. Each specification is based on a division of the sample into a number of mutually exclusive groups, with a separate group for each combination of values of selected variables. Specification \#1 uses the offense type and race (Black, White, or other) of the defendant. Specification \#2 uses the offense type, race, and gender (male or female) of the defendant. Specification \#3 uses the offense type, race and gender of the defendant, and three time periods considered by \cite{Stevenson2018}. Groups with fewer than three observations in either $(G,Z)$ combination are dropped. ``OLS'' is the OLS estimator in the regression of the outcome on the treatment indicator and group indicators. ``IV'' is the IV estimator in the noninteracted specification. The remaining estimators are based on the interacted specification and are described in Section \ref{sec:manyiv}. The test of violations of monotonicity is described in Remark \ref{rem:montest} and reports the Bonferroni/\cite{CCK2019} $p$-values. The pretest for weak identification follows \cite{MS2022} and reports their test statistic, $\widetilde{F}$\@. The cutoff for this test is 4.14. See also the Stata implementation in \cite{Sun2023}.
\item *Statistically different from zero at the 10\% level; **at the 5\% level; ***at the 1\% level.
\end{tablenotes}
\end{footnotesize}
\end{threeparttable}
\end{adjustwidth}
\end{table}

My conclusions are generally in line with \cite{Stevenson2018}, whose paper includes a relatively rare recent example of using specifications with interacted instruments (cf.~Remark \ref{rem:lackofoverid}), although not of \citetalias{AI1995}'s interacted specification, which is implemented here. I also provide additional results in Appendix \ref{app:stevenson}\@. Table \ref{tab:stevenson_extra} reports MB2SLS and JIVE estimates of the effects of pretrial detention. While the MB2SLS estimates are largely similar to the results in Table \ref{tab:stevenson_3}, the JIVE estimates are noisy and appear unreliable. Table \ref{tab:stevenson_boot1} reports the results of a bootstrap test for comparisons between the IV estimates in the noninteracted specification and various estimates of the interacted specification.\footnote{I perform this test for 2SLS, MB2SLS, JIVE, and UJIVE, but not IJIVE and FEJIV, because the latter estimators are very computationally demanding in the specifications that I consider, at least in my implementation.} At the 5\% level, I nearly always reject the null that the estimands are the same in the case of incarceration length but not conviction. Finally, Table \ref{tab:stevenson_boot2} reports the results of a similar bootstrap test for comparisons between 2SLS and other estimators of the interacted specification. These differences are often highly statistically significant, which reaffirms the importance of correcting for the many instrument bias.

\section{Conclusion}
\label{sec:conclusion}

In this paper I studied the interpretation of linear IV and 2SLS estimands when both the treatment and the instrument are binary, and when additional covariates are required for identification. Using the LATE framework of \cite{IA1994} and \cite{AIR1996}, I argued that the common practice of interpreting standard IV estimands as a convex combination of conditional LATEs, or even as the (unconditional) local average treatment effect, is substantially more problematic than previously thought. I showed that the interpretation of the usual application of IV, which limits the effects of the instrument in the reduced-form and first-stage regressions to be homogeneous, hinges critically on the specific variant of the monotonicity assumption that the researcher is willing to entertain. Under ``weak monotonicity,'' some of the IV weights may be negative and the IV estimand may no longer be interpretable as a causal effect.

What should applied researchers do in practice? In this paper I argued that it might be worthwhile to revisit the interacted specification of \cite{AI1995}, which is guaranteed to eliminate negative weights under the same assumptions that are problematic for the usual application of IV\@. Specifications with many interacted instruments were used in influential papers by \cite{Angrist1990} and \cite{AK1991} but appear to have been largely abandoned in subsequent work out of concern for the many instrument bias. Unsurprisingly, however, the modern tools to estimate such specifications are substantially better than in the 1990s, as I also demonstrate in an extensive simulation study. A pretest for weak identification developed by \cite{MS2022} can be used to determine whether consistent estimation of the interacted specification is possible. When the pretest rejects, several jackknife-type estimators can be used, including the FEJIV estimator of \cite{CSW2023}, which I also implement in the companion MATLAB, R, and Stata packages, \texttt{fejiv}.

There are at least two important situations when this recommendation will not be satisfactory. First, in some applications in which strong monotonicity is rejected, weak monotonicity will be implausible, too. If this is the case, it may be worthwhile to instead consider the partial identification approach of \cite{Noack2021}, which evaluates the sensitivity of what can be learned about the local average treatment effect under violations of (weak) monotonicity. Second, a convex combination of conditional LATEs, which \cite{AI1995}'s specification is guaranteed to produce under weak monotonicity (and the usual application of IV under strong monotonicity), may be considered an imperfect substitute for the (unconditional) local average treatment effect.\footnote{A similar argument is made by \cite{CGBS2024} in the context of difference-in-differences designs.} If this is the case, there are many existing estimators that are consistent for the LATE under strong monotonicity (see, \textit{e.g.}, \citealp{SUW_kappa}, and the references therein). An important avenue for future research is to develop estimators of the LATE that are also robust to weak monotonicity.

\singlespacing

\setlength\bibsep{0pt}
\bibliographystyle{ECA_ff}
\bibliography{LATE_references}

\begin{appendices}

\renewcommand{\contentsname}{Appendices}
\renewcommand{\refname}{Appendix References}
\renewcommand{\thesection}{\Alph{section}}
\renewcommand{\thesubsection}{\Alph{section}\arabic{subsection}}

\onehalfspacing

\section{Proofs}
\label{app:proofs}

\setcounter{equation}{0}
\renewcommand{\theequation}{A.\arabic{equation}}

\paragraph{Proof of Theorem \ref{the:ai1995}.}

Lemma \ref{lem:ai1995} states that $\beta_{\mathrm{2SLS}} = \frac{\e \left[ \sigma^2(X) \cdot \tau(X) \right]}{\e \left[ \sigma^2(X) \right]}$. It remains to show that $\sigma^2(X) = \left[ \pi(X) \right] ^2 \cdot \var \left[ Z \mid X \right]$. Indeed, it follows from the definition of $\sigma^2(X)$, equation (\ref{eq:fs}), and iterated expectations that $\sigma^2(X) = \left[ \omega(X) \right] ^2 \cdot \var \left[ Z \mid X \right]$. Then, it follows from Lemma \ref{lem:identify} that $\sigma^2(X) = \left[ \pi(X) \right] ^2 \cdot \var \left[ Z \mid X \right]$ because $\left[ \omega(X) \right] ^2 = \left[ \pi(X) \right] ^2$ under Assumptions \ref{ass:iv} and \ref{ass:cm}\@.

\bigskip

\paragraph{Proof of Theorem \ref{the:justid_cm}.}

Let $R$ and $T$ be generic notation for two random variables, where $T$ is binary and $R$ is arbitrarily discrete or continuous. The following lemma, due to \cite{Angrist1998}, will be useful for what follows.

\begin{lemma}[\citealp{Angrist1998}]
\label{lem:angrist1998}
Suppose that $\e \left[ T \mid X \right]$ is linear in $X$\@. Then, $\xi$, the coefficient on $T$ in the linear projection of $R$ on $T$ and $X$ can be written as
\begin{equation*}
\xi = \frac{\e \left[ \var \left[ T \mid X \right] \cdot \xi(X) \right]}{\e \left[ \var \left[ T \mid X \right] \right]},
\end{equation*}
where $\xi(X) = \e \left[ R \mid T=1, X \right] - \e \left[ R \mid T=0, X \right]$.
\end{lemma}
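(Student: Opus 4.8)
The plan is to appeal to the Frisch--Waugh--Lovell theorem to reduce the multivariate projection of $R$ on $(T,X)$ to a univariate one, and then to exploit the two structural features at hand: that $T$ is binary and that $\e[T\mid X]$ is linear in $X$. First I would let $L(\cdot\mid X)$ denote the linear projection onto $X$ (which, recall, already contains a constant) and set $\tilde T = T - L(T\mid X)$. By Frisch--Waugh--Lovell, $\xi$ is the coefficient on $\tilde T$ in the projection of $R$ on $\tilde T$; since $\tilde T$ is orthogonal to every linear function of $X$, the projection residual of $R$ can be replaced by $R$ itself, so $\xi = \e[R\tilde T]/\e[\tilde T^{2}]$. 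Here the linearity assumption does its work: it forces $L(T\mid X)=\e[T\mid X]$, hence $\tilde T = T - \e[T\mid X]$ and $\e[\tilde T\mid X]=0$, which gives $\e[\tilde T]=0$ and, crucially, makes $\tilde T$ orthogonal to \emph{every} function of $X$, not merely the linear ones.

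For the denominator, iterated expectations together with $\e[\tilde T\mid X]=0$ give $\e[\tilde T^{2}] = \e\big[\e[\tilde T^{2}\mid X]\big] = \e\big[\var[T\mid X]\big]$. For the numerator I would condition on $X$ and write $p(X)=\e[T\mid X]=\pr[T=1\mid X]$, so that $\var[T\mid X]=p(X)\big(1-p(X)\big)$. Because $T\in\{0,1\}$, one has $\e[RT\mid X]=\e[R\mid T=1,X]\,p(X)$ and $\e[R\mid X]=\e[R\mid T=1,X]\,p(X)+\e[R\mid T=0,X]\big(1-p(X)\big)$, and a one-line computation then yields
\begin{equation*}
\e[R\tilde T\mid X] \;=\; \e[RT\mid X] - p(X)\,\e[R\mid X] \;=\; p(X)\big(1-p(X)\big)\big(\e[R\mid T=1,X]-\e[R\mid T=0,X]\big) \;=\; \var[T\mid X]\cdot\xi(X).
\end{equation*}
Taking expectations gives $\e[R\tilde T] = \e\big[\var[T\mid X]\cdot\xi(X)\big]$, and dividing by the denominator delivers the claimed identity.

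There is no genuine obstacle here — the argument is routine — but two points warrant care. The first is the use of the linearity of $\e[T\mid X]$ to identify the projection residual with $T-\e[T\mid X]$; without it, partialling $X$ out would leave an extra projection-error term and the clean conditional-variance weights would not appear. The second is the use of the binary structure of $T$ in the numerator step to rewrite $\e[RT\mid X]$ as $\e[R\mid T=1,X]\,\pr[T=1\mid X]$. Since the result is due to \cite{Angrist1998}, I would expect the paper either to reproduce this short derivation for completeness or to cite it directly; in either case it then feeds into the proof of Theorem \ref{the:justid_cm} by taking $T=Z$ (whose conditional mean is linear by Assumption \ref{ass:ps}) and $R=Y$ and $R=D$ in turn, and combining the resulting reduced-form and first-stage representations with Lemma \ref{lem:identify}.
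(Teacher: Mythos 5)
Your proof is correct. The paper states Lemma \ref{lem:angrist1998} without proof, citing \cite{Angrist1998} directly; your Frisch--Waugh--Lovell derivation --- partialling out $X$, using linearity of $\e\left[ T \mid X \right]$ to identify the projection residual with $T - \e\left[ T \mid X \right]$, and then conditioning on $X$ and exploiting that $T$ is binary to obtain $\e\left[ R \tilde{T} \mid X \right] = \var\left[ T \mid X \right] \cdot \xi(X)$ --- is precisely the standard argument behind that result, so there is no gap and nothing that diverges from the route the paper relies on.
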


\noindent
Recall that $\beta_{\mathrm{IV}}$ is equal to the ratio of the reduced-form and first-stage coefficients on $Z$\@. It follows that we can apply Lemma \ref{lem:angrist1998} separately to these two coefficients, and thereby obtain the following expression for the estimand of interest:
\begin{equation}
\beta_{\mathrm{IV}} = \frac{\frac{\e \left[ \var \left[ Z \mid X \right] \cdot \phi(X) \right]}{\e \left[ \var \left[ Z \mid X \right] \right]}}{\frac{\e \left[ \var \left[ Z \mid X \right] \cdot \omega(X) \right]}{\e \left[ \var \left[ Z \mid X \right] \right]}},
\end{equation}
where
\begin{equation}
\label{eq:crf}
\phi(x) = \e \left[ Y \mid Z=1, X=x \right] - \e \left[ Y \mid Z=0, X=x \right]
\end{equation}
is the conditional reduced-form slope coefficient and $\omega(x)$ is as defined in equation (\ref{eq:cfs}). Upon rearrangement, we obtain
\begingroup
\jot=8pt
\begin{eqnarray}
\beta_{\mathrm{IV}} &=& \frac{\e \left[ \var \left[ Z \mid X \right] \cdot \phi(X) \right]}{\e \left[ \var \left[ Z \mid X \right] \cdot \omega(X) \right]}
\nonumber\\
\label{eq:ivdesc}
&=& \frac{\e \left[ \var \left[ Z \mid X \right] \cdot \omega(X) \cdot \beta(X) \right]}{\e \left[ \var \left[ Z \mid X \right] \cdot \omega(X) \right]},
\end{eqnarray}
\endgroup
where the second equality uses the definition of $\beta(x)$ in equation (\ref{eq:wald}). See also \cite{Walters2018} for a similar argument. Finally, we know from Lemma \ref{lem:identify} that $\beta(x) = \tau(x)$ and $\omega(x) = c(x) \cdot \pi(x)$ under Assumptions \ref{ass:iv} and \ref{ass:cm}\@. This completes the proof because $\beta_{\mathrm{IV}}$ can now be written as
\begin{equation}
\beta_{\mathrm{IV}} = \frac{\e \left[ c(X) \cdot \pi(X) \cdot \var \left[ Z \mid X \right] \cdot \tau(X) \right]}{\e \left[ c(X) \cdot \pi(X) \cdot \var \left[ Z \mid X \right] \right]}.
\end{equation}

\bigskip

\paragraph{Alternative Proof of Theorem \ref{the:justid_cm}.}

The following proof of Theorem \ref{the:justid_cm} uses \cite{Kolesar2013}'s result in Remark \ref{rem:kolesar}. Let us begin by restating the representation of two-step IV estimands in equation (\ref{eq:kolesar}):
\begin{equation}
\label{eq:kolesar2}
\beta_{\mathrm{TSIV}} = \int \sum_{j=1}^{J_x-1} \frac{\theta_j(x)}{\int \sum_{j=1}^{J_x-1} \theta_j(x) \, \mathrm{d} F^X(x)} \, \tau(p_{j,x};x) \, \mathrm{d} F^X(x).
\end{equation}
The notation is the same as in Remark \ref{rem:kolesar}. The representation in equation (\ref{eq:kolesar2}) is appropriate for any two-step IV estimand (\textit{e.g.}, 2SLS) which uses $Z_{\mathrm{G}} = z_{\mathrm{G}}(X,Z)$ as instruments, as long as $D$ is binary, $Z$ is discrete, and the relevant assumptions are satisfied. When $Z$ is binary and $Z_{\mathrm{G}} = Z$, we get $\tau(p_{1,x};x) = \tau(x)$ and $\beta_{\mathrm{TSIV}} = \beta_{\mathrm{IV}}$, and we can use equation (\ref{eq:kolesar2}) to write
\begingroup
\allowdisplaybreaks
\begin{eqnarray}
\beta_{\mathrm{IV}} &=& \int \frac{\theta_1(x)}{\int \theta_1(x) \, \mathrm{d} F^X(x)} \, \tau(x) \, \mathrm{d} F^X(x)
\nonumber\\
&=& \frac{\e \left[ \theta_1(X) \cdot \tau(X) \right]}{\e \left[ \theta_1(X) \right]},
\end{eqnarray}
\endgroup
where
\begingroup
\allowdisplaybreaks
\begin{eqnarray}
\theta_1(x) &=& \left( p_{2,x} - p_{1,x} \right) \cdot \pr \left[ P > p_{1,x} \mid X=x \right] \cdot \e \left[ \tilde{P}^L \mid X=x, P > p_{1,x} \right]
\nonumber\\
&=& \big\lvert \, \e \left[ D \mid Z=1, X=x \right] - \e \left[ D \mid Z=0, X=x \right] \big\rvert \cdot \pr \left[ P > p_{1,x} \mid X=x \right]
\nonumber\\
&& \; \cdot \; \e \left[ \tilde{P}^L \mid X=x, P > p_{1,x} \right]
\nonumber\\
&=& \pi(x) \cdot \Big( 1 [ \omega(x) > 0 ] \cdot \pr \left[ Z=1 \mid X=x \right] \cdot \e \left[ \tilde{P}^L \mid X=x, Z=1 \right]
\nonumber\\
&& \; + \; 1 [ \omega(x) < 0 ] \cdot \pr \left[ Z=0 \mid X=x \right] \cdot \e \left[ \tilde{P}^L \mid X=x, Z=0 \right] \Big).
\end{eqnarray}
\endgroup
Next, if $Z_{\mathrm{G}} = Z$, we get $\tilde{P}^L = \mathrm{L} \left[ D \mid Z,X \right] - \mathrm{L} \left[ D \mid X \right]$. If we write $\mathrm{L} \left[ D \mid Z,X \right] = Z \delta + X \zeta$, then $\mathrm{L} \left[ D \mid X \right] = \mathrm{L} \left[ Z \mid X \right] \delta + X \zeta$, which implies that, under Assumption \ref{ass:ps}, $\tilde{P}^L = \left( Z - \mathrm{L} \left[ Z \mid X \right] \right) \delta = \left( Z - \e \left[ Z \mid X \right] \right) \delta$. It follows that $\e \left[ \tilde{P}^L \mid X, Z=1 \right] = \left( 1 - \e \left[ Z \mid X \right] \right) \delta = \pr \left[ Z=0 \mid X \right] \cdot \delta$ and $\e \left[ \tilde{P}^L \mid X, Z=0 \right] = \left( 0 - \e \left[ Z \mid X \right] \right) \delta = - \pr \left[ Z=1 \mid X \right] \cdot \delta$, and further that
\begingroup
\allowdisplaybreaks
\begin{eqnarray}
\theta_1(x) &=& \pi(x) \cdot \Big( 1 [ \omega(x) > 0 ] \cdot \pr \left[ Z=1 \mid X=x \right] \cdot \pr \left[ Z=0 \mid X=x \right] \cdot \delta
\nonumber\\
&& \; - \; 1 [ \omega(x) < 0 ] \cdot \pr \left[ Z=0 \mid X=x \right] \cdot \pr \left[ Z=1 \mid X=x \right] \cdot \delta \Big)
\nonumber\\
&=& \pi(x) \cdot c(x) \cdot \var \left[ Z \mid X=x \right] \cdot \delta,
\end{eqnarray}
\endgroup
which finally implies that
\begingroup
\allowdisplaybreaks
\begin{eqnarray}
\beta_{\mathrm{IV}} &=& \frac{\e \left[ c(X) \cdot \pi(X) \cdot \var \left[ Z \mid X \right] \cdot \delta \cdot \tau(X) \right]}{\e \left[ c(X) \cdot \pi(X) \cdot \var \left[ Z \mid X \right] \cdot \delta \right]}
\nonumber\\
&=& \frac{\e \left[ c(X) \cdot \pi(X) \cdot \var \left[ Z \mid X \right] \cdot \tau(X) \right]}{\e \left[ c(X) \cdot \pi(X) \cdot \var \left[ Z \mid X \right] \right]}.
\end{eqnarray}
\endgroup
This completes the proof.

\bigskip

\paragraph{Proof of Corollary \ref{cor:justid_um}.}

Recall that Assumption \ref{ass:um} is a special case of Assumption \ref{ass:cm} where the existence of compliers but no defiers is postulated at all covariate values and the existence of defiers but no compliers everywhere else (\textit{i.e.}~on an empty set). Thus, it follows from Theorem \ref{the:justid_cm} that, under Assumptions \ref{ass:iv}, \ref{ass:um}, and \ref{ass:ps}, $\beta_{\mathrm{IV}} = \frac{\e \left[ c(X) \cdot \pi(X) \cdot \var \left[ Z \mid X \right] \cdot \tau(X) \right]}{\e \left[ c(X) \cdot \pi(X) \cdot \var \left[ Z \mid X \right] \right]}$ and $c(X)=1$ a.s.

\bigskip

\paragraph{Reordered IV.}

Remark \ref{rem:riv} suggests using $Z_{\mathrm{R}} = 1 [ \omega(X) > 0 ] \cdot Z + 1 [ \omega(X) < 0 ] \cdot \left( 1-Z \right)$ as a new, ``reordered'' instrument in a noninteracted specification. This instrument is binary and takes the value 1 if either $Z=1$ and $\omega(X) > 0$ or $Z=0$ and $\omega(X) < 0$; it also takes the value 0 if either $Z=0$ and $\omega(X) > 0$ or $Z=1$ and $\omega(X) < 0$. It follows that $Z_{\mathrm{R}}$ takes the value 1 for this value of the original instrument that encourages treatment conditional on $X$ and the value 0 otherwise. When we construct the linear IV estimand using $Z_{\mathrm{R}}$ rather than $Z$, we obtain
\begin{equation}
\label{eq:estimand_riv}
\beta_{\mathrm{RIV}} = \left[ \left( \e \left[ Q_{\mathrm{R}}^{\prime} W \right] \right) ^{-1} \e \left[ Q_{\mathrm{R}}^{\prime} Y \right] \right] _1,
\end{equation}
where $Q_{\mathrm{R}} = \left( Z_{\mathrm{R}}, X \right)$ and, as before, $W = \left( D, X \right)$. Formally, we establish the following result.

\begin{corollary}[Reordered IV]
\label{cor:riv}
Suppose that Assumptions \ref{ass:iv} and \ref{ass:cm} hold. Suppose further that $\e \left[ Z_{\mathrm{R}} \mid X \right] = X \alpha_{\mathrm{R}}$\@. Then
\begin{equation*}
\beta_{\mathrm{RIV}} = \frac{\e \left[ \pi(X) \cdot \var \left[ Z \mid X \right] \cdot \tau(X) \right]}{\e \left[ \pi(X) \cdot \var \left[ Z \mid X \right] \right]}.
\end{equation*}
\end{corollary}
\noindent
\emph{Proof.} The assumption that the conditional mean of the instrument is linear in $X$ underlies the proof of Theorem \ref{the:justid_cm}, including equation (\ref{eq:ivdesc}). Under this assumption, we can use equation (\ref{eq:ivdesc}) to write
\begin{equation}
\beta_{\mathrm{RIV}} = \frac{\e \left[ \var \left[ Z_{\mathrm{R}} \mid X \right] \cdot \omega_{\mathrm{R}}(X) \cdot \beta_{\mathrm{R}}(X) \right]}{\e \left[ \var \left[ Z_{\mathrm{R}} \mid X \right] \cdot \omega_{\mathrm{R}}(X) \right]},
\end{equation}
where
\begin{equation}
\omega_{\mathrm{R}}(x) = \e \left[ D \mid Z_{\mathrm{R}}=1, X=x \right] - \e \left[ D \mid Z_{\mathrm{R}}=0, X=x \right]
\end{equation}
and
\begin{equation}
\beta_{\mathrm{R}}(x) = \frac{\phi_{\mathrm{R}}(x)}{\omega_{\mathrm{R}}(x)},
\end{equation}
where
\begin{equation}
\phi_{\mathrm{R}}(x) = \e \left[ Y \mid Z_{\mathrm{R}}=1, X=x \right] - \e \left[ Y \mid Z_{\mathrm{R}}=0, X=x \right].
\end{equation}
Then, it is important to see that $\omega_{\mathrm{R}}(x) = \omega(x)$ and $\phi_{\mathrm{R}}(x) = \phi(x)$ if $\omega(x)>0$, $\omega_{\mathrm{R}}(x) = -\omega(x)$ and $\phi_{\mathrm{R}}(x) = -\phi(x)$ if $\omega(x)<0$, and consequently $\beta_{\mathrm{R}}(x) = \beta(x)$ regardless of the sign of $\omega(x)$. We can also write $\omega_{\mathrm{R}}(x) = c(x) \cdot \omega(x)$, $\phi_{\mathrm{R}}(x) = c(x) \cdot \phi(x)$, and $\var \left[ Z_{\mathrm{R}} \mid X=x \right] = \var \left[ Z \mid X=x \right]$ regardless of the sign of $\omega(x)$. It follows that
\begin{equation}
\beta_{\mathrm{RIV}} = \frac{\e \left[ \var \left[ Z \mid X \right] \cdot c(X) \cdot \omega(X) \cdot \beta(X) \right]}{\e \left[ \var \left[ Z \mid X \right] \cdot c(X) \cdot \omega(X) \right]}.
\end{equation}
To complete this proof, note that, under Assumptions \ref{ass:iv} and \ref{ass:cm}, we know from Lemma \ref{lem:identify} that $\beta(x) = \tau(x)$ and $\omega(x) = c(x) \cdot \pi(x)$. Also, $\left[ c(x) \right] ^2 = 1$ because $c(x) \in \{ -1,1 \}$. Thus, it follows that
\begingroup
\allowdisplaybreaks
\begin{eqnarray}
\beta_{\mathrm{RIV}} &=& \frac{\e \left[ \var \left[ Z \mid X \right] \cdot \left[ c(X) \right] ^2 \cdot \pi(X) \cdot \tau(X) \right]}{\e \left[ \var \left[ Z \mid X \right] \cdot \left[ c(X) \right] ^2 \cdot \pi(X) \right]}
\nonumber\\
&=&
\frac{\e \left[ \var \left[ Z \mid X \right] \cdot \pi(X) \cdot \tau(X) \right]}{\e \left[ \var \left[ Z \mid X \right] \cdot \pi(X) \right]}.
\end{eqnarray}
\endgroup
This completes the proof.

\pagebreak

\section{Simulations}
\label{app:simul}

\vspace{4.175cm}

\setcounter{table}{0}
\renewcommand{\thetable}{\ref{app:simul}.\arabic{table}}

\begin{table}[!h]
\begin{adjustwidth}{-1in}{-1in}
\centering
\begin{threeparttable}
\caption{Simulation Results for $K=250$, ``Strong'' IV, and No Monotonicity Violations\label{tab:sim_250_5}}
\begin{footnotesize}
\begin{tabular}{c>{\centering\arraybackslash}m{0.05cm} >{\centering\arraybackslash}m{1.1425cm} >{\centering\arraybackslash}m{1.1425cm} >{\centering\arraybackslash}m{1.1425cm} >{\centering\arraybackslash}m{0.05cm} >{\centering\arraybackslash}m{1.1425cm} >{\centering\arraybackslash}m{1.1425cm} >{\centering\arraybackslash}m{1.1425cm} >{\centering\arraybackslash}m{0.05cm} >{\centering\arraybackslash}m{1.1425cm} >{\centering\arraybackslash}m{1.1425cm} >{\centering\arraybackslash}m{1.1425cm}}
\hline\hline
          &       & \multicolumn{3}{c}{$N=3{,}000$} &       & \multicolumn{3}{c}{$N=10{,}000$} &       & \multicolumn{3}{c}{$N=50{,}000$} \\
\hline
    A. Estimator Performance &       & Bias  & Median Bias & MSE   &       & Bias  & Median Bias & MSE   &       & Bias  & Median Bias & MSE \\
\hline
    OLS   &       & --0.670 & --0.669 & 3.113 &       & --0.670 & --0.669 & 12.66 &       & --0.669 & --0.669 & 163.1 \\
    IV    &       & 0.005 & 0.000     & 0.121 &       & --0.001 & --0.001 & 0.128 &       & 0.000     & 0.000     & 0.325 \\
    2SLS  &       & --0.371 & --0.372 & 1.000     &       & --0.179 & --0.180 & 1.000     &       & --0.044 & --0.044 & 1.000 \\
    MB2SLS &       & 0.261 & 0.228 & 1.140  &       & 0.025 & 0.023 & 0.218 &       & 0.004 & 0.004 & 0.368 \\
    JIVE  &       & 2.900   & 0.700   & 5.1e+04 &       & 0.393 & 0.383 & 5.073 &       & 0.055 & 0.055 & 1.546 \\
    IJIVE &       & --0.049 & --0.054 & 0.252 &       & --0.006 & --0.010 & 0.182 &       & 0.000     & 0.000     & 0.357 \\
    UJIVE &       & 0.024 & 0.013 & 0.327 &       & 0.001 & --0.002 & 0.186 &       & 0.000     & 0.000     & 0.357 \\
    FEJIV &       & 0.028 & 0.010  & 0.419 &       & 0.001 & 0.000     & 0.188 &       & 0.000     & 0.000     & 0.357 \\
          &       &       &       &       &       &       &       &       &       &       &       &  \\
\hline
    B. Pretest for Weak Identification &       &       &       &       &       &       &       &       &       &       &       &  \\
\hline
    Average $\widetilde{F}$ &       &       & 11.30  &       &       &       & 33.99 &       &       &       & 173.25 &  \\
    $q_{0.05}$ &       &       & 8.04  &       &       &       & 28.50  &       &       &       & 161.83 &  \\
    $q_{0.95}$ &       &       & 14.82 &       &       &       & 39.74 &       &       &       & 184.88 &  \\
\hline
\end{tabular}
\begin{tablenotes}[flushleft]
\item \textit{Notes:} The underlying data-generating process is described in Section \ref{sec:simul}. ``OLS'' is the OLS estimator in the regression of the outcome on the treatment indicator and group indicators. ``IV'' is the IV estimator in the noninteracted specification. The remaining estimators are based on the interacted specification and are described in Section \ref{sec:manyiv}. JIVE, IJIVE, and UJIVE are computed after dropping all groups with fewer than two observations in either $(X,Z)$ combination. FEJIV is computed after dropping all groups with fewer than three observations in either $(X,Z)$ combination. The pretest for weak identification follows \cite{MS2022}; see also the Stata implementation in \cite{Sun2023}. Bias and median bias are reported as the proportion of the target parameter. MSE is normalized by the MSE of 2SLS\@. Results are based on 1,000 replications. Pretest results are based on 250 replications.
\end{tablenotes}
\end{footnotesize}
\end{threeparttable}
\end{adjustwidth}
\end{table}

\begin{table}[!p]
\begin{adjustwidth}{-1in}{-1in}
\centering
\begin{threeparttable}
\caption{Simulation Results for $K=250$, ``Strong'' IV, and Moderate Monotonicity Violations\label{tab:sim_250_6}}
\begin{footnotesize}
\begin{tabular}{c>{\centering\arraybackslash}m{0.05cm} >{\centering\arraybackslash}m{1.1425cm} >{\centering\arraybackslash}m{1.1425cm} >{\centering\arraybackslash}m{1.1425cm} >{\centering\arraybackslash}m{0.05cm} >{\centering\arraybackslash}m{1.1425cm} >{\centering\arraybackslash}m{1.1425cm} >{\centering\arraybackslash}m{1.1425cm} >{\centering\arraybackslash}m{0.05cm} >{\centering\arraybackslash}m{1.1425cm} >{\centering\arraybackslash}m{1.1425cm} >{\centering\arraybackslash}m{1.1425cm}}
\hline\hline
          &       & \multicolumn{3}{c}{$N=3{,}000$} &       & \multicolumn{3}{c}{$N=10{,}000$} &       & \multicolumn{3}{c}{$N=50{,}000$} \\
\hline
    A. Estimator Performance &       & Bias  & Median Bias & MSE   &       & Bias  & Median Bias & MSE   &       & Bias  & Median Bias & MSE \\
\hline
    OLS   &       & --1.051 & --1.050 & 3.204 &       & --1.051 & --1.051 & 11.67 &       & --1.051 & --1.051 & 151.0 \\
    IV    &       & 0.190  & 0.180  & 0.303 &       & 0.172 & 0.173 & 0.499 &       & 0.173 & 0.173 & 4.551 \\
    2SLS  &       & --0.575 & --0.578 & 1.000     &       & --0.297 & --0.299 & 1.000     &       & --0.074 & --0.074 & 1.000 \\
    MB2SLS &       & 0.183 & 0.143 & 0.439 &       & --0.017 & --0.020 & 0.148 &       & --0.004 & --0.005 & 0.305 \\
    JIVE  &       & 68.98 & 1.944 & 1.5e+07 &       & 0.561 & 0.552 & 3.882 &       & 0.081 & 0.080  & 1.273 \\
    IJIVE &       & --0.048 & --0.061 & 0.226 &       & --0.005 & --0.006 & 0.154 &       & 0.000     & --0.001 & 0.306 \\
    UJIVE &       & 0.062 & 0.047 & 0.315 &       & 0.006 & 0.005 & 0.158 &       & 0.001 & 0.000     & 0.307 \\
    FEJIV &       & 0.099 & 0.077 & 0.363 &       & 0.013 & 0.010  & 0.157 &       & 0.001 & 0.000     & 0.307 \\
          &       &       &       &       &       &       &       &       &       &       &       &  \\
\hline
    B. Pretest for Weak Identification &       &       &       &       &       &       &       &       &       &       &       &  \\
\hline
    Average $\widetilde{F}$ &       &       & 11.48 &       &       &       & 33.45 &       &       &       & 162.81 &  \\
    $q_{0.05}$ &       &       & 8.20   &       &       &       & 28.52 &       &       &       & 152.88 &  \\
    $q_{0.95}$ &       &       & 15.08 &       &       &       & 38.69 &       &       &       & 172.65 &  \\
\hline
\end{tabular}
\begin{tablenotes}[flushleft]
\item \textit{Notes:} The underlying data-generating process is described in Section \ref{sec:simul}. ``OLS'' is the OLS estimator in the regression of the outcome on the treatment indicator and group indicators. ``IV'' is the IV estimator in the noninteracted specification. The remaining estimators are based on the interacted specification and are described in Section \ref{sec:manyiv}. JIVE, IJIVE, and UJIVE are computed after dropping all groups with fewer than two observations in either $(X,Z)$ combination. FEJIV is computed after dropping all groups with fewer than three observations in either $(X,Z)$ combination. The pretest for weak identification follows \cite{MS2022}; see also the Stata implementation in \cite{Sun2023}. Bias and median bias are reported as the proportion of the target parameter. MSE is normalized by the MSE of 2SLS\@. Results are based on 1,000 replications. Pretest results are based on 250 replications.
\item \phantom{extra space here}
\end{tablenotes}
\end{footnotesize}
\end{threeparttable}
\end{adjustwidth}
\end{table}

\begin{table}[!p]
\begin{adjustwidth}{-1in}{-1in}
\centering
\begin{threeparttable}
\caption{Simulation Results for $K=250$, ``Strong'' IV, and Large Monotonicity Violations\label{tab:sim_250_7}}
\begin{footnotesize}
\begin{tabular}{c>{\centering\arraybackslash}m{0.05cm} >{\centering\arraybackslash}m{1.1425cm} >{\centering\arraybackslash}m{1.1425cm} >{\centering\arraybackslash}m{1.1425cm} >{\centering\arraybackslash}m{0.05cm} >{\centering\arraybackslash}m{1.1425cm} >{\centering\arraybackslash}m{1.1425cm} >{\centering\arraybackslash}m{1.1425cm} >{\centering\arraybackslash}m{0.05cm} >{\centering\arraybackslash}m{1.1425cm} >{\centering\arraybackslash}m{1.1425cm} >{\centering\arraybackslash}m{1.1425cm}}
\hline\hline
          &       & \multicolumn{3}{c}{$N=3{,}000$} &       & \multicolumn{3}{c}{$N=10{,}000$} &       & \multicolumn{3}{c}{$N=50{,}000$} \\
\hline
    A. Estimator Performance &       & Bias  & Median Bias & MSE   &       & Bias  & Median Bias & MSE   &       & Bias  & Median Bias & MSE \\
\hline
    OLS   &       & --1.058 & --1.059 & 3.132 &       & --1.060 & --1.060 & 11.29 &       & --1.059 & --1.059 & 147.4 \\
    IV    &       & 0.729 & 0.546 & 4.686 &       & 0.564 & 0.551 & 4.983 &       & 0.545 & 0.541 & 43.22 \\
    2SLS  &       & --0.586 & --0.588 & 1.000     &       & --0.305 & --0.307 & 1.000     &       & --0.077 & --0.078 & 1.000 \\
    MB2SLS &       & 0.038 & 0.024 & 0.251 &       & --0.047 & --0.050 & 0.151 &       & --0.011 & --0.013 & 0.271 \\
    JIVE  &       & --11.89 & 2.337 & 1.4e+06 &       & 0.527 & 0.515 & 3.309 &       & 0.077 & 0.075 & 1.091 \\
    IJIVE &       & --0.058 & --0.073 & 0.234 &       & --0.005 & --0.007 & 0.148 &       & 0.000     & --0.002 & 0.262 \\
    UJIVE &       & 0.056 & 0.030  & 0.327 &       & 0.007 & 0.006 & 0.153 &       & 0.000     & --0.001 & 0.262 \\
    FEJIV &       & 0.099 & 0.074 & 0.357 &       & 0.014 & 0.012 & 0.153 &       & 0.000     & --0.001 & 0.262 \\
          &       &       &       &       &       &       &       &       &       &       &       &  \\
\hline
    B. Pretest for Weak Identification &       &       &       &       &       &       &       &       &       &       &       &  \\
\hline
    Average $\widetilde{F}$ &       &       & 11.61 &       &       &       & 32.16 &       &       &       & 155.20 &  \\
    $q_{0.05}$ &       &       & 8.07  &       &       &       & 27.72 &       &       &       & 146.33 &  \\
    $q_{0.95}$ &       &       & 15.66 &       &       &       & 36.54 &       &       &       & 165.50 &  \\
\hline
\end{tabular}
\begin{tablenotes}[flushleft]
\item \textit{Notes:} The underlying data-generating process is described in Section \ref{sec:simul}. ``OLS'' is the OLS estimator in the regression of the outcome on the treatment indicator and group indicators. ``IV'' is the IV estimator in the noninteracted specification. The remaining estimators are based on the interacted specification and are described in Section \ref{sec:manyiv}. JIVE, IJIVE, and UJIVE are computed after dropping all groups with fewer than two observations in either $(X,Z)$ combination. FEJIV is computed after dropping all groups with fewer than three observations in either $(X,Z)$ combination. The pretest for weak identification follows \cite{MS2022}; see also the Stata implementation in \cite{Sun2023}. Bias and median bias are reported as the proportion of the target parameter. MSE is normalized by the MSE of 2SLS\@. Results are based on 1,000 replications. Pretest results are based on 250 replications.
\item \phantom{extra space here}
\end{tablenotes}
\end{footnotesize}
\end{threeparttable}
\end{adjustwidth}
\end{table}

\begin{table}[!p]
\begin{adjustwidth}{-1in}{-1in}
\centering
\begin{threeparttable}
\caption{Simulation Results for $K=250$, ``Strong'' IV, and Monotonicity Violations with Weak Cells\label{tab:sim_250_8}}
\begin{footnotesize}
\begin{tabular}{c>{\centering\arraybackslash}m{0.05cm} >{\centering\arraybackslash}m{1.1425cm} >{\centering\arraybackslash}m{1.1425cm} >{\centering\arraybackslash}m{1.1425cm} >{\centering\arraybackslash}m{0.05cm} >{\centering\arraybackslash}m{1.1425cm} >{\centering\arraybackslash}m{1.1425cm} >{\centering\arraybackslash}m{1.1425cm} >{\centering\arraybackslash}m{0.05cm} >{\centering\arraybackslash}m{1.1425cm} >{\centering\arraybackslash}m{1.1425cm} >{\centering\arraybackslash}m{1.1425cm}}
\hline\hline
          &       & \multicolumn{3}{c}{$N=3{,}000$} &       & \multicolumn{3}{c}{$N=10{,}000$} &       & \multicolumn{3}{c}{$N=50{,}000$} \\
\hline
    A. Estimator Performance &       & Bias  & Median Bias & MSE   &       & Bias  & Median Bias & MSE   &       & Bias  & Median Bias & MSE \\
\hline
    OLS   &       & --1.127 & --1.128 & 2.354 &       & --1.125 & --1.126 & 6.550  &       & --1.125 & --1.125 & 72.16 \\
    IV    &       & 0.811 & 0.629 & 24.01 &       & 0.678 & 0.600   & 4.107 &       & 0.641 & 0.633 & 26.21 \\
    2SLS  &       & --0.721 & --0.726 & 1.000     &       & --0.427 & --0.429 & 1.000     &       & --0.120 & --0.119 & 1.000 \\
    MB2SLS &       & 0.146 & 0.079 & 0.600   &       & --0.072 & --0.076 & 0.167 &       & --0.018 & --0.018 & 0.241 \\
    JIVE  &       & --3.500 & --3.108 & 257.9 &       & 1.185 & 1.076 & 9.265 &       & 0.128 & 0.128 & 1.235 \\
    IJIVE &       & --0.058 & --0.108 & 0.429 &       & --0.011 & --0.020 & 0.174 &       & --0.001 & --0.002 & 0.231 \\
    UJIVE &       & 0.164 & 0.058 & 0.996 &       & 0.008 & --0.002 & 0.183 &       & --0.001 & --0.001 & 0.232 \\
    FEJIV &       & 0.220  & 0.116 & 1.757 &       & 0.019 & 0.008 & 0.180  &       & --0.001 & --0.001 & 0.232 \\
          &       &       &       &       &       &       &       &       &       &       &       &  \\
\hline
    B. Pretest for Weak Identification &       &       &       &       &       &       &       &       &       &       &       &  \\
\hline
    Average $\widetilde{F}$ &       &       & 7.64  &       &       &       & 20.47 &       &       &       & 99.71 &  \\
    $q_{0.05}$ &       &       & 5.09  &       &       &       & 16.36 &       &       &       & 92.88 &  \\
    $q_{0.95}$ &       &       & 10.82 &       &       &       & 24.61 &       &       &       & 107.47 &  \\
\hline
\end{tabular}
\begin{tablenotes}[flushleft]
\item \textit{Notes:} The underlying data-generating process is described in Section \ref{sec:simul}. ``OLS'' is the OLS estimator in the regression of the outcome on the treatment indicator and group indicators. ``IV'' is the IV estimator in the noninteracted specification. The remaining estimators are based on the interacted specification and are described in Section \ref{sec:manyiv}. JIVE, IJIVE, and UJIVE are computed after dropping all groups with fewer than two observations in either $(X,Z)$ combination. FEJIV is computed after dropping all groups with fewer than three observations in either $(X,Z)$ combination. The pretest for weak identification follows \cite{MS2022}; see also the Stata implementation in \cite{Sun2023}. Bias and median bias are reported as the proportion of the target parameter. MSE is normalized by the MSE of 2SLS\@. Results are based on 1,000 replications. Pretest results are based on 250 replications.
\item \phantom{extra space here}
\end{tablenotes}
\end{footnotesize}
\end{threeparttable}
\end{adjustwidth}
\end{table}

\begin{table}[!p]
\begin{adjustwidth}{-1in}{-1in}
\centering
\begin{threeparttable}
\caption{Simulation Results for $K=20$, ``Weak'' IV, and No Monotonicity Violations\label{tab:sim_20_1}}
\begin{footnotesize}
\begin{tabular}{c>{\centering\arraybackslash}m{0.05cm} >{\centering\arraybackslash}m{1.45cm} >{\centering\arraybackslash}m{1.45cm} >{\centering\arraybackslash}m{1.45cm} >{\centering\arraybackslash}m{0.05cm} >{\centering\arraybackslash}m{1.45cm} >{\centering\arraybackslash}m{1.45cm} >{\centering\arraybackslash}m{1.45cm}}
\hline\hline
          &       & \multicolumn{3}{c}{$N=3{,}000$} &       & \multicolumn{3}{c}{$N=10{,}000$} \\
\hline
    A. Estimator Performance &       & Bias  & Median Bias & MSE   &       & Bias  & Median Bias & MSE \\
\hline
    OLS   &       & --0.737 & --0.736 & 1.875 &       & --0.737 & --0.736 & 5.975 \\
    IV    &       & 0.070  & --0.004 & 1.368 &       & 0.024 & 0.010  & 0.902 \\
    2SLS  &       & --0.426 & --0.419 & 1.000     &       & --0.199 & --0.204 & 1.000 \\
    MB2SLS &       & 0.696 & --0.009 & 193.4 &       & 0.053 & 0.021 & 1.569 \\
    JIVE  &       & --1.586 & --1.084 & 1.2e+03  &       & 1.092 & 0.559 & 3.0e+03 \\
    IJIVE &       & 0.384 & --0.055 & 440.5 &       & 0.058 & 0.023 & 1.671 \\
    UJIVE &       & --1.492 & --0.039 & 8.0e+04 &       & 0.065 & 0.026 & 1.755 \\
    FEJIV &       & --1.280 & --0.012 & 2.1e+03  &       & 0.063 & 0.027 & 1.700 \\
          &       &       &       &       &       &       &       &  \\
\hline
    B. Pretest for Weak Identification &       &       &       &       &       &       &       &  \\
\hline
    Average $\widetilde{F}$ &       &       & 2.80   &       &       &       & 7.84  &  \\
    $q_{0.05}$ &       &       & 0.28  &       &       &       & 4.14  &  \\
    $q_{0.95}$ &       &       & 6.15  &       &       &       & 12.49 &  \\
\hline
\end{tabular}
\begin{tablenotes}[flushleft]
\item \textit{Notes:} The underlying data-generating process is described in Section \ref{sec:simul}. ``OLS'' is the OLS estimator in the regression of the outcome on the treatment indicator and group indicators. ``IV'' is the IV estimator in the noninteracted specification. The remaining estimators are based on the interacted specification and are described in Section \ref{sec:manyiv}. JIVE, IJIVE, and UJIVE are computed after dropping all groups with fewer than two observations in either $(X,Z)$ combination. FEJIV is computed after dropping all groups with fewer than three observations in either $(X,Z)$ combination. The pretest for weak identification follows \cite{MS2022}; see also the Stata implementation in \cite{Sun2023}. Bias and median bias are reported as the proportion of the target parameter. MSE is normalized by the MSE of 2SLS\@. Results are based on 1,000 replications. Pretest results are based on 250 replications.
\item \phantom{extra space here}
\end{tablenotes}
\end{footnotesize}
\end{threeparttable}
\end{adjustwidth}
\end{table}

\begin{table}[!p]
\begin{adjustwidth}{-1in}{-1in}
\centering
\begin{threeparttable}
\caption{Simulation Results for $K=20$, ``Weak'' IV, and Moderate Monotonicity Violations\label{tab:sim_20_2}}
\begin{footnotesize}
\begin{tabular}{c>{\centering\arraybackslash}m{0.05cm} >{\centering\arraybackslash}m{1.45cm} >{\centering\arraybackslash}m{1.45cm} >{\centering\arraybackslash}m{1.45cm} >{\centering\arraybackslash}m{0.05cm} >{\centering\arraybackslash}m{1.45cm} >{\centering\arraybackslash}m{1.45cm} >{\centering\arraybackslash}m{1.45cm}}
\hline\hline
          &       & \multicolumn{3}{c}{$N=3{,}000$} &       & \multicolumn{3}{c}{$N=10{,}000$} \\
\hline
    A. Estimator Performance &       & Bias  & Median Bias & MSE   &       & Bias  & Median Bias & MSE \\
\hline
    OLS   &       & --1.134 & --1.132 & 1.949 &       & --1.133 & --1.132 & 6.075 \\
    IV    &       & 2.716 & 0.186 & 8.0e+03  &       & 0.277 & 0.199 & 2.317 \\
    2SLS  &       & --0.663 & --0.652 & 1.000     &       & --0.312 & --0.318 & 1.000 \\
    MB2SLS &       & 0.295 & --0.095 & 386.4 &       & 0.051 & --0.011 & 1.540 \\
    JIVE  &       & --2.849 & --1.689 & 1.6e+03  &       & --3.561 & 0.768 & 6.6e+04 \\
    IJIVE &       & 1.833 & --0.072 & 3.4e+03  &       & 0.092 & 0.031 & 1.801 \\
    UJIVE &       & 2.511 & --0.061 & 1.2e+04 &       & 0.102 & 0.035 & 1.899 \\
    FEJIV &       & 1.033 & --0.027 & 1.1e+03  &       & 0.099 & 0.033 & 1.769 \\
          &       &       &       &       &       &       &       &  \\
\hline
    B. Pretest for Weak Identification &       &       &       &       &       &       &       &  \\
\hline
    Average $\widetilde{F}$ &       &       & 2.62  &       &       &       & 7.75  &  \\
    $q_{0.05}$ &       &       & 0.24  &       &       &       & 3.90   &  \\
    $q_{0.95}$ &       &       & 5.50   &       &       &       & 12.03 &  \\
\hline
\end{tabular}
\begin{tablenotes}[flushleft]
\item \textit{Notes:} The underlying data-generating process is described in Section \ref{sec:simul}. ``OLS'' is the OLS estimator in the regression of the outcome on the treatment indicator and group indicators. ``IV'' is the IV estimator in the noninteracted specification. The remaining estimators are based on the interacted specification and are described in Section \ref{sec:manyiv}. JIVE, IJIVE, and UJIVE are computed after dropping all groups with fewer than two observations in either $(X,Z)$ combination. FEJIV is computed after dropping all groups with fewer than three observations in either $(X,Z)$ combination. The pretest for weak identification follows \cite{MS2022}; see also the Stata implementation in \cite{Sun2023}. Bias and median bias are reported as the proportion of the target parameter. MSE is normalized by the MSE of 2SLS\@. Results are based on 1,000 replications. Pretest results are based on 250 replications.
\item \phantom{extra space here}
\end{tablenotes}
\end{footnotesize}
\end{threeparttable}
\end{adjustwidth}
\end{table}

\begin{table}[!p]
\begin{adjustwidth}{-1in}{-1in}
\centering
\begin{threeparttable}
\caption{Simulation Results for $K=20$, ``Weak'' IV, and Large Monotonicity Violations\label{tab:sim_20_3}}
\begin{footnotesize}
\begin{tabular}{c>{\centering\arraybackslash}m{0.05cm} >{\centering\arraybackslash}m{1.45cm} >{\centering\arraybackslash}m{1.45cm} >{\centering\arraybackslash}m{1.45cm} >{\centering\arraybackslash}m{0.05cm} >{\centering\arraybackslash}m{1.45cm} >{\centering\arraybackslash}m{1.45cm} >{\centering\arraybackslash}m{1.45cm}}
\hline\hline
          &       & \multicolumn{3}{c}{$N=3{,}000$} &       & \multicolumn{3}{c}{$N=10{,}000$} \\
\hline
    A. Estimator Performance &       & Bias  & Median Bias & MSE   &       & Bias  & Median Bias & MSE \\
\hline
    OLS   &       & --1.138 & --1.137 & 1.951 &       & --1.138 & --1.137 & 6.109 \\
    IV    &       & 0.602 & 0.074 & 506.6 &       & 0.802 & 0.331 & 67.86 \\
    2SLS  &       & --0.663 & --0.649 & 1.000     &       & --0.324 & --0.332 & 1.000 \\
    MB2SLS &       & 1.209 & --0.122 & 1.1e+03  &       & 0.025 & --0.031 & 1.331 \\
    JIVE  &       & --0.965 & --1.781 & 8.5e+03  &       & 0.440  & 0.752 & 3.1e+03 \\
    IJIVE &       & --0.203 & --0.099 & 183.9 &       & 0.084 & 0.003 & 1.653 \\
    UJIVE &       & 0.093 & --0.090 & 264.9 &       & 0.094 & 0.011 & 1.736 \\
    FEJIV &       & 0.369 & --0.003 & 239.8 &       & 0.090  & 0.014 & 1.645 \\
          &       &       &       &       &       &       &       &  \\
\hline
    B. Pretest for Weak Identification &       &       &       &       &       &       &       &  \\
\hline
    Average $\widetilde{F}$ &       &       & 2.54  &       &       &       & 7.48  &  \\
    $q_{0.05}$ &       &       & 0.02  &       &       &       & 3.80   &  \\
    $q_{0.95}$ &       &       & 5.97  &       &       &       & 11.62 &  \\
\hline
\end{tabular}
\begin{tablenotes}[flushleft]
\item \textit{Notes:} The underlying data-generating process is described in Section \ref{sec:simul}. ``OLS'' is the OLS estimator in the regression of the outcome on the treatment indicator and group indicators. ``IV'' is the IV estimator in the noninteracted specification. The remaining estimators are based on the interacted specification and are described in Section \ref{sec:manyiv}. JIVE, IJIVE, and UJIVE are computed after dropping all groups with fewer than two observations in either $(X,Z)$ combination. FEJIV is computed after dropping all groups with fewer than three observations in either $(X,Z)$ combination. The pretest for weak identification follows \cite{MS2022}; see also the Stata implementation in \cite{Sun2023}. Bias and median bias are reported as the proportion of the target parameter. MSE is normalized by the MSE of 2SLS\@. Results are based on 1,000 replications. Pretest results are based on 250 replications.
\item \phantom{extra space here}
\end{tablenotes}
\end{footnotesize}
\end{threeparttable}
\end{adjustwidth}
\end{table}

\begin{table}[!p]
\begin{adjustwidth}{-1in}{-1in}
\centering
\begin{threeparttable}
\caption{Simulation Results for $K=20$, ``Weak'' IV, and Monotonicity Violations with Weak Cells\label{tab:sim_20_4}}
\begin{footnotesize}
\begin{tabular}{c>{\centering\arraybackslash}m{0.05cm} >{\centering\arraybackslash}m{1.45cm} >{\centering\arraybackslash}m{1.45cm} >{\centering\arraybackslash}m{1.45cm} >{\centering\arraybackslash}m{0.05cm} >{\centering\arraybackslash}m{1.45cm} >{\centering\arraybackslash}m{1.45cm} >{\centering\arraybackslash}m{1.45cm}}
\hline\hline
          &       & \multicolumn{3}{c}{$N=3{,}000$} &       & \multicolumn{3}{c}{$N=10{,}000$} \\
\hline
    A. Estimator Performance &       & Bias  & Median Bias & MSE   &       & Bias  & Median Bias & MSE \\
\hline
    OLS   &       & --1.152 & --1.151 & 1.633 &       & --1.149 & --1.149 & 4.847 \\
    IV    &       & 1.409 & 0.139 & 1.3e+03  &       & --0.872 & 0.323 & 7.7e+03 \\
    2SLS  &       & --0.740 & --0.749 & 1.000     &       & --0.371 & --0.371 & 1.000 \\
    MB2SLS &       & --0.276 & --0.290 & 104.8 &       & 0.059 & --0.005 & 1.658 \\
    JIVE  &       & --0.889 & --1.696 & 1.8e+03  &       & 0.803 & 0.969 & 7.9e+03 \\
    IJIVE &       & --0.672 & --0.202 & 892.2 &       & 0.131 & 0.053 & 2.075 \\
    UJIVE &       & --0.751 & --0.212 & 377.9 &       & 0.148 & 0.067 & 2.243 \\
    FEJIV &       & 0.265 & --0.122 & 351.7 &       & 0.138 & 0.064 & 2.130 \\
          &       &       &       &       &       &       &       &  \\
\hline
    B. Pretest for Weak Identification &       &       &       &       &       &       &       &  \\
\hline
    Average $\widetilde{F}$ &       &       & 2.18  &       &       &       & 6.09  &  \\
    $q_{0.05}$ &       &       & --0.02 &       &       &       & 2.32  &  \\
    $q_{0.95}$ &       &       & 4.66  &       &       &       & 9.85  &  \\
\hline
\end{tabular}
\begin{tablenotes}[flushleft]
\item \textit{Notes:} The underlying data-generating process is described in Section \ref{sec:simul}. ``OLS'' is the OLS estimator in the regression of the outcome on the treatment indicator and group indicators. ``IV'' is the IV estimator in the noninteracted specification. The remaining estimators are based on the interacted specification and are described in Section \ref{sec:manyiv}. JIVE, IJIVE, and UJIVE are computed after dropping all groups with fewer than two observations in either $(X,Z)$ combination. FEJIV is computed after dropping all groups with fewer than three observations in either $(X,Z)$ combination. The pretest for weak identification follows \cite{MS2022}; see also the Stata implementation in \cite{Sun2023}. Bias and median bias are reported as the proportion of the target parameter. MSE is normalized by the MSE of 2SLS\@. Results are based on 1,000 replications. Pretest results are based on 250 replications.
\item \phantom{extra space here}
\end{tablenotes}
\end{footnotesize}
\end{threeparttable}
\end{adjustwidth}
\end{table}

\begin{table}[!p]
\begin{adjustwidth}{-1in}{-1in}
\centering
\begin{threeparttable}
\caption{Simulation Results for $K=20$, ``Strong'' IV, and No Monotonicity Violations\label{tab:sim_20_5}}
\begin{footnotesize}
\begin{tabular}{c>{\centering\arraybackslash}m{0.05cm} >{\centering\arraybackslash}m{1.45cm} >{\centering\arraybackslash}m{1.45cm} >{\centering\arraybackslash}m{1.45cm} >{\centering\arraybackslash}m{0.05cm} >{\centering\arraybackslash}m{1.45cm} >{\centering\arraybackslash}m{1.45cm} >{\centering\arraybackslash}m{1.45cm}}
\hline\hline
          &       & \multicolumn{3}{c}{$N=3{,}000$} &       & \multicolumn{3}{c}{$N=10{,}000$} \\
\hline
    A. Estimator Performance &       & Bias  & Median Bias & MSE   &       & Bias  & Median Bias & MSE \\
\hline
    OLS   &       & --0.669 & --0.669 & 29.70  &       & --0.669 & --0.668 & 105.8 \\
    IV    &       & 0.003 & --0.001 & 0.949 &       & 0.002 & 0.002 & 0.983 \\
    2SLS  &       & --0.049 & --0.052 & 1.000     &       & --0.014 & --0.015 & 1.000 \\
    MB2SLS &       & 0.009 & 0.005 & 1.053 &       & 0.004 & 0.003 & 1.021 \\
    JIVE  &       & 0.072 & 0.063 & 1.679 &       & 0.023 & 0.022 & 1.226 \\
    IJIVE &       & 0.004 & 0.001 & 1.027 &       & 0.003 & 0.002 & 1.020 \\
    UJIVE &       & 0.007 & 0.002 & 1.047 &       & 0.004 & 0.002 & 1.023 \\
    FEJIV &       & 0.007 & 0.001 & 1.033 &       & 0.004 & 0.002 & 1.020 \\
          &       &       &       &       &       &       &       &  \\
\hline
    B. Pretest for Weak Identification &       &       &       &       &       &       &       &  \\
\hline
    Average $\widetilde{F}$ &       &       & 42.75 &       &       &       & 131.58 &  \\
    $q_{0.05}$ &       &       & 31.36 &       &       &       & 113.22 &  \\
    $q_{0.95}$ &       &       & 55.10  &       &       &       & 154.91 &  \\
\hline
\end{tabular}
\begin{tablenotes}[flushleft]
\item \textit{Notes:} The underlying data-generating process is described in Section \ref{sec:simul}. ``OLS'' is the OLS estimator in the regression of the outcome on the treatment indicator and group indicators. ``IV'' is the IV estimator in the noninteracted specification. The remaining estimators are based on the interacted specification and are described in Section \ref{sec:manyiv}. JIVE, IJIVE, and UJIVE are computed after dropping all groups with fewer than two observations in either $(X,Z)$ combination. FEJIV is computed after dropping all groups with fewer than three observations in either $(X,Z)$ combination. The pretest for weak identification follows \cite{MS2022}; see also the Stata implementation in \cite{Sun2023}. Bias and median bias are reported as the proportion of the target parameter. MSE is normalized by the MSE of 2SLS\@. Results are based on 1,000 replications. Pretest results are based on 250 replications.
\item \phantom{extra space here}
\end{tablenotes}
\end{footnotesize}
\end{threeparttable}
\end{adjustwidth}
\end{table}

\begin{table}[!p]
\begin{adjustwidth}{-1in}{-1in}
\centering
\begin{threeparttable}
\caption{Simulation Results for $K=20$, ``Strong'' IV, and Moderate Monotonicity Violations\label{tab:sim_20_6}}
\begin{footnotesize}
\begin{tabular}{c>{\centering\arraybackslash}m{0.05cm} >{\centering\arraybackslash}m{1.45cm} >{\centering\arraybackslash}m{1.45cm} >{\centering\arraybackslash}m{1.45cm} >{\centering\arraybackslash}m{0.05cm} >{\centering\arraybackslash}m{1.45cm} >{\centering\arraybackslash}m{1.45cm} >{\centering\arraybackslash}m{1.45cm}}
\hline\hline
          &       & \multicolumn{3}{c}{$N=3{,}000$} &       & \multicolumn{3}{c}{$N=10{,}000$} \\
\hline
    A. Estimator Performance &       & Bias  & Median Bias & MSE   &       & Bias  & Median Bias & MSE \\
\hline
    OLS   &       & --1.027 & --1.027 & 32.72 &       & --1.025 & --1.025 & 114.2 \\
    IV    &       & 0.213 & 0.196 & 3.545 &       & 0.206 & 0.200   & 6.702 \\
    2SLS  &       & --0.084 & --0.089 & 1.000     &       & --0.021 & --0.022 & 1.000 \\
    MB2SLS &       & --0.003 & --0.008 & 0.958 &       & 0.004 & 0.003 & 1.012 \\
    JIVE  &       & 0.099 & 0.094 & 1.562 &       & 0.036 & 0.033 & 1.237 \\
    IJIVE &       & 0.002 & 0.002 & 0.980  &       & 0.007 & 0.005 & 1.025 \\
    UJIVE &       & 0.006 & 0.005 & 0.999 &       & 0.007 & 0.006 & 1.028 \\
    FEJIV &       & 0.006 & 0.003 & 0.986 &       & 0.007 & 0.007 & 1.025 \\
          &       &       &       &       &       &       &       &  \\
\hline
    B. Pretest for Weak Identification &       &       &       &       &       &       &       &  \\
\hline
    Average $\widetilde{F}$ &       &       & 40.58 &       &       &       & 124.89 &  \\
    $q_{0.05}$ &       &       & 31.14 &       &       &       & 107.54 &  \\
    $q_{0.95}$ &       &       & 51.84 &       &       &       & 140.35 &  \\
\hline
\end{tabular}
\begin{tablenotes}[flushleft]
\item \textit{Notes:} The underlying data-generating process is described in Section \ref{sec:simul}. ``OLS'' is the OLS estimator in the regression of the outcome on the treatment indicator and group indicators. ``IV'' is the IV estimator in the noninteracted specification. The remaining estimators are based on the interacted specification and are described in Section \ref{sec:manyiv}. JIVE, IJIVE, and UJIVE are computed after dropping all groups with fewer than two observations in either $(X,Z)$ combination. FEJIV is computed after dropping all groups with fewer than three observations in either $(X,Z)$ combination. The pretest for weak identification follows \cite{MS2022}; see also the Stata implementation in \cite{Sun2023}. Bias and median bias are reported as the proportion of the target parameter. MSE is normalized by the MSE of 2SLS\@. Results are based on 1,000 replications. Pretest results are based on 250 replications.
\item \phantom{extra space here}
\end{tablenotes}
\end{footnotesize}
\end{threeparttable}
\end{adjustwidth}
\end{table}

\begin{table}[!p]
\begin{adjustwidth}{-1in}{-1in}
\centering
\begin{threeparttable}
\caption{Simulation Results for $K=20$, ``Strong'' IV, and Large Monotonicity Violations\label{tab:sim_20_7}}
\begin{footnotesize}
\begin{tabular}{c>{\centering\arraybackslash}m{0.05cm} >{\centering\arraybackslash}m{1.45cm} >{\centering\arraybackslash}m{1.45cm} >{\centering\arraybackslash}m{1.45cm} >{\centering\arraybackslash}m{0.05cm} >{\centering\arraybackslash}m{1.45cm} >{\centering\arraybackslash}m{1.45cm} >{\centering\arraybackslash}m{1.45cm}}
\hline\hline
          &       & \multicolumn{3}{c}{$N=3{,}000$} &       & \multicolumn{3}{c}{$N=10{,}000$} \\
\hline
    A. Estimator Performance &       & Bias  & Median Bias & MSE   &       & Bias  & Median Bias & MSE \\
\hline
    OLS   &       & --1.032 & --1.033 & 31.52 &       & --1.032 & --1.030 & 117.0 \\
    IV    &       & 0.397 & 0.333 & 13.87 &       & 0.357 & 0.339 & 22.19 \\
    2SLS  &       & --0.087 & --0.090 & 1.000     &       & --0.026 & --0.026 & 1.000 \\
    MB2SLS &       & --0.008 & --0.012 & 0.950  &       & --0.001 & --0.002 & 0.987 \\
    JIVE  &       & 0.096 & 0.086 & 1.532 &       & 0.031 & 0.030  & 1.182 \\
    IJIVE &       & 0.002 & 0.001 & 0.986 &       & 0.004 & 0.003 & 1.002 \\
    UJIVE &       & 0.006 & 0.004 & 1.007 &       & 0.004 & 0.003 & 1.005 \\
    FEJIV &       & 0.007 & 0.002 & 0.989 &       & 0.004 & 0.003 & 1.002 \\
          &       &       &       &       &       &       &       &  \\
\hline
    B. Pretest for Weak Identification &       &       &       &       &       &       &       &  \\
\hline
    Average $\widetilde{F}$ &       &       & 39.01 &       &       &       & 119.92 &  \\
    $q_{0.05}$ &       &       & 28.85 &       &       &       & 104.08 &  \\
    $q_{0.95}$ &       &       & 49.38 &       &       &       & 136.68 &  \\
\hline
\end{tabular}
\begin{tablenotes}[flushleft]
\item \textit{Notes:} The underlying data-generating process is described in Section \ref{sec:simul}. ``OLS'' is the OLS estimator in the regression of the outcome on the treatment indicator and group indicators. ``IV'' is the IV estimator in the noninteracted specification. The remaining estimators are based on the interacted specification and are described in Section \ref{sec:manyiv}. JIVE, IJIVE, and UJIVE are computed after dropping all groups with fewer than two observations in either $(X,Z)$ combination. FEJIV is computed after dropping all groups with fewer than three observations in either $(X,Z)$ combination. The pretest for weak identification follows \cite{MS2022}; see also the Stata implementation in \cite{Sun2023}. Bias and median bias are reported as the proportion of the target parameter. MSE is normalized by the MSE of 2SLS\@. Results are based on 1,000 replications. Pretest results are based on 250 replications.
\item \phantom{extra space here}
\end{tablenotes}
\end{footnotesize}
\end{threeparttable}
\end{adjustwidth}
\end{table}

\begin{table}[!p]
\begin{adjustwidth}{-1in}{-1in}
\centering
\begin{threeparttable}
\caption{Simulation Results for $K=20$, ``Strong'' IV, and Monotonicity Violations with Weak Cells\label{tab:sim_20_8}}
\begin{footnotesize}
\begin{tabular}{c>{\centering\arraybackslash}m{0.05cm} >{\centering\arraybackslash}m{1.45cm} >{\centering\arraybackslash}m{1.45cm} >{\centering\arraybackslash}m{1.45cm} >{\centering\arraybackslash}m{0.05cm} >{\centering\arraybackslash}m{1.45cm} >{\centering\arraybackslash}m{1.45cm} >{\centering\arraybackslash}m{1.45cm}}
\hline\hline
          &       & \multicolumn{3}{c}{$N=3{,}000$} &       & \multicolumn{3}{c}{$N=10{,}000$} \\
\hline
    A. Estimator Performance &       & Bias  & Median Bias & MSE   &       & Bias  & Median Bias & MSE \\
\hline
    OLS   &       & --1.062 & --1.064 & 25.17 &       & --1.061 & --1.062 & 91.49 \\
    IV    &       & 0.374 & 0.325 & 7.865 &       & 0.340  & 0.333 & 14.39 \\
    2SLS  &       & --0.101 & --0.101 & 1.000     &       & --0.032 & --0.030 & 1.000 \\
    MB2SLS &       & --0.004 & --0.011 & 0.987 &       & --0.001 & 0.001 & 0.990 \\
    JIVE  &       & 0.123 & 0.101 & 1.732 &       & 0.038 & 0.042 & 1.206 \\
    IJIVE &       & 0.006 & --0.002 & 1.022 &       & 0.004 & 0.007 & 1.005 \\
    UJIVE &       & 0.011 & 0.002 & 1.052 &       & 0.004 & 0.008 & 1.008 \\
    FEJIV &       & 0.012 & 0.004 & 1.008 &       & 0.004 & 0.008 & 1.007 \\
          &       &       &       &       &       &       &       &  \\
\hline
    B. Pretest for Weak Identification &       &       &       &       &       &       &       &  \\
\hline
    Average $\widetilde{F}$ &       &       & 33.37 &       &       &       & 101.53 &  \\
    $q_{0.05}$ &       &       & 24.50  &       &       &       & 85.67 &  \\
    $q_{0.95}$ &       &       & 42.08 &       &       &       & 117.25 &  \\
\hline
\end{tabular}
\begin{tablenotes}[flushleft]
\item \textit{Notes:} The underlying data-generating process is described in Section \ref{sec:simul}. ``OLS'' is the OLS estimator in the regression of the outcome on the treatment indicator and group indicators. ``IV'' is the IV estimator in the noninteracted specification. The remaining estimators are based on the interacted specification and are described in Section \ref{sec:manyiv}. JIVE, IJIVE, and UJIVE are computed after dropping all groups with fewer than two observations in either $(X,Z)$ combination. FEJIV is computed after dropping all groups with fewer than three observations in either $(X,Z)$ combination. The pretest for weak identification follows \cite{MS2022}; see also the Stata implementation in \cite{Sun2023}. Bias and median bias are reported as the proportion of the target parameter. MSE is normalized by the MSE of 2SLS\@. Results are based on 1,000 replications. Pretest results are based on 250 replications.
\item \phantom{extra space here}
\end{tablenotes}
\end{footnotesize}
\end{threeparttable}
\end{adjustwidth}
\end{table}

\clearpage

\pagebreak

\section{Review of Applications of Instrumental Variables}
\label{app:young}

\vspace{5.4cm}

\setcounter{table}{0}
\renewcommand{\thetable}{\ref{app:young}.\arabic{table}}

\begin{table}[!h]
\begin{adjustwidth}{-1in}{-1in}
\centering
\begin{threeparttable}
\caption{Negative First Stages with Alternative Binarizations\label{tab:apps_neg}}
\begin{tabular}{c|c|>{\centering\arraybackslash}m{1.35cm} >{\centering\arraybackslash}m{1.35cm} c >{\centering\arraybackslash}m{1.35cm} >{\centering\arraybackslash}m{1.35cm} c >{\centering\arraybackslash}m{1.35cm} >{\centering\arraybackslash}m{1.35cm}}
\hline\hline
          &       & \multicolumn{2}{c}{$Z_1$} &       & \multicolumn{2}{c}{$Z_2$} &       & \multicolumn{2}{c}{$Z_3$} \\
\hline
          &       & LPM   & Probit &       & LPM   & Probit &       & LPM   & Probit \\
\hline
    \multirow{2}[0]{*}{$D_1$} & Average Share & 0.293 & 0.250 &       & 0.246 & 0.157 &       & 0.281 & 0.263 \\
          & Weighted Average Share & 0.321 & 0.304 &       & 0.277 & 0.253 &       & 0.315 & 0.327 \\
          &       &       &       &       &       &       &       &       &  \\
    \multirow{2}[0]{*}{$D_2$} & Average Share & 0.272 & 0.220 &       & 0.218 & 0.176 &       & 0.199 & 0.187 \\
          & Weighted Average Share & 0.327 & 0.314 &       & 0.285 & 0.280 &       & 0.263 & 0.259 \\
          &       &       &       &       &       &       &       &       &  \\
    \multirow{2}[0]{*}{$D_3$} & Average Share & 0.379 & 0.341 &       & 0.276 & 0.190 &       & 0.243 & 0.223 \\
          & Weighted Average Share & 0.379 & 0.382 &       & 0.316 & 0.283 &       & 0.293 & 0.307 \\
\hline
\end{tabular}
\begin{footnotesize}
\begin{tablenotes}[flushleft]
\item \textit{Notes:} The table reports summary statistics on the fraction of observations for which $\hat{\e} \left[ D_j \mid Z_k=1, X=x \right] - \hat{\e} \left[ D_j \mid Z_k=0, X=x \right]$ is negative. ``Average Share'' treats every applicable regression equally. ``Weighted Average Share'' weights by the inverse of the number of applicable regressions associated with a given paper. $D_j$ and $Z_k$ are defined as either the original endogenous explanatory variable and instrument (if they are binary) or indicators for whether these variables are above the $j$th and $k$th quartile, respectively, subject to a normalization that $Z_k$ is always associated with a positive estimated coefficient in the linear first stage. Sampling weights and clustered standard errors are used in line with the original papers.
\end{tablenotes}
\end{footnotesize}
\end{threeparttable}
\end{adjustwidth}
\end{table}

\begin{table}[!p]
\begin{adjustwidth}{-1in}{-1in}
\centering
\begin{threeparttable}
\caption{First-Stage Heterogeneity with Alternative Binarizations\label{tab:apps_het}}
\begin{tabular}{c|c|>{\centering\arraybackslash}m{1.35cm} >{\centering\arraybackslash}m{1.35cm} c >{\centering\arraybackslash}m{1.35cm} >{\centering\arraybackslash}m{1.35cm} c >{\centering\arraybackslash}m{1.35cm} >{\centering\arraybackslash}m{1.35cm}}
\hline\hline
          &       & \multicolumn{2}{c}{$Z_1$} &       & \multicolumn{2}{c}{$Z_2$} &       & \multicolumn{2}{c}{$Z_3$} \\
\hline
          &       & LPM   & Probit &       & LPM   & Probit &       & LPM   & Probit \\
\hline
    \multirow{2}[0]{*}{$D_1$} & Rejected Papers & 25/25 & 18/20 &       & 22/25 & 16/18 &       & 25/25 & 15/19 \\
          & Average Share of Rejections & 0.807 & 0.755 &       & 0.781 & 0.765 &       & 0.855 & 0.754 \\
          &       &       &       &       &       &       &       &       &  \\
    \multirow{2}[0]{*}{$D_2$} & Rejected Papers & 24/25 & 19/22 &       & 22/25 & 19/21 &       & 24/25 & 18/21 \\
          & Average Share of Rejections & 0.789 & 0.758 &       & 0.715 & 0.749 &       & 0.789 & 0.768 \\
          &       &       &       &       &       &       &       &       &  \\
    \multirow{2}[0]{*}{$D_3$} & Rejected Papers & 24/25 & 18/20 &       & 24/25 & 18/21 &       & 24/25 & 19/20 \\
          & Average Share of Rejections & 0.824 & 0.747 &       & 0.779 & 0.739 &       & 0.817 & 0.747 \\
\hline
\end{tabular}
\begin{footnotesize}
\begin{tablenotes}[flushleft]
\item \textit{Notes:} The table reports results of Wald tests that the coefficients on the interaction terms in regressions of $D_j$ on $Z_k$, $X$, and $Z_k X$ are jointly equal to zero. ``Rejected Papers'' reports the number of papers for which the Bonferroni $p$-value is less than or equal to 0.05. ``Average Share of Rejections'' reports the average share (across papers) of regressions associated with a given paper for which the corresponding Holm $p$-value is less than or equal to 0.05. $D_j$ and $Z_k$ are defined as either the original endogenous explanatory variable and instrument (if they are binary) or indicators for whether these variables are above the $j$th and $k$th quartile, respectively, subject to a normalization that $Z_k$ is always associated with a positive estimated coefficient in the linear first stage. Sampling weights and clustered standard errors are used in line with the original papers.
\end{tablenotes}
\end{footnotesize}
\end{threeparttable}
\end{adjustwidth}
\end{table}

\begin{table}[!p]
\begin{adjustwidth}{-1in}{-1in}
\centering
\begin{threeparttable}
\caption{Detailed Results on Negative First Stages and First-Stage Heterogeneity\label{tab:apps_1_2}}
\begin{footnotesize}
\begin{tabular}{c>{\centering\arraybackslash}m{2.5cm} >{\centering\arraybackslash}m{2.5cm} >{\centering\arraybackslash}m{0.05cm} >{\centering\arraybackslash}m{1.375cm} >{\centering\arraybackslash}m{1.375cm} >{\centering\arraybackslash}m{0.05cm} >{\centering\arraybackslash}m{1.375cm} >{\centering\arraybackslash}m{1.375cm}}
\hline\hline
          & \multicolumn{2}{c}{\multirow{2}[0]{*}{Average Share of Negative First Stages}} & & \multicolumn{5}{c}{First-Stage Heterogeneity} \\
\cline{5-9}
          & \multicolumn{2}{c}{} &       & \multicolumn{2}{c}{Bonferroni $p$-Value} &       & \multicolumn{2}{c}{Share of Rejections} \\
\cline{2-3}
\cline{5-6}
\cline{8-9}
          & LPM   & Probit &       & LPM   & Probit &       & LPM   & Probit \\
\hline
     \cite{Acemoglu2008} & 0.546 & 0.554 &       & 0     & 0     &       & 1     & 1 \\
     \cite{Albouy2012} & 0.147 & 0.122 &       & 0.115 & 0     &       & 0     & 0.226 \\
     \cite{Alesina2011} & 0.145 & 0.159 &       & 0.001 & 0     &       & 0.064 & 0.011 \\
     \cite{Ananat2011} & 0.020  & 0.223 &       & 0.308 & 1     &       & 0     & 0 \\
     \cite{Autor2013} & 0.009 & 0.036 &       & 0     & 0     &       & 0.894 & 0.894 \\
     \cite{Bazzi2013} & 0.303 & 0.242 &       & 0     & 1     &       & 0.750  & 0 \\
     \cite{Becker2011} & 0.068 & 0.100   &       & 0     & 0     &       & 0.125 & 0.450 \\
     \cite{Bleakley2010} & 0.206 & 0.029 &       & 0     & 0     &       & 1     & 1 \\
     \cite{Brown2012} & 0.134 & 0.143 &       & 0     & 0     &       & 0.375 & 0.375 \\
     \cite{Chalfin2015} & 0.342 & N/A   &       & 0     & 0     &       & 1     & 1 \\
     \cite{ChodorowReich2012} & 0.123 & 0.105 &       & 1     & N/A   &       & 0     & N/A \\
     \cite{Chou2010} & 0.514 & 0.503 &       & 0     & 0     &       & 1     & 1 \\
     \cite{Collins2013} & 0.343 & 0.341 &       & 0     & 0     &       & 1     & 1 \\
     \cite{Decarolis2015} & 0.595 & 0.568 &       & 0     & 0     &       & 0.667 & 1 \\
     \cite{Dinkelman2011} & 0.306 & 0.271 &       & 0     & 0     &       & 0.158 & 1 \\
     \cite{Draca2011} & 0.250  & 0.269 &       & 0     & 0     &       & 1     & 1 \\
     \cite{Guryan2010} & 0.427 & N/A   &       & 0     & N/A   &       & 1     & N/A \\
     \cite{Hornung2014} & 0.116 & 0.151 &       & 0     & 0     &       & 0.909 & 0.889 \\
     \cite{Hunt2010} & 0.245 & 0.335 &       & 0     & 0     &       & 1     & 1 \\
     \cite{James2015} & 0.555 & 0.616 &       & 0     & 0     &       & 1     & 1 \\
     \cite{Kraay2014} & 0.435 & 0.439 &       & 0     & 0     &       & 1     & 1 \\
     \cite{Lipscomb2013} & 0     & 0     &       & 0.018 & 0.002 &       & 1     & 1 \\
     \cite{Moser2014} & 0.671 & 0.626 &       & 0     & N/A   &       & 1     & N/A \\
     \cite{Oreopoulos2006} & 0.463 & 0.434 &       & 0     & 0     &       & 0.941 & 0.875 \\
     \cite{Saiz2011} & 0.152 & 0.165 &       & 0     & N/A   &       & 1     & N/A \\
          &       &       &       &       &       &       &       &  \\
     Number of Regressions & 988   & 930   &       & 988   & 899   &       & 988   & 899 \\
\hline
\end{tabular}
\begin{tablenotes}[flushleft]
\item \textit{Notes:} ``Average Share of Negative First Stages'' reports the average fraction of observations for which $\hat{\e} \left[ D \mid Z=1, X=x \right] - \hat{\e} \left[ D \mid Z=0, X=x \right]$ is negative. ``First-Stage Heterogeneity'' reports results of Wald tests that the coefficients on the interaction terms in regressions of $D$ on $Z$, $X$, and $ZX$ are jointly equal to zero. ``Bonferroni $p$-Value'' reports the product of the smallest $p$-value associated with a given paper and the number of applicable regressions in that paper. ``Share of Rejections'' reports the fraction of applicable regressions associated with a given paper for which the corresponding Holm $p$-value is less than or equal to 0.05. $D$ and $Z$ are defined as either the original endogenous explanatory variable and instrument (if they are binary) or indicators for whether these variables are above their medians, subject to a normalization that $Z$ is always associated with a positive estimated coefficient in the linear first stage. Sampling weights and clustered standard errors are used in line with the original papers.
\end{tablenotes}
\end{footnotesize}
\end{threeparttable}
\end{adjustwidth}
\end{table}

\clearpage

\pagebreak

\section{Reanalysis of \cite{Stevenson2018}}
\label{app:stevenson}

\vspace{5cm}

\setcounter{table}{0}
\renewcommand{\thetable}{\ref{app:stevenson}.\arabic{table}}

\begin{table}[!h]
\begin{adjustwidth}{-1in}{-1in}
\centering
\begin{threeparttable}
\caption{Alternative Estimates of the Effects of Pretrial Detention on Conviction and Incarceration Length\label{tab:stevenson_extra}}
\begin{small}
\begin{tabular}{cc >{\centering\arraybackslash}m{1.7cm} >{\centering\arraybackslash}m{1.7cm} c >{\centering\arraybackslash}m{1.7cm} >{\centering\arraybackslash}m{1.7cm} c >{\centering\arraybackslash}m{1.7cm} >{\centering\arraybackslash}m{1.7cm}}
\hline\hline
          &       & \multicolumn{2}{c}{Specification \#1} &       & \multicolumn{2}{c}{Specification \#2} &       & \multicolumn{2}{c}{Specification \#3} \\
\hline
    A. Effects on Conviction &       & $\hat{\beta}$ & $\hat{\sigma}_{\hat{\beta}}$ &       & $\hat{\beta}$ & $\hat{\sigma}_{\hat{\beta}}$ &       & $\hat{\beta}$ & $\hat{\sigma}_{\hat{\beta}}$ \\
\hline
    MB2SLS &       & 0.1610*** & 0.0360 &       & 0.1860*** & 0.0326 &       & 0.0751*** & 0.0271 \\
    JIVE  &       & 0.2127 & 0.1969 &       & 0.3762 & 0.2783 &       & 0.3251 & 0.3863 \\
          &       &       &       &       &       &       &       &       &  \\
\hline
    B. Effects on Incarceration Length &       & $\hat{\beta}$ & $\hat{\sigma}_{\hat{\beta}}$ &       & $\hat{\beta}$ & $\hat{\sigma}_{\hat{\beta}}$ &       & $\hat{\beta}$ & $\hat{\sigma}_{\hat{\beta}}$ \\
\hline
    MB2SLS &       & 134*** & 47    &       & 140*** & 41    &       & 55    & 43 \\
    JIVE  &       & --313  & 280   &       & --225  & 359   &       & 1,279* & 732 \\
          &       &       &       &       &       &       &       &       &  \\
\hline
    Number of Groups &       & \multicolumn{2}{c}{431} &       & \multicolumn{2}{c}{563} &       & \multicolumn{2}{c}{981} \\
    Number of Observations &       & \multicolumn{2}{c}{327,560} &       & \multicolumn{2}{c}{325,915} &       & \multicolumn{2}{c}{319,573} \\
\hline
\end{tabular}
\end{small}
\begin{footnotesize}
\begin{tablenotes}[flushleft]
\item \textit{Notes:} The data are \cite{Stevenson2018}'s sample of 331,971 arrests in Philadelphia. The outcomes are conviction (Panel A) or incarceration length (Panel B), defined as the maximum days of an incarceration sentence. The treatment is pretrial detention. The instrument is whether a given case was heard by Judge C\@. Each specification is based on a division of the sample into a number of mutually exclusive groups, with a separate group for each combination of values of selected variables. Specification \#1 uses the offense type and race (Black, White, or other) of the defendant. Specification \#2 uses the offense type, race, and gender (male or female) of the defendant. Specification \#3 uses the offense type, race and gender of the defendant, and three time periods considered by \cite{Stevenson2018}. Groups with fewer than three observations in either $(G,Z)$ combination are dropped. MB2SLS and JIVE are based on the interacted specification and are described in Section \ref{sec:manyiv}.
\item *Statistically different from zero at the 10\% level; **at the 5\% level; ***at the 1\% level.
\end{tablenotes}
\end{footnotesize}
\end{threeparttable}
\end{adjustwidth}
\end{table}

\begin{table}[!p]
\begin{adjustwidth}{-1in}{-1in}
\centering
\begin{threeparttable}
\caption{Bootstrap $p$-Values for the Comparison of the Noninteracted and Interacted Specifications\label{tab:stevenson_boot1}}
\begin{small}
\begin{tabular}{cc >{\centering\arraybackslash}m{3.5cm} c >{\centering\arraybackslash}m{3.5cm} c >{\centering\arraybackslash}m{3.5cm}}
\hline\hline
          &       & Specification \#1 &       & Specification \#2 &       & Specification \#3 \\
\hline
    A. Effects on Conviction &       & \\
\hline
    2SLS  &       & 0.496 &       & 0.486 &       & 0.301 \\
    MB2SLS &       & 0.829 &       & 0.956 &       & 0.377 \\
    JIVE  &       & 0.818 &       & 0.119 &       & 0.195 \\
    UJIVE &       & 0.619 &       & 0.671 &       & 0.229 \\
          &       &       &       &       &       &  \\
\hline
    B. Effects on Incarceration Length &       & \\
\hline
    2SLS  &       & 0.051 &       & 0.019 &       & 0.030 \\
    MB2SLS &       & 0.049 &       & 0.021 &       & 0.017 \\
    JIVE  &       & 0.001 &       & 0.001 &       & 0.031 \\
    UJIVE &       & 0.032 &       & 0.016 &       & 0.016 \\
          &       &       &       &       &       &  \\
\hline
    Number of Groups &       & 431 &       & 563 &       & 981 \\
    Number of Observations &       & 327,560 &       & 325,915 &       & 319,573 \\
\hline
\end{tabular}
\end{small}
\begin{footnotesize}
\begin{tablenotes}[flushleft]
\item \textit{Notes:} This table revisits the estimates in Tables \ref{tab:stevenson_3} and \ref{tab:stevenson_extra}. The data are \cite{Stevenson2018}'s sample of 331,971 arrests in Philadelphia. The outcomes are conviction (Panel A) or incarceration length (Panel B), defined as the maximum days of an incarceration sentence. The treatment is pretrial detention. The instrument is whether a given case was heard by Judge C\@. Each specification is based on a division of the sample into a number of mutually exclusive groups, with a separate group for each combination of values of selected variables. Specification \#1 uses the offense type and race (Black, White, or other) of the defendant. Specification \#2 uses the offense type, race, and gender (male or female) of the defendant. Specification \#3 uses the offense type, race and gender of the defendant, and three time periods considered by \cite{Stevenson2018}. Groups with fewer than three observations in either $(G,Z)$ combination are dropped. Each $p$-value is calculated using a bootstrap test of the equality of the estimands in the noninteracted and interacted specifications (with 250 bootstrap replications). The noninteracted specification is estimated using IV\@. The estimators of the interacted specification, listed in the first column, are described in Section \ref{sec:manyiv}.
\end{tablenotes}
\end{footnotesize}
\end{threeparttable}
\end{adjustwidth}
\end{table}

\begin{table}[!p]
\begin{adjustwidth}{-1in}{-1in}
\centering
\begin{threeparttable}
\caption{Bootstrap $p$-Values for the Comparison of 2SLS and Other Estimators of the Interacted Specification\label{tab:stevenson_boot2}}
\begin{small}
\begin{tabular}{cc >{\centering\arraybackslash}m{3.5cm} c >{\centering\arraybackslash}m{3.5cm} c >{\centering\arraybackslash}m{3.5cm}}
\hline\hline
          &       & Specification \#1 &       & Specification \#2 &       & Specification \#3 \\
\hline
    A. Effects on Conviction &       & \\
\hline
    MB2SLS &       & 0.008 &       & 0.000 &       & 0.450 \\
    JIVE  &       & 0.018 &       & 0.000 &       & 0.000 \\
    UJIVE &       & 0.188 &       & 0.057 &       & 0.176 \\
          &       &       &       &       &       &  \\
\hline
    B. Effects on Incarceration Length &       & \\
\hline
    MB2SLS &       & 0.404 &       & 0.490 &       & 0.030 \\
    JIVE  &       & 0.000 &       & 0.000 &       & 0.000 \\
    UJIVE &       & 0.010 &       & 0.108 &       & 0.006 \\
          &       &       &       &       &       &  \\
\hline
    Number of Groups &       & 431 &       & 563 &       & 981 \\
    Number of Observations &       & 327,560 &       & 325,915 &       & 319,573 \\
\hline
\end{tabular}
\end{small}
\begin{footnotesize}
\begin{tablenotes}[flushleft]
\item \textit{Notes:} This table revisits the estimates in Tables \ref{tab:stevenson_3} and \ref{tab:stevenson_extra}. The data are \cite{Stevenson2018}'s sample of 331,971 arrests in Philadelphia. The outcomes are conviction (Panel A) or incarceration length (Panel B), defined as the maximum days of an incarceration sentence. The treatment is pretrial detention. The instrument is whether a given case was heard by Judge C\@. Each specification is based on a division of the sample into a number of mutually exclusive groups, with a separate group for each combination of values of selected variables. Specification \#1 uses the offense type and race (Black, White, or other) of the defendant. Specification \#2 uses the offense type, race, and gender (male or female) of the defendant. Specification \#3 uses the offense type, race and gender of the defendant, and three time periods considered by \cite{Stevenson2018}. Groups with fewer than three observations in either $(G,Z)$ combination are dropped. Each $p$-value is calculated using a bootstrap test of the equality of the probability limits of 2SLS and other estimators in the interacted specifications (with 250 bootstrap replications). Other estimators of the interacted specification, listed in the first column, are described in Section \ref{sec:manyiv}.
\end{tablenotes}
\end{footnotesize}
\end{threeparttable}
\end{adjustwidth}
\end{table}

\clearpage

\end{appendices}

\end{document}